\newtheorem{theorem}{Theorem}
\newtheorem{lemma}{Lemma}
\newtheorem{quest}{Question}
\newtheorem{prob}{Problem}
\newtheorem{coro}{Corollary}
\newtheorem{obs}{Observation}
\newtheorem{conj}{Conjecture}
\newtheorem{claim}{Claim}[lemma]
\numberwithin{equation}{section}
\newcommand{\Mod}[1]{\ (\mathrm{mod}\ #1)}
\begin{document}
	
	\title{$4K_1$-free graph with the cop number $3$}
	
	\author{$^1$Arnab Char, $^{2}$Paras Vinubhai Maniya, and $^2$Dinabandhu Pradhan\thanks{Corresponding author.} \thanks{Supported in part by Core Research Grant, ANRF, India (CRG/2023/003749)}   \\ \\
		$^{1}$National Institute of Informatics \\
		Tokyo, Japan\\
		\small \tt Email: arnabchar@gmail.com
		\\ \\
		$^{2}$Department of Mathematics \& Computing\\
		Indian Institute of Technology (ISM) \\
		Dhanbad, India \\
		\small \tt Email: maniyaparas9999@gmail.com \\
		\small \tt Email: dina@iitism.ac.in}

	\date{}
	\maketitle

	%\title{The complement of the Shrikhande graph: A $4K_1$-free graph with the cop number $3$.}
%	\title{$4K_1$-free graph with the cop number $3$}
%	%	
%	\author{$^1$Arnab Char, $^{2}$Paras Vinubhai Maniya, and $^2$Dinabandhu Pradhan\thanks{Corresponding author.}\thanks{Supported in part by Core Research Grant, ANRF, India (CRG/2023/003749) }  \\ \\
%						$^{1}$
%						National Institute of Informatics \\
%						Tokyo, Japan\\
%		\small \tt Email: arnabchar@gmail.com
%		\\ 
%		$^{2}$Department of Mathematics \& Computing\\
%		Indian Institute of Technology (ISM) \\
%		Dhanbad, India \\
%		\small \tt Email: maniyaparas9999@gmail.com \\
%		\small \tt Email: dina@iitism.ac.in}
%	
%	
%	\date{}
%	\maketitle

	\begin{abstract}
		The game of cops and robber is a two-player turn-based game played on a graph where the cops try to capture the robber. The cop number of a graph $G$, denoted by $c(G)$ is the minimum number of cops required to capture the robber. For a given class of graphs ${\cal F}$, let $c({\cal F}):=\sup\{c(F)|F\in {\cal F}\}$, and let Forb$({\cal F})$ denote the class of ${\cal F}$-free graphs.  We show that the complement of the Shrikhande graph is $(4K_1,C_{\ell}$)-free for any $\ell \geq 6$ and has the cop number~$3$. This provides a counterexample for the conjecture proposed by Sivaraman~\cite{Sivaraman19_1} which states that if $G$ is $C_{\ell}$-free for all $\ell\ge 6$, then $c(G)\le 2$. This also gives a negative answer to the question posed by Turcotte~\cite{Turcotte22} to check whether  $c($Forb$(pK_1))=p-2$ for $p\geq 4$.  Turcotte also posed the question to check whether $c($Forb$(pK_1+K_2))\leq p+1$ for $p\geq 3$. We prove that this result indeed holds. We also generalize this result for Forb$(pK_1+qK_2)$. Motivated by the results of Baird et al.~\cite{Baird14} and Turcotte and Yvon~\cite{Turcotte21}, we define the upper threshold degree and lower threshold degree for a particular class of graphs and show some computational advantage to find the cop number using these.
	\end{abstract}
	
	{\small \textbf{Keywords:} Cops and robber; cop number; the Shrikhande graph; $4K_1$-free graphs; $pK_1+qK_2$-free graphs.} \\
	\indent {\small \textbf{AMS subject classification:} 05C57}

	\section{Introduction}
	We consider finite, simple, undirected, and connected graphs in this paper. For a given graph $G$, let $V(G)$ and $E(G)$ denote the vertex set and the edge set of $G$, respectively. The complement of a graph $G$, denoted as $\overline{G}$, is the graph with the vertex set $V (\overline{G})=V(G)$ and the edge set $E(\overline{G})=\{uv\mid uv\notin E(G)\}$. Let $X\subseteq V(G)$. The subgraph of $G$ induced by the vertex set $X$ is denoted by $G[X]$, and the subgraph of $G$ induced by the vertex set $V(G)\setminus X$ is denoted by $G\setminus X$. The set $X$ is called a clique (independent set) if every pair of vertices of $X$  are adjacent (nonadjacent) in $G$.	For any two nonempty sets, say $X, Y\subseteq V(G)$, $X$ is complete (anticomplete) to $Y$ if every vertex in $X$ is adjacent (nonadjacent) to every vertex in $Y$. The open neighborhood of a vertex $x$ in $G$, denoted by $N_G(x)$, is the set of all adjacent vertices to $x$. The closed neighborhood of $x$ in $G$, denoted by $N_G[x]$, is the set $N_G(x)\cup \{x\}$. The \emph{degree} of a vertex $x$ in $G$ is defined as $|N_G(x)|$ and is denoted by $d_G(x)$. We denote $\delta(G)$ as the \emph{minimum degree} of $G$, where $\delta(G)=\min\{d_G(x) \mid x\in V(G)\}$.  We denote $\Delta(G)$ as the \emph{maximum degree} of $G$, where $\Delta(G)=\max\{d_G(x) \mid x\in V(G)\}$. We drop the subscript $G$ when there is no ambiguity.
	
	Let $C_\ell$, $K_\ell$, and $P_\ell$ denote the induced cycle, the induced complete graph, and the induced path on $\ell$ vertices, where $\ell \geq 1$. To avoid ambiguity, we  will write $K_i$ instead of $P_i$ for $i\in \{1,2\}$. If $G_1$ and $G_2$ are two vertex disjoint graphs, then the disjoint union of $G_1$ and $G_2$, denoted by $G_1+G_2$, is the graph with vertex set $V(G_1)\cup V(G_2)$ and edge set $E(G_1)\cup E(G_2)$. For a given graph $G$ and $\ell\in  \mathbb{N}$, we denote $\ell G$ as the disjoint union of $\ell$ copies of $G$. So $\ell K_1$ is the graph obtained from the disjoint union of $\ell$  copies of $K_1$. For a given graph $G$, we say $G$ is ${\cal F}$-free if for any $F\in {\cal F}$, $F$ is not an induced subgraph of $G$. For a given class of graphs ${\cal G}$, we say ${\cal G}$ is ${\cal F}$-free if for all $G\in {\cal G}$, $G$ is ${\cal F}$-free. If ${\cal F}=\{F\}$, we simply say $G$ is $F$-free, and ${\cal G}$ is $F$-free. We denote the class of ${\cal F}$-free graphs as Forb$({\cal F})$. 
	
	The game of cops and robber is a well studied problem in graph theory as can be seen in literature (see~\cite{Aigner84, Gupta23, Petr23, Turcotte21}). Quilliot~\cite{Quilliot78} introduced the game of cops and robber in 1978 and later, Nowakowski and Winkler~\cite{Nowakowski83} studied the above game independently. The game is a turn-based game and is played on a connected graph by two players. For a given graph $G$, player one controls the movement of a set of cops, say $D=\{D_1,D_2,\ldots,D_p\}$, and  player two controls the movement of a robber, say $R$, by putting them on vertices of $G$. A round of the game consists of two consecutive turns started with the turn of the player one and ended with the turn of the player two. The game is started by the player one by putting the cops in the vertices of $G$ and in the next turn, the player two puts the robber in a vertex of $G$. Clearly, this is round $1$ of the game. After that, the game is played on by alternative turns of the player one and the player two.  For $r\in \mathbb{N}$ and $i\in \{1,2,\ldots,p\}$, let $p_r^G(D_i)$  be the vertex, where the cop $D_i$ is placed at the round $r$ of the game in $G$. For $r\in \mathbb{N}$, let $p_r^G(R)$ be the vertex, where the robber $R$ is placed at the round $r$ in $G$. We drop the superscript $G$ when there is no ambiguity. For each $r\geq 2$, a move played by the player one in the round $r$ is legal if the player one puts $L$ in $p_{r-1}(L)$ or in some adjacent vertex of $p_{r-1}(L)$, where $L\in D$. For each $r\geq 2$, a move played by the player two in the round $r$ is legal if the player two puts $R$ in $p_{r-1}(R)$ or in some adjacent vertex of $p_{r-1}(R)$.  A cop \emph{captures} the robber if they are at the same vertex of the graph. In other words, the robber $R$ is captured by the set of cops $D$ in the round $r$ of the game if there exists a cop $L \in D$ such that $p_{r+1}(L)=p_{r}(R)$. This occurs only if, during the robber’s turn in the round $r$, there exists a cop $L \in D$ such that $p_{r}(R)\in N[p_{r}(L)]$, that is, the robber $R$ moved to a vertex in the closed neighborhood of a cop $L$.  In this case, we say $L$ captures $R$ in the round $r$. The player one wins the game if the robber is captured by the set of cops in a finite round of the game; otherwise the player two wins the game. Our goal is to find a winning strategy for the player one.
	
	\subsection{Existing results and problems}
	
	For a given graph $G$, the \emph{cop number} of $G$, denoted by $c(G)$~\cite{Aigner84}, is the minimum cardinality of the set of cops such that the robber is captured by that set of cops.  It is easy to verify that for a complete graph $K$, $c(K)=1$ and for a cycle $C_\ell$ with $\ell\geq 4$, $c(C_\ell)=2$. Researchers are always intrigued about the existence of a strategy to capture the robber in an arbitrary graph $G$. For a given class of graphs, say ${\cal G}$, we define  $c({\cal G}):= \sup\{c(G)| G\in {\cal G}\}$; see \cite{Das22}. Joret et al.~\cite{Joret10} showed that for a given graph $F$, $c($Forb$(F))$ is finite if and only if  every component of $F$ is a path. This clearly implies that for the class of claw-free graphs, say ${\cal G}$, $c({\cal G})$ is infinite. Joret et al.~\cite{Joret10} also showed that for  $t\geq 2$, $c($Forb$(P_t))\leq t-2$, and  later Sivaraman~\cite{Sivaraman19} showed that the robber can be captured in at most $t-1$ rounds. However, Chudnovsky et al.~\cite{Chudnovsky24} have improved the bound for $t=5$ and proved the following theorem.	
	
	\begin{theorem}[\cite{Chudnovsky24}]\label{p5freecopnum2}
		$c($\textup{Forb}$(P_5))= 2$.
	\end{theorem}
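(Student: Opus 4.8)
The lower bound $c(\mathrm{Forb}(P_5))\ge 2$ is immediate: the four-cycle $C_4$ has only four vertices, hence is $P_5$-free, yet it is not cop-win, so $c(C_4)=2$. Together with the bound $c(\mathrm{Forb}(P_5))\le 3$ already available from the result of Joret et al.\ quoted above, the entire content of the theorem is the inequality $c(\mathrm{Forb}(P_5))\le 2$, i.e.\ that \emph{every connected $P_5$-free graph $G$ admits a winning strategy for two cops}. One observation underlies everything: a connected $P_5$-free graph has diameter at most~$3$, since a shortest path on five vertices would be an induced $P_5$. In particular, for any vertex $v$ the breadth-first layers around $v$ are just $L_0=\{v\}$, $L_1=N(v)$, $L_2$, and $L_3$.

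The plan for the upper bound is to place the first cop $D_1$ on a carefully chosen vertex $v$ and never move it, so that after round~$1$ the robber is confined to $V(G)\setminus N[v]\subseteq L_2\cup L_3$; the art is to choose $v$ so that this region is rigid enough for one more cop to sweep it clean. Here I would use the structure theorem of Bacs\'o and Tuza: a connected $P_5$-free graph has a dominating clique or a dominating $P_3$. In the dominating-clique case take $v$ inside the dominating clique $K$. Then $L_3=\emptyset$, because every vertex of $G$ has a neighbour in $K\subseteq N[v]$ and so lies within distance $2$ of $v$; moreover every vertex of $L_2$ has a neighbour in the clique $K':=K\setminus\{v\}\subseteq L_1$. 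If $|K|\le 2$ we are done trivially, since placing both cops on the vertices of $K$ leaves the robber no safe vertex at all; so assume $|K|\ge 3$, and the robber is pinned inside $L_2$ with each of its hideouts attached to the clique $K'$.

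It remains to let the second cop $D_2$ finish inside $L_2$. The key structural claim, obtained by pushing $P_5$-freeness, is that $G$ restricted to $L_2$ together with its attachment to $K'$ is very close to being complete multipartite over $K'$: if $x,y\in L_2$ are non-adjacent with distinct attachment vertices $w_x,w_y\in K'$, then -- unless $w_x\sim y$ or $w_y\sim x$ -- one can find a fifth vertex (a suitable $L_1$- or $L_2$-neighbour of $x$ or $y$) turning $x\!-\!w_x\!-\!w_y\!-\!y$ into an induced $P_5$, and the handful of configurations where no such fifth vertex exists are themselves very restrictive. Consequently the robber's safe set inside $L_2$ shrinks strictly every time $D_2$ moves onto the attachment vertex of the robber's current position, so $D_2$ captures after finitely many rounds. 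The dominating-$P_3$ alternative $a\!-\!b\!-\!c$ is treated in the same spirit: $a\!-\!b\!-\!c$ is a shortest $a$--$c$ path, so $D_1$ can guard it in the sense of Aigner and Fromme, after which $P_5$-freeness again forces the robber's territory to collapse to something $D_2$ controls, with a few local configurations near $\{a,b,c\}$ checked by hand.

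The main obstacle is exactly this structural analysis of the confinement region. One cannot argue glibly that ``once $D_1$ is parked the robber lives in a cop-win graph, so $D_2$ mops up,'' because deleting a closed neighbourhood (or a dominating set) can enlarge the diameter and destroy dismantlability -- $P_5$-freeness of $G$ need not pass to $G\setminus N[v]$ in any useful way. The genuine work, and the place where the hypothesis is used for more than bounding the diameter, lies in showing that $P_5$-freeness nonetheless collapses the robber's region to a near-complete-multipartite structure over a clique, in carrying out the coordination of $D_1$ and $D_2$ during the finitely many rounds before capture, and in the separate -- and in my experience more delicate -- treatment of the dominating-$P_3$ case.
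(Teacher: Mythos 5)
This statement is not proved in the paper at all: it is quoted as a black-box result of Chudnovsky, Norin, Seymour and Turcotte \cite{Chudnovsky24}, so there is no internal proof to compare yours with; what can be assessed is whether your sketch would stand on its own, and it does not. Your lower bound via $C_4$ is fine, and the reduction via Bacs\'o--Tuza (dominating clique or dominating $P_3$, diameter at most $3$) is a legitimate starting point, but the two places where the theorem is actually hard are exactly the places you leave open. First, the structural claim that the robber's territory $L_2$ is ``very close to complete multipartite over $K'$'' is not established: from a non-edge $xy$ with distinct attachments $w_x,w_y\in K'$ you only get an induced $P_4$, and producing a fifth vertex extending it to an induced $P_5$ is precisely what need not exist; you concede there are exceptional configurations and that they are ``very restrictive,'' but they are never analyzed, and they carry the whole weight of the argument. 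Second, the pursuit claim that the safe set ``shrinks strictly every time $D_2$ moves onto the attachment vertex of the robber's current position'' has no justification -- a vertex of $L_2$ may have several attachment vertices, the robber may return to previously vacated vertices, and without a monotone potential or an explicit invariant this greedy chase can cycle forever. The dominating-$P_3$ case is even weaker: guarding the path \`a la Aigner--Fromme only forbids the robber from stepping onto the three path vertices, and since the path dominates $G$ this gives essentially no confinement; ``treated in the same spirit, checked by hand'' is not an argument.

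It is worth calibrating against the literature you are implicitly competing with: the bound $c(G)\le 3$ for $P_5$-free graphs is the easy part (it follows from Joret--Kami\'nski--Theis), the subcase of $2K_2$-free graphs already required a separate paper of Turcotte, and the full $P_5$-free statement was open for several years before the lengthy argument of \cite{Chudnovsky24}, which does considerably more than park a cop on a dominating clique and sweep. So your proposal should be read as a plausible opening move plus an honest list of the obstructions, not as a proof; to make it one you would need to actually classify the $P_5$-free obstructions to the multipartite structure over $K'$, exhibit a strictly decreasing quantity governing $D_2$'s chase, and give a genuine strategy in the dominating-$P_3$ case.
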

	
	Liu~\cite{Liu19} in an unpublished work has shown that if $G$ is $(P_{i_1}+P_{i_2}+\cdots +P_{i_k})$-free, then $c(G)\leq i_1+i_2+\cdots+i_k-2$, where $k\geq 2$ and $i_j\geq 3$ for some $j\in \{1,2,\ldots,k\}$. It is interesting to find $c($Forb$(p K_1))$,  $c($Forb$(qK_2))$, and $c($Forb$(pK_1+qK_2))$. Since the class of $2K_2$-free graphs is a subclass of $P_5$-free graphs, $c($Forb$(2K_2))\leq 2$. Moreover, the cop number of $C_4$ and $C_5$ is $2$. Hence $c($Forb$(2K_2))= 2$. 
	
	To proceed further, we need the following definition. A set $S\subseteq V(G)$ is a \emph{dominating set} of $G$ if every vertex in $V(G)\setminus S$ has a neighbor in $S$. The \emph{domination number}, denoted by $\gamma(G)$, is the minimum positive integer $k$ such that $G$ has a dominating set with cardinality $k$. Clearly, we have the following observation.
	
	\begin{figure}[t]
		\begin{center}
			\includegraphics[scale=0.47]{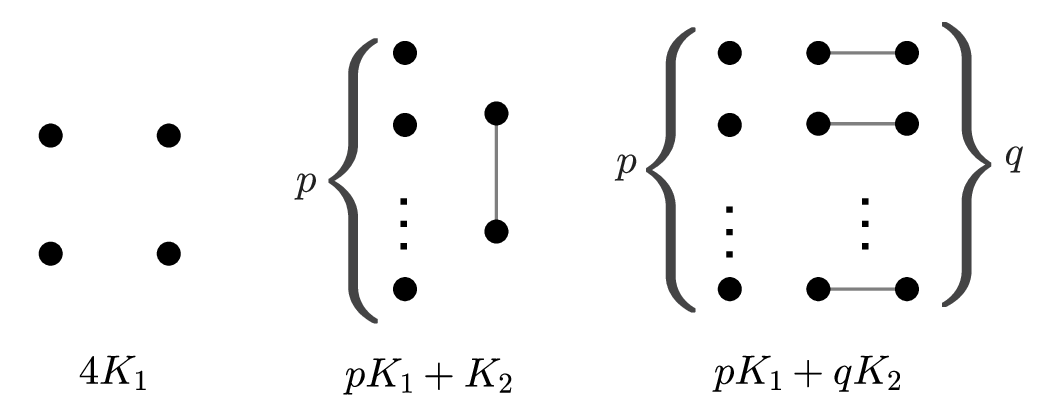}
			\caption{Some special graphs.}
			\label{Specialfig}
		\end{center}
	\end{figure}
	
	\begin{obs}\label{obs}
		For a graph $G$, $c(G)\le \gamma(G)$.
	\end{obs}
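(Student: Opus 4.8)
The plan is to exhibit a winning strategy for the player one that uses exactly $\gamma(G)$ cops, none of which ever move after the initial placement. Let $S=\{v_1,v_2,\ldots,v_k\}$ be a minimum dominating set of $G$, so that $k=\gamma(G)$. In round $1$ the player one places the cops $D_1,D_2,\ldots,D_k$ on $v_1,v_2,\ldots,v_k$ respectively, i.e. $p_1(D_i)=v_i$ for each $i\in\{1,2,\ldots,k\}$, and the strategy is simply to keep every cop fixed, so $p_r(D_i)=v_i$ for all $r\geq 1$ (this is a legal move since staying put is always allowed).

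Now consider the robber's placement in round $1$, and let $u=p_1(R)$. Since $S$ is a dominating set of $G$, either $u\in S$ or $u$ has a neighbor in $S$; in both cases there is an index $i$ with $u\in N_G[v_i]=N_G[p_1(D_i)]$. Hence, at the robber's turn in round $1$ we have $p_1(R)\in N[p_1(D_i)]$, and because $p_2(D_i)=v_i=p_1(R)$, the cop $D_i$ captures $R$ in round $1$ by the definition of capture given above. Therefore $k=\gamma(G)$ cops suffice to capture the robber, which gives $c(G)\le\gamma(G)$.

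There is essentially no obstacle here: the only point that requires a moment of care is matching the paper's formal definition of ``capture'' (the condition $p_{r+1}(L)=p_r(R)$, equivalently $p_r(R)\in N[p_r(L)]$ on the robber's turn) to the informal statement, and checking that the ``never move'' strategy is a legal sequence of moves for the player one. Both are immediate, so the observation follows.
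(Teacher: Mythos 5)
Your argument is correct and is exactly the standard reasoning the paper leaves implicit (the observation is stated there without proof): park the cops on a minimum dominating set, and the robber is dominated as soon as it is placed, hence captured at once. One small slip in the writeup: the capturing cop does move in round $2$ (from $v_i$ to $p_1(R)\in N[v_i]$), so the line $p_2(D_i)=v_i=p_1(R)$ should read $p_2(D_i)=p_1(R)$ (the "never move" description applies only to the non-capturing cops), but this does not affect the conclusion.
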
   
	By Observation~\ref{obs}, we have every $pK_1$-free graph $G$ satisfies $c(G)\leq p-1$. Turcotte~\cite{Turcotte22} showed that $c($Forb$(2K_1+K_2))= 3$. In the same paper, the author has raised several questions, two of those are stated below. 
	\begin{prob}\label{pk1prob}
		Is it true that $c($\textup{Forb}$(pK_1))\leq p-2$ for $p\geq 4?$
	\end{prob}
	
	\begin{prob}\label{pk1k2prob}
		Is it true that $c($\textup{Forb}$(pK_1+K_2))\leq p+1$ for $p\geq 3?$
	\end{prob}

\subsection{Our results}	
We give an example to show that the bound mentioned in Problem~\ref{pk1prob} is not true for $p=4$ and show that there exists a $4K_1$-free graph with the cop number $3$.  In particular, we show the following theorem. 
	
\begin{theorem}\label{4k13copeg}
The complement of the Shrikhande graph is $(4K_1,C_\ell)$-free for any $\ell\geq 6$ and has the cop number $3$.
\end{theorem}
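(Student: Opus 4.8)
The plan is to work with the following explicit model. Take $G$ to be the Cayley graph on $\mathbb{Z}_4\times\mathbb{Z}_4$ with connection set $S=\{\pm(1,0),\pm(0,1),\pm(1,1)\}$, so that $H:=\overline{G}$ is the Cayley graph on $\mathbb{Z}_4\times\mathbb{Z}_4$ with connection set $(\mathbb{Z}_4\times\mathbb{Z}_4)\setminus(S\cup\{(0,0)\})$; then $G$ is strongly regular with parameters $(16,6,2,2)$ and $H$ is strongly regular with parameters $(16,9,4,6)$, both vertex-transitive and (since $\mu=6>0$ for $H$) connected of diameter $2$. The one structural fact I would record at the outset is that $G$ is \emph{locally $C_6$}, i.e.\ $N_G(v)$ induces a $6$-cycle for every vertex $v$ (verify this for $v=(0,0)$ and transport by automorphisms). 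With this in hand, \emph{$H$ is $4K_1$-free}: an induced $4K_1$ in $H$ is a $K_4$ in $G$, and a $K_4$ through a vertex $a$ would contain a triangle inside $N_G(a)\cong C_6$, which is triangle-free. Consequently $\alpha(H)\le 3$; since a maximal independent set is dominating, $\gamma(H)\le\alpha(H)\le 3$, and hence $c(H)\le\gamma(H)\le 3$ by Observation~\ref{obs}.

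Next, \emph{$H$ is $C_\ell$-free for every $\ell\ge 6$.} For $\ell\ge 8$ this is immediate: $C_\ell$ has an independent set of size $\lfloor \ell/2\rfloor\ge 4$, hence an induced $4K_1$, so it is not an induced subgraph of $H$ by the previous paragraph. For $\ell=7$, an induced $C_7$ in $H$ is an induced $\overline{C_7}$ in $G$, but $\lambda_{\min}(\overline{C_7})=-1-2\cos(2\pi/7)<-2=\lambda_{\min}(G)$, contradicting eigenvalue interlacing (a direct finite check also suffices). It remains to rule out an induced $C_6$ in $H$, i.e.\ an induced triangular prism $\overline{C_6}=K_3\,\square\,K_2$ in $G$. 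If $G$ contained such a prism with triangles $\{a,b,c\}$, $\{a',b',c'\}$ and matching edges $aa',bb',cc'$, then $a'$ is a neighbour of $a$ adjacent to neither $b$ nor $c$, so $a'\in A:=N_G(a)\setminus(N_G[b]\cup N_G[c])$; since $\{b,c\}$ is an edge of the hexagon $N_G(a)$, the set $A$ consists exactly of the two endpoints of the opposite edge of that hexagon, so $|A|=2$, and symmetrically $b'$ and $c'$ range over two-element sets. There are therefore at most eight candidate triples $(a',b',c')$, and a direct check shows that none of them spans a triangle in $G$; contradiction.

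Finally, \emph{$c(H)\ge 3$.} I would give the robber a winning strategy against two cops that maintains the invariant $p_r(R)\notin N_H[p_r(D_1)]\cup N_H[p_r(D_2)]$ after each round $r$; a robber keeping this invariant is never captured. The invariant can be re-established every round provided the following static claim holds: for every vertex $v$ and all $c_1,c_2\in V(H)\setminus\{v\}$, one has $N_H[v]\not\subseteq N_H[c_1]\cup N_H[c_2]$. (The legality of cop moves imposes nothing more: the new cop positions $c_1,c_2$ lie within distance $1$ of the old ones, which were at distance $\ge 2$ from $p_{r-1}(R)$, so the only property forced on $c_1,c_2$ is $c_i\ne p_{r-1}(R)$; and the same claim, applied to a vertex distinct from both cops, gives the robber a safe first move.) Fixing $v=(0,0)$ by vertex-transitivity, one checks the static claim case by case. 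If $c_1,c_2$ are both non-adjacent to $v$ in $H$, then $|N_H[v]\cap N_H[c_i]|=\mu=6$ for each $i$, while $|N_H[v]\cap N_H[c_1]\cap N_H[c_2]|\ge 3$ (checked on the three orbits of pairs of vertices of the hexagon $N_G(v)$), so $|N_H[v]\cap(N_H[c_1]\cup N_H[c_2])|\le 9<10=|N_H[v]|$. If $c_1,c_2$ are both neighbours of $v$ in $H$, they are two vertices of the $4$-regular, $9$-vertex graph induced by $N_H(v)$, which always has a vertex adjacent to neither; and the mixed case is a similar short check. In every case the robber has a safe move, so $c(H)\ge 3$, and with the first paragraph $c(H)=3$. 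All steps are finite verifications in a $16$-vertex graph, so the difficulty is bookkeeping rather than ideas: the point requiring most care is confirming the static claim of this last step for \emph{every} pair $(c_1,c_2)$ (overlooking a single cop configuration would break the robber's strategy), with the prism-free subcase of the second paragraph a close second.
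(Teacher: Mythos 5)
Your proposal is correct, and it follows the same overall skeleton as the paper — exclude the forbidden subgraphs by working inside the Shrikhande graph $G$ itself, get $c\le 3$ from $\gamma\le\alpha\le 3$ and Observation~\ref{obs}, and get $c\ge 3$ from a robber strategy that keeps the robber outside both cops' closed neighborhoods — but several of the supporting lemmas are obtained by genuinely different means. The paper's single engine is Lemma~\ref{Shirstreg} (every pair of vertices of $G$ has exactly two common neighbors), proved by direct computation, from which it deduces $K_4$-freeness, $\overline{C_6}$- and $\overline{P_6}$-freeness (covering all $\ell\ge 6$ uniformly), and the existence of the robber's safe move in one stroke. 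You instead use the locally-$C_6$ property for $K_4$-freeness and for the prism ($\overline{C_6}$) exclusion, the independence number of $C_\ell$ for $\ell\ge 8$, eigenvalue interlacing for $\ell=7$, and strongly-regular parameter counting for the evasion step; this buys slicker arguments for the long cycles at the price of more scattered finite verifications. Two of those verifications need care: (i) your prism check of ``at most eight candidate triples'' is relative to a fixed triangle $\{a,b,c\}$, so it must be run over all triangles through a fixed vertex (or be justified by transitivity on triangles), and (ii) in the lower bound, the assertion that the $4$-regular $9$-vertex graph induced by $N_H(v)$ always has a vertex adjacent to neither of two given vertices is not a pure counting fact (the naive bound only gives $|N[c_1]\cup N[c_2]|\le 10>9$), so it requires either a check specific to this graph or an argument. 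Both points, as well as your triple-intersection bound $\ge 3$, follow in one line from the paper's Lemma~\ref{Shirstreg}: when the cops occupy distinct vertices, their two common neighbors in $G$ cannot both be $G$-adjacent to the robber's vertex, since that vertex would then be a third common neighbor of those two vertices; this is exactly how the paper produces the safe move, and isolating that single lemma would let you replace most of your case-by-case and spectral checks while keeping your cleaner treatment of $\ell\ge 8$.
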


Using Theorem~\ref{4k13copeg} and Observation~\ref{obs}, we have an immediate corollary.
\begin{coro}
	$c($\textup{Forb}$(4K_1))=3$.
\end{coro}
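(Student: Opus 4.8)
# Proof Proposal for Theorem~\ref{4k13copeg}

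The plan is to establish three separate facts about $H := \overline{S}$, where $S$ denotes the Shrikhande graph (a strongly regular graph with parameters $(16,6,2,2)$, so $H$ is strongly regular with parameters $(16,9,4,6)$): (i) $H$ is $4K_1$-free; (ii) $H$ is $C_\ell$-free for every $\ell \geq 6$; and (iii) $c(H) = 3$. For (i), since independent sets in $H$ correspond to cliques in $S$, it suffices to check that $S$ has no $K_4$; this follows because $S$ is strongly regular with $\lambda = 2$, so any edge lies in exactly two triangles, and a routine local argument on the neighborhood of a vertex (which in $S$ induces a $6$-cycle, or equivalently $2K_3$ depending on the vertex description) shows the clique number of $S$ is $3$. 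For (ii), I would argue that a long induced cycle in $H$ would force, in $S$, a long induced subgraph that is the complement of a cycle; using the strong regularity of $S$ (every two nonadjacent vertices have exactly $\mu = 2$ common neighbors, every two adjacent vertices have exactly $\lambda = 2$ common neighbors) one checks that $\overline{C_\ell}$ for $\ell \geq 6$ cannot be realized, e.g. because such a configuration would violate the common-neighbor counts, or more directly by a short case analysis using the toroidal/Cayley-graph description of the Shrikhande graph on $\mathbb{Z}_4 \times \mathbb{Z}_4$.

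The substantive part is (iii), and here I would split it into the upper bound $c(H) \leq 3$ and the lower bound $c(H) \geq 3$. The upper bound is the easy direction: by Observation~\ref{obs} it suffices to exhibit a dominating set of size $3$ in $H$, equivalently a set of three vertices in $S$ whose closed neighborhoods cover all $16$ vertices; since $H$ is $9$-regular, three closed neighborhoods have total size $30 > 16$, and using the explicit Cayley description one can just write down three vertices with the required covering property (for instance choosing them pairwise nonadjacent in $S$ so their neighborhoods overlap minimally). The lower bound $c(H) \geq 3$, i.e.\ showing that two cops cannot win on $H$, is the main obstacle. The standard tool is the characterization that two cops suffice if and only if the graph is \emph{$2$-cop-win}, and one must rule this out; the cleanest route is to show $H$ admits no ``$2$-dismantlable'' type retraction obstruction, but in practice I expect one must argue directly: for an arbitrary placement of two cops, describe an evasion strategy for the robber. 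Because $H$ is vertex-transitive and has diameter $2$, the robber's strategy should exploit that from any vertex $v$, the non-dominated region relative to the two cops' positions is always nonempty and the robber can always move into it or stay put while maintaining distance at least $2$ from both cops; the key local claim to verify is that for any two vertices $x, y$ of $H$ and any vertex $r \notin N[x] \cup N[y]$, and for any legal moves $x \to x'$, $y \to y'$, the robber has a legal move $r \to r'$ (possibly $r' = r$) with $r' \notin N[x'] \cup N[y']$.

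That local claim is where the explicit structure of the Shrikhande graph does the work: I would set up coordinates $V(H) = \mathbb{Z}_4 \times \mathbb{Z}_4$ with $S$ the Cayley graph on the connection set $\{\pm(1,0), \pm(0,1), \pm(1,1)\}$, so $H$ is the Cayley graph on the complementary connection set, reduce the claim to finitely many cases using vertex-transitivity (fix $r$ at the origin) and the action of the automorphism group (which acts transitively on ordered pairs at each of the two possible distances, by strong regularity), and then verify the finitely many remaining configurations — either by hand or by a short exhaustive check, which is legitimate to cite as a finite computation. The robber never being forced into $N[x'] \cup N[y']$ gives an infinite evasion, so $c(H) \geq 3$; combined with the upper bound this yields $c(H) = 3$, and together with (i) and (ii) completes the proof. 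I expect the bookkeeping in this last step to be the only genuinely delicate part; everything else is either a standard fact about strongly regular graphs or a direct appeal to Observation~\ref{obs}.
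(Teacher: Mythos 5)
Your plan follows the paper's route in outline: the same witness graph (the complement of the Shrikhande graph), $4K_1$-freeness obtained from $K_4$-freeness of the Shrikhande graph, $C_\ell$-freeness from common-neighbor counts, the upper bound $c(\overline{H})\le 3$ via a dominating set and Observation~\ref{obs}, and the lower bound via a robber strategy that maintains the invariant of staying outside both cops' closed neighborhoods. The main difference is how the crucial local claim for the lower bound is discharged. The paper needs no finite case check: by Lemma~\ref{Shirstreg} (any two distinct vertices of the Shrikhande graph have exactly two common neighbors), when the cops occupy distinct vertices the safe set is exactly a pair $\{a_i,b_i\}$, and if neither were reachable from the robber's current vertex then the two cop positions together with the robber's vertex would be three common neighbors of $a_i$ and $b_i$, a contradiction; the coincident-cop case is the count $6-2=4>0$. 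Your proposed symmetry-reduced exhaustive verification would also succeed (the claim is a true universally quantified statement over a $16$-vertex graph), but it is the one step you have not actually carried out, and the structural argument above makes it unnecessary.

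Two further points need tightening. First, your argument for $C_\ell$-freeness is too thin for $\ell=6$: excluding $\overline{C_6}$ from the Shrikhande graph does not follow from the common-neighbor counts alone; the paper combines the two-common-neighbor property with the fact that every vertex has degree exactly $6$ and with $K_4$-freeness (Lemma~\ref{ShriK4}) to reach a contradiction, whereas for $\ell\ge 7$ the exclusion is immediate because $\overline{P_6}$ contains two vertices with three common neighbors. Second, the statement under review is the class statement $c(\textup{Forb}(4K_1))=3$, so beyond your analysis of $\overline{H}$ you need one more (easy) line for the upper bound over the whole class: in any $4K_1$-free graph a maximal independent set has size at most $3$ and is dominating, so $\gamma(G)\le 3$ and hence $c(G)\le 3$ by Observation~\ref{obs}; your example then gives equality, exactly as in the paper's derivation of the corollary from Theorem~\ref{4k13copeg}.
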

	
Clearly, from Theorem~\ref{4k13copeg}, the complement of the Shrikhande graph is a counterexample of the following conjecture.

\begin{conj}[\cite{Sivaraman19_1}]\label{holecoj}
	If $G$ is $C_\ell$-free for all $\ell\ge 6$, then $c(G)\le 2$.
\end{conj}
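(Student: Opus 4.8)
Since the final statement is a conjecture that the paper overturns, the aim is not to prove it but to refute it, and my plan is to exhibit a single graph that satisfies the hypothesis while violating the conclusion. Concretely, it suffices to produce a graph $G$ that contains no induced cycle $C_\ell$ with $\ell \ge 6$ and yet has $c(G) \ge 3$; any such $G$ falsifies the implication. I would hunt for the witness among dense, highly symmetric graphs, since a vertex-transitive dense graph can force the cop number up (many vertices must be dominated at once) while still avoiding long induced cycles, and strongly regular graphs and their complements are the natural place to look.

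The witness is the complement of the Shrikhande graph, and the decisive step is simply to invoke Theorem~\ref{4k13copeg}, which I am free to assume. That theorem records exactly the two facts I need: this graph is $(4K_1,C_\ell)$-free for every $\ell \ge 6$, and its cop number equals $3$. The first fact is precisely the hypothesis of Conjecture~\ref{holecoj} (the $4K_1$-freeness plays no role here), while $c(G)=3$ contradicts the conclusion $c(G)\le 2$. Hence $G$ is a counterexample and the conjecture is false; the whole argument reduces to checking one implication, with hypothesis met and conclusion broken.

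The main obstacle is therefore not in this final implication at all but is quarantined inside Theorem~\ref{4k13copeg}, and specifically inside its lower bound $c(G)\ge 3$. This is exactly the point where any attempt to \emph{prove} the conjecture would collapse: a would-be prover would try to argue that a graph with no long induced cycles can always be handled by two cops, and the Shrikhande complement defeats this by allowing the robber to exploit the graph's symmetry to remain outside the combined closed neighborhood of any two cops. Establishing this robber-escape property — as opposed to the comparatively routine verification of hole-freeness — is where all the effort lies, and it is safely contained in the result I am permitted to assume.
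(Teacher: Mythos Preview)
Your proposal is correct and matches the paper's approach exactly: the paper simply notes that, by Theorem~\ref{4k13copeg}, the complement of the Shrikhande graph is $C_\ell$-free for all $\ell\ge 6$ yet has cop number~$3$, and hence is a counterexample to Conjecture~\ref{holecoj}. There is nothing to add.
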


Further, we  give a positive answer for Problem~\ref{pk1k2prob}. We also generalize this result for $\textup{Forb}(pK_1+qK_2)$. It is evident that for a given class of graphs, say ${\cal G}$, finding $c({\cal G})$ is nontrivial. For instance, $c($Forb$(K_{1,3}))$ is infinite. Moreover, finding $c({\cal G})$ computationally is way more challenging as it may require exhaustive search of all graphs in ${\cal G}$. Baird et al.~\cite{Baird14} showed that a graph $G$ of order $n$ with $\Delta(G)\ge n-6$ satisfies  $c(G)\leq 2$. Later Turcotte and Yvon~\cite{Turcotte21} showed that  a graph $G$ of order $n$ with $\Delta(G)\ge n-11$ satisfies $c(G)\leq 3$. In the same article, the authors motivated us to find $c({\cal G})$ computationally for a given class of graphs, say ${\cal G}$, by verifying less number of graphs using degree condition. This leads us to define the \emph{upper threshold degree} for a given class of graphs.  Let  ${\cal G}$ be a hereditary class of graphs  with $c({\cal G})\ge 2$ (for all $G\in {\cal G}$, $c(G)$ is finite). The \emph{upper threshold degree} of ${\cal G}$, denoted by $ut(\cal G)$, is defined as the positive integer $k$ such that for all $G\in {\cal G}$ having $\Delta(G)\ge |V(G)|- k$ satisfies $c(G)< c({\cal G})$ and there is a graph $G'\in {\cal G}$ having $\Delta(G')=|V(G')|- k-1$ satisfies $c(G')= c({\cal G})$. It is easy to verify that $ut($Forb$(P_5))=1$.  Now the natural question is that whether we can  reduce  the number of feasible graphs furthermore in ${\cal G}$ to obtain $c({\cal G})$ by using the minimum degree condition. This leads us to define the \emph{lower threshold degree} for ${\cal G}$.  We denote the lower threshold degree by $lt(\cal G)$.  To define the lower threshold degree, let ${\cal G'}:=\{G\in {\cal G}| c(G)=c({\cal G}) \}$.  Now we define  $lt(G):=\min \{\delta(G)|G\in {\cal G'}, |V(G)|\leq |V(G')|, \text { and } G\neq G' \text{ for all } G'\in {\cal G'} \}$.  Clearly, $lt($Forb$(P_5))=2$.  In this paper, we show the following.
	\begin{theorem}\label{upthcop}
		$ut($\textup{Forb}$(4K_1))=6$.
	\end{theorem}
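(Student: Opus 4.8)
The plan is to establish the claimed threshold in two halves, matching the definition of $ut(\cdot)$: an upper bound stating that every $4K_1$-free graph $G$ with $\Delta(G)\ge |V(G)|-6$ satisfies $c(G)\le 2$, and a matching tightness example, namely a $4K_1$-free graph $G'$ with $\Delta(G')=|V(G')|-7$ and $c(G')=3$. Since $c(\textup{Forb}(4K_1))=3$ by the Corollary following Theorem~\ref{4k13copeg}, and since $4K_1$-free graphs are trivially hereditary, this is exactly what the definition of upper threshold degree demands. For the tightness half I would verify that the complement of the Shrikhande graph is regular of degree $9$ on $16$ vertices, so $\Delta = 9 = 16-7$, which by Theorem~\ref{4k13copeg} already has cop number $3$; this gives the required witness $G'$ for the bound $ut\le 6$ being best possible, i.e.\ it shows $ut(\textup{Forb}(4K_1))\le 6$.

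The substantive direction is the inequality $ut(\textup{Forb}(4K_1))\ge 6$, i.e.\ showing that $\Delta(G)\ge n-6$ with $G$ being $4K_1$-free forces $c(G)\le 2$. Let $v$ be a vertex of maximum degree, so $|V(G)\setminus N[v]|\le 5$; write $W := V(G)\setminus N[v]$. The idea is to play with two cops, one of which is ``parked'' near $v$ to dominate $N[v]$, so that the robber is confined to $W$ (a set of at most $5$ vertices), and then argue the second cop catches a robber restricted to a graph on at most $5$ vertices — here one must be careful because the robber can still move through $N(v)$, so the argument is really that after the first cop threatens $v$ and its neighborhood, the robber's ``safe territory'' shrinks. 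The cleaner route is a shadow/retract-style argument on the small set $W$: since $G$ is $4K_1$-free, $W$ has independence number at most $3$, and moreover every vertex of $W$ has a neighbor in $N(v)$ unless $W\cup\{v\}$ would contain a large independent set; I would enumerate the possible structures of $G[W]$ (at most $5$ vertices, $\alpha\le 3$) and, for each, exhibit an explicit two-cop strategy, using the result of Baird et al.~\cite{Baird14} that $\Delta(G)\ge n-6 \Rightarrow c(G)\le 2$ for \emph{all} graphs as a black box — in fact this last observation makes the upper-bound direction immediate and reduces the whole theorem to the tightness computation.

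Given that \cite{Baird14} already yields $c(G)\le 2$ whenever $\Delta(G)\ge n-6$, the only real content is: (i) confirm that no $4K_1$-free graph $G'$ with $\Delta(G')=n-7$ and $c(G')\ge 3$ exists with fewer conflicts than, or that the Shrikhande complement realizes, the extremal value — concretely, exhibit one such $G'$, which Theorem~\ref{4k13copeg} supplies; and (ii) check the degree arithmetic $9=16-7$ for that graph. The main obstacle I anticipate is purely bookkeeping: making sure the definition of $ut$ is met on the nose, namely that \emph{there is} a graph at degree deficit $7$ with cop number equal to $c(\textup{Forb}(4K_1))=3$ (handled by the Shrikhande complement) and that \emph{every} graph at deficit $\le 6$ drops below $3$ (handled by \cite{Baird14}); no delicate new cop-strategy argument should be needed beyond invoking these two facts, so I would keep the write-up short and point explicitly to Theorem~\ref{4k13copeg} and \cite{Baird14}.
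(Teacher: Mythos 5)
Your tightness half is fine and matches the paper: the complement of the Shrikhande graph is $9$-regular on $16$ vertices, so $\Delta=16-7$, and by Theorem~\ref{4k13copeg} its cop number is $3$, which supplies the witness required by the definition of $ut$. The problem is the other half. You invoke ``the result of Baird et al.\ that $\Delta(G)\ge n-6\Rightarrow c(G)\le 2$ for all graphs'' as a black box and conclude that no new cop-strategy argument is needed. But the statement actually available (and the one the paper cites as Theorem~\ref{cn2char}) is only a dichotomy: if some vertex $u$ has degree at least $n-6$, then either $c(G)\le 2$ \emph{or} $G[V\setminus N[u]]$ is a $C_5$. So the exceptional case is not excluded for free, and handling it in the $4K_1$-free setting is precisely the substance of the paper's Section~\ref{UpperThreshold}: the authors partition $N(u)$ according to neighborhood traces on the $C_5$ (the sets $A_i,B_i,X_i,Y_i,Z_i,L,T$) and prove Lemmas~\ref{LZemp}, \ref{Yempcop} and \ref{upthcomp} via explicit two-cop strategies. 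Your proposal assumes away exactly this work.

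Your fallback sketch does not close the gap either. Filtering $G[W]$ by independence number does nothing, since $\alpha(C_5)=2\le 3$, so the hard case survives; and the naive strategy of parking one cop at $u$ fails outright in that case, because the robber is then confined to the $C_5$ but a single remaining cop cannot capture a robber on a $C_5$ (its cop number is $2$) unless it can exploit vertices of $N(u)$ with suitable attachments to the cycle. Whether such vertices exist, and how the two cops must coordinate (including rounds where neither cop sits at $u$), depends on which of the classes $A_i,\dots,Z_i,L,T$ are nonempty --- this is the case analysis you declare unnecessary. To repair the proposal you would need to either prove the exception-free strengthening of Theorem~\ref{cn2char} (which is not what is cited) or reproduce an argument of the type given in Lemmas~\ref{LZemp}--\ref{upthcomp} showing that a $4K_1$-free graph with $\Delta(G)=n-6$ whose non-neighborhood is a $C_5$ still satisfies $c(G)\le 2$.
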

	\begin{theorem}\label{lothcop}
		$lt($\textup{Forb}$(4K_1))\geq 3$.
	\end{theorem}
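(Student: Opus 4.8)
It suffices to establish the following: every $4K_1$-free graph $G$ with $\delta(G)\le 2$ satisfies $c(G)\le 2$. Granting this, any $G\in\textup{Forb}(4K_1)$ with $c(G)=c(\textup{Forb}(4K_1))=3$ must have $\delta(G)\ge 3$, so $lt(\textup{Forb}(4K_1))\ge 3$; this is also consistent with the extremal graph in Theorem~\ref{4k13copeg}, the complement of the Shrikhande graph, being $9$-regular. The plan is to prove the displayed claim by induction on $|V(G)|$. Fix a vertex $v$ with $d(v)\le 2$, and write $N(v)=\{a,b\}$ when $d(v)=2$.

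If $v$ has degree at most $1$, or $d(v)=2$ with $a\sim b$, then $v$ is a corner (dominated by its neighbour when $d(v)\le 1$, and by $a$ when $d(v)=2$ and $a\sim b$), hence $c(G)=c(G\setminus v)$ with $G\setminus v$ again $4K_1$-free; if $G\setminus v$ still has a vertex of degree at most $2$, we conclude by induction. In the residual situation—where deleting $v$ raises the minimum degree—one exploits a stability consequence of $4K_1$-freeness: if $v$ is a pendant at $u$, then every $3$-independent set of $G\setminus v$ contains $u$ (otherwise one could adjoin $v$ to obtain $4K_1$), so $(G\setminus v)\setminus u$ is $3K_1$-free, and similarly $(G\setminus v)\setminus\{u,u'\}$ is $3K_1$-free when $d(v)=2$ and $a\sim b$; one then finishes these subcases directly, using Observation~\ref{obs} together with the elementary fact that $3K_1$-free graphs have domination number at most $2$.

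The substantive case is $d(v)=2$ with $a\not\sim b$. Set $W:=\{x\in V(G)\setminus\{v,a,b\}\mid x\not\sim a\text{ and }x\not\sim b\}$. Applying $4K_1$-freeness to $\{a,b,x,x'\}$ for $x,x'\in W$ shows that $W$ is a clique, and by construction $W$ is anticomplete to $\{v,a,b\}$. If $W=\varnothing$ then $N[a]\cup N[b]=V(G)$, so $\gamma(G)\le 2$ and $c(G)\le 2$ by Observation~\ref{obs}. If $W\ne\varnothing$, pick $w\in W$; then $\{a,b,w\}$ is a maximum independent set, hence a dominating set of $G$, and further applications of $4K_1$-freeness (to quadruples using two of $a,b,w$) show that the set of vertices adjacent to exactly one of $a,b,w$ decomposes into three cliques. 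I would then exhibit an explicit two-cop strategy: start the cops at $a$ and $b$, so that the robber's placement is forced into the clique $W$; then have one cop walk along a shortest path from $a$ (or $b$) into $W$, arguing that as long as the other cop remains on $a$ (resp.\ $b$) every vertex through which the robber could leave $W$ lies in $N[a]\cup N[b]$, so the robber stays confined to $W$; once a cop reaches $W$ it dominates the whole clique $W$, and the capture follows.

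The crux—and what I expect to be the main obstacle—is making this last strategy airtight: one must pin down the adjacencies between $W$ and $N(a)\cup N(b)$ precisely enough to guarantee that while one cop is en route to $W$ the robber cannot slip out of $W$ into the neighbourhood just vacated by that cop, and to handle the robber's attempts to oscillate between $W$ and the two private sides. This is exactly where the clique structure of $W$, the clique structure of the private neighbourhoods, and the constraint $\alpha(G)\le 3$ must be combined in a careful case analysis; by comparison, the corner-reduction subtleties of the second paragraph are routine.
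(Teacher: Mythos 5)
Your outline rests on the unconditional claim that every connected $4K_1$-free graph with $\delta(G)\le 2$ has cop number at most $2$, and the decisive case of that claim --- $d(v)=2$ with $a\not\sim b$ and $W\neq\emptyset$ --- is exactly the part you leave open. The static structure you derive (the clique $W=V(G)\setminus(N[a]\cup N[b])$, the cliques of vertices private to one of $a,b,w$) is correct, but it does not by itself produce a two-cop capture: the moment the cop stationed at $a$ starts walking towards $W$, the robber may step from $W$ onto a vertex of $N(a)\setminus N(b)$ adjacent to $W$, and nothing in your sketch prevents it from shuttling between $W$ and such private neighbours ahead of the single mobile cop. You flag this yourself as ``the crux,'' so the main case is asserted rather than proved. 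The two ``residual'' corner subcases are also not finished as stated: knowing that $(G\setminus v)\setminus\{u\}$ (resp.\ $(G\setminus v)\setminus\{a,b\}$) is $3K_1$-free, hence has domination number at most $2$, only yields $\gamma(G)\le 3$ and thus $c(G)\le 3$ via Observation~\ref{obs}; to get $c(G)\le 2$ one still needs an explicit two-cop strategy, which the paper supplies for $\delta(G)=1$ through an induced $P_5$ argument (Lemma~\ref{4k1mindeg2}) but your sketch does not.

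Note also that you are attempting something strictly stronger than the theorem requires and than the paper proves. Since $lt$ is defined through minimum-order graphs attaining $c(\textup{Forb}(4K_1))=3$, it suffices to show that no such minimum-order graph has a vertex of degree at most $2$. The paper does this with a \emph{conditional} lemma (Lemma~\ref{4K1mindeg3}): if $u$ has degree $2$ and $c(G\setminus\{u\})\le 2$, then $c(G)\le 2$; its proof runs the known two-cop strategy on $G\setminus\{u\}$ and, if the robber ever enters $u$, escorts the cops along short paths $S_1,S_2$ (of length at most $3$ by $4K_1$-freeness) to the two neighbours of $u$, with a separate cut-vertex lemma (Lemma~\ref{4K1cutvertex}) guaranteeing the needed connectivity; the hypothesis $c(G\setminus\{u\})\le 2$ then comes for free from minimality, and Theorem~\ref{4k13copeg} pins $c(\textup{Forb}(4K_1))=3$. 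Your unconditional statement may be true, but it is not established by your outline, and without either it or this conditional route the bound $lt(\textup{Forb}(4K_1))\ge 3$ does not follow.
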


\subsection{Organization}
	
In Section~\ref{Shriprop}, we prove Theorem~\ref{4k13copeg}. We find the upper threshold degree of Forb$(4K_1)$ in Section~\ref{UpperThreshold}. In Section~\ref{LowerThreshold}, we find a lower bound on the lower threshold degree of Forb$(4K_1)$. We show computational results using the upper threshold degree and the lower threshold degree in Section~\ref{compute}.  Also, for $p\geq 2$ and $q\geq 2$, we prove that $c(\textup{Forb}(pK_1+K_2)) \le p+1$ and $c(\textup{Forb}(pK_1+qK_2))\le p+2q-2$ in Section~\ref{pk1qk2res}.  Finally,  we have proposed some open problems for future research in Section~\ref{future}.

\section{The cop number of the complement of the Shrikhande  graph}\label{Shriprop}
     Let $\mathbb{Z}_4=\{0,1,2,3\}$. In this section, all the arithmetic operations are in the group $A=(\mathbb{Z}_4\times \mathbb{Z}_4, +)$ unless it is mentioned otherwise. Suppose each element of $A$ is associated with a vertex and let $H$ be the graph defined on that vertex set, say $V(H)$, where the adjacency for any two tuples, say $(a,b)$ and $(c,d)$ in $V(H)$ holds if one of the following conditions is satisfied:  \\ [-20pt]
	\begin{itemize}
		\item  $a=c$ and $(b-d)\equiv  1, 3~\text{(mod } 4)$; \\ [-20pt]
		\item $(a-c)\equiv  1, 3~\text{(mod } 4)$ and $b=d$; \\ [-20pt]
		\item $(a-c)\equiv 1~\text{(mod } 4)$ and $(b-d)\equiv 1~\text{(mod } 4)$, and\\ [-20pt]
		\item $(a-c)\equiv 3~\text{(mod } 4)$ and $(b-d)\equiv 3~\text{(mod } 4)$. \\ [-20pt]
	\end{itemize}
	
		\begin{figure}[t]
			\begin{center}
			\includegraphics[width=6.6cm]{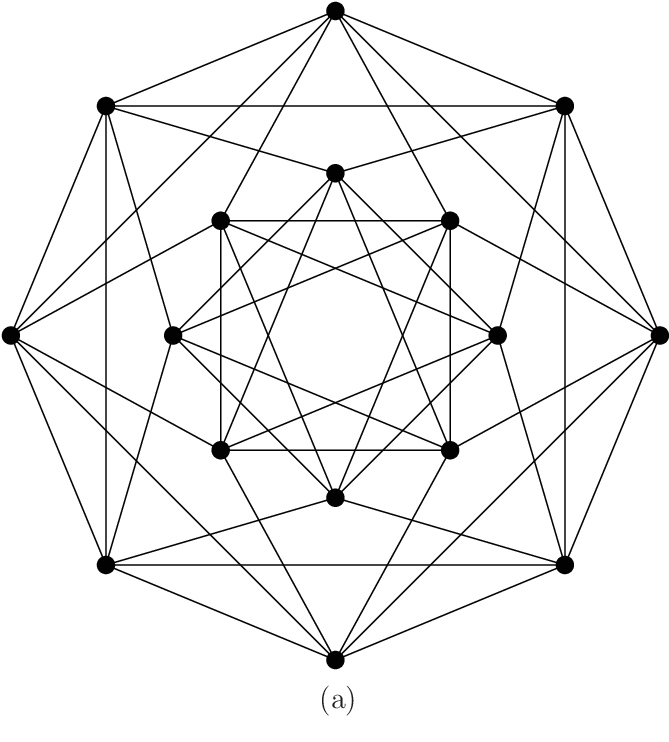}
			\hspace*{1.3cm}
			\includegraphics[width=6.6cm]{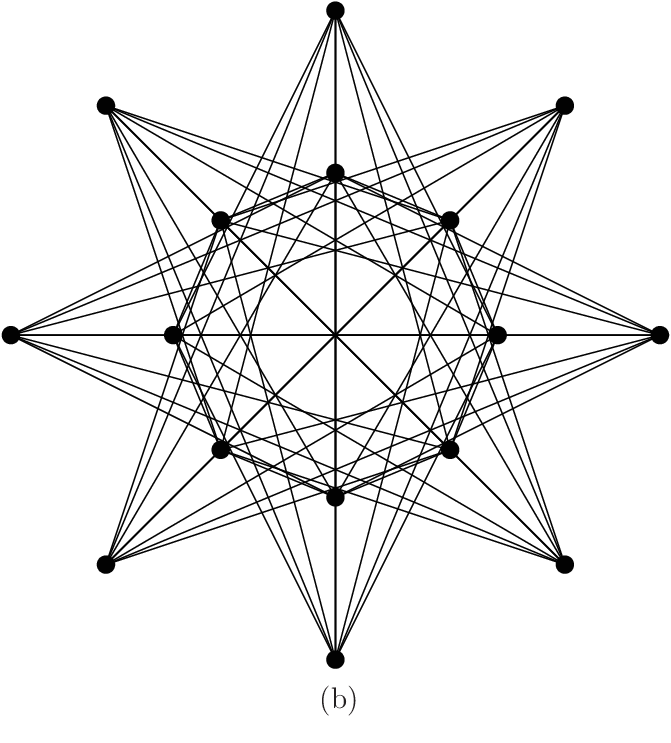}
			\caption{(a) The Shrikhande graph and (b) The complement of the Shrikhande graph\protect\footnotemark.}
			\label{ShriandShricom}
		\end{center}
		\end{figure}
		\footnotetext{Figures are generated through Sagemath.}
	
	The graph $H$ is also known as the Shrikhande graph; see Figure~\ref{ShriandShricom}(a). Although it is well known fact that  for each vertex $u\in V(H)$, $N(u)$ induces $C_6$ \cite{Peeters97} in $H$ and every pair of vertices of $H$ has exactly two common neighbors but for the completeness of our result, we will prove that $H$ is $(K_4,\overline{C_\ell})$-free, for all $\ell\ge 6$ and every pair of vertices of $H$ has exactly two common neighbors.
	\begin{lemma}\label{ShriK4}
		$H$ is $K_4$-free.
	\end{lemma}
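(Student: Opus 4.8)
The plan is to argue directly from the adjacency rules of the Shrikhande graph $H$ defined on $A=\mathbb{Z}_4\times\mathbb{Z}_4$. Suppose, for contradiction, that $\{w,x,y,z\}\subseteq V(H)$ induces a $K_4$. Since $H$ is vertex-transitive (the group $A$ acts on $V(H)$ by translation, and one checks this action preserves adjacency because the defining conditions depend only on the differences $a-c$ and $b-d$), I may translate so that $w=(0,0)$. Thus $x,y,z$ all lie in $N((0,0))$, and moreover $\{x,y,z\}$ must itself be a triangle. So it suffices to show that $N((0,0))$ contains no triangle; equivalently, that the subgraph of $H$ induced on the neighborhood of $(0,0)$ is triangle-free. (If one wishes to avoid quoting the fact that $N(u)$ induces $C_6$, one just needs triangle-freeness of this induced subgraph, which is weaker.)

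The key step is therefore to enumerate $N((0,0))$ and examine adjacencies inside it. From the four bullet conditions with $(c,d)=(0,0)$, the neighbors of $(0,0)$ are exactly the six tuples
\[
(0,1),\ (0,3),\ (1,0),\ (3,0),\ (1,1),\ (3,3).
\]
Now I would check, for each of the $\binom{6}{2}=15$ pairs among these, whether the pair is adjacent, using the same four rules applied to their difference; this is a short finite verification. The adjacent pairs turn out to be precisely $\{(0,1),(1,1)\}$, $\{(1,0),(1,1)\}$, $\{(0,3),(3,3)\}$, $\{(3,0),(3,3)\}$, $\{(0,1),(0,3)\}$, $\{(1,0),(3,0)\}$ — six edges forming the $6$-cycle $(0,1)\,(1,1)\,(1,0)\,(3,0)\,(3,3)\,(0,3)\,(0,1)$, which is indeed $C_6$ and in particular has no triangle. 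Hence $\{x,y,z\}$ cannot be a triangle, a contradiction, and $H$ is $K_4$-free.

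The only mild obstacle is bookkeeping: one must be careful that differences are reduced mod $4$ in \emph{both} coordinates and that the two "diagonal" conditions (both differences $\equiv 1$, or both $\equiv 3$) are applied in the correct orientation — note $(a-c,b-d)\equiv(1,3)$ or $(3,1)$ does \emph{not} give an edge, which is exactly what breaks every would-be triangle in the neighborhood. Once the adjacency table on $N((0,0))$ is written out, the conclusion is immediate. I expect the write-up to consist of: (i) a one-line justification of vertex-transitivity, (ii) the list of the six neighbors, (iii) the (at most) case analysis of which pairs among them are adjacent, concluding triangle-freeness, and (iv) the contradiction with the assumed $K_4$.
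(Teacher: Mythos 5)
Your overall strategy is sound and close in spirit to the paper's, though packaged differently: the paper does not compute the full neighborhood cycle, but instead shows (via the two-common-neighbour property used implicitly through an explicit computation of $N(x_0)\cap N(x_1)$) that no ``axis'' neighbour $(a\pm1,b)$, $(a,b\pm1)$ of a clique vertex can lie in the clique, which forces the remaining three clique vertices into the two-element set $\{(a+1,b+1),(a+3,b+3)\}$, a contradiction. Your reduction by translation to $w=(0,0)$ is legitimate (adjacency depends only on coordinate differences), and reducing $K_4$-freeness to triangle-freeness of the induced subgraph on $N((0,0))$ is a perfectly good, arguably cleaner, finite verification.

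However, the finite verification as you state it is wrong in two places, which is exactly the bookkeeping danger you warned yourself about. The pairs $\{(0,1),(0,3)\}$ and $\{(1,0),(3,0)\}$ are \emph{not} edges of $H$: their differences are $(0,2)$ and $(2,0)$ modulo $4$, and $2$ is not among the allowed residues. Conversely, you omit the two genuine edges $\{(0,1),(3,0)\}$ and $\{(0,3),(1,0)\}$, whose differences are $(1,1)$ and $(3,3)$ respectively. The correct induced subgraph on $N((0,0))$ is the $6$-cycle $(0,1)\,(1,1)\,(1,0)\,(0,3)\,(3,3)\,(3,0)\,(0,1)$, not the cycle you list. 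Since the corrected graph is still a $C_6$ and in particular triangle-free, your argument survives once the adjacency table is fixed, so this is a repairable computational slip rather than a flaw in the method; but as written the central enumeration -- the one step carrying the whole proof -- asserts false adjacencies and must be redone carefully before the proof can be accepted.
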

	\begin{proof}
		Suppose $H$ contains a $K_4$ induced by the vertex set, say $K=\{x_0,x_1, x_2,x_3\}$. Since $K\subseteq V(H)$, note that each  $x_i\in K$ is of the form $(p,q)$ for $p,q\in \mathbb{Z}_4$. Next we claim the following:
		
		\begin{claim}\label{Shiraba+1bnotinK}
			If  $(a,b)\in K$, then $(a+1,b), (a,b+1), (a+3,b),(a,b+3)\notin K$. 
		\end{claim}
		\begin{proof}[Proof of Claim~\ref{Shiraba+1bnotinK}] Suppose that $(a,b)\in K$. Without loss of generality, we may assume that $x_0=(a,b)$. Then $N((a,b))=\{(a+1,b),(a+3,b), (a,b+1), (a,b+3),(a+1,b+1), (a+3,b+3)\}$. We now prove that if any of $(a+1,b), (a,b+1), (a+3,b)$, and $(a,b+3)$ belongs to $K$, then we get a contradiction. For the sake of contradiction, assume that $(a+1,b)\in K$. Without loss of generality, we may assume that $x_1=(a+1,b)$. Since $N(x_1)=N((a+1,b))=\{(a,b),(a+2,b),(a+1,b+1),(a+1,b+3),(a+2,b+1),(a,b+3)\}$, we have $N(x_0)\cap N(x_1)=\{(a+1,b+1),(a,b+3)\}$. Since $K=\{x_0,x_1,x_2,x_3\}$ induces a $K_4$, we have $x_2,x_3\in  N(x_0)\cap N(x_1)=\{(a+1,b+1),(a,b+3)\}$. Then either $x_2-x_3=(1,2)$ or $x_3-x_2=(3,2)$, a contradiction to the fact that $x_2$ is adjacent to $x_3$. This concludes the proof of Claim~\ref{Shiraba+1bnotinK}.
		\end{proof} 
        Now suppose $x_0=(a,b)$, where $(a,b)\in \mathbb{Z}_4\times \mathbb{Z}_4$. Then by Claim~\ref{Shiraba+1bnotinK}, $x_1,x_2,x_3\notin \{(a+1,b),(a,b+1),(a+3,b),(a,b+3)\}$. Then since $x_0x_1,x_0x_2,x_0x_3\in E(H)$, we have $x_1,x_2,x_3\in \{(a+1,b+1),(a+3,b+3)\}$. So $x_1,x_2$, and $x_3$ cannot be distinct contradicting to the fact that $K$ is a clique of size $4$. This concludes the proof of Lemma~\ref{ShriK4}.
\end{proof}

%		For the sake of the contradiction, assume that $(a+1,b)\in K$. Without loss of generality, we may assume that  $x_1=(a+1,b)$. Since $x_0x_2, x_1x_3\in E(H)$, $x_2-x_0$ and $x_3-x_1$ are not equal to $(2,r)$ or \dred{$(r,2)$} for some $r\in \{0,1,2,3\}$. If $x_2-x_0$ and $x_3-x_1$ are not equal to $(2,r)$ for some $r\in \{0,1,2,3\}$, then we have $x_2\neq (a+2,b+r)$ and $x_3\neq (a+3,b+r)$. Since $x_0x_3\in E(H)$, $x_3\in \{(a+1,b+1),(a,b+1),(a,b+3)\}$. If $x_3=(a,b+1)$, then since $x_1x_3\in E(G)$, we have $x_1-x_3=(a+1,b)-(a,b+1)=(1,3)$, contradiction; so $x_3\neq (a,b+1)$. Also since $x_1x_2\in E(G)$, $x_2\in\{(a+1,b+1),(a+1,b+3),(a,b+3)\}$. If $x_2=(a+1,b+3)$, then since $x_0x_2\in E(H)$, we have $x_2-x_0=(1,3)$, a contradiction; so $x_2\neq (a+1,b+3)$. So $x_2,x_3\in \{(a+1,b+1),(a,b+3)\}$. Then either $x_2-x_3=(1,2)$ or $x_3-x_2=(3,2)$, a contradiction to the fact that $x_2$ is adjacent to $x_3$. This concludes the proof of \cref{Shiraba+1bnotinK}.  $\qed$ 

	%\begin{theorem}
	%	For any three vertices, say $u$, $v$ and $w$ in $H$, there is a vertex in $p\in V(H)$ such that $up,vp\in E(H)$ and $wp\notin E(H)$.
	%\end{theorem}
	
	\begin{lemma}\label{Shirstreg}
		For any two distinct vertices $u,v\in V(H)$, $|N(u)\cap N(v)|=2$. 
	\end{lemma}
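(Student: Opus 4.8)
The plan is to exploit the vertex-transitivity of $H$ together with the explicit description of neighborhoods coming from the group $A=(\mathbb{Z}_4\times\mathbb{Z}_4,+)$. First I would observe that for any $g\in A$, the translation map $x\mapsto x+g$ is an automorphism of $H$, since all four adjacency conditions depend only on the differences $(a-c,b-d)$. Hence it suffices to fix $u=(0,0)$ and show that $|N((0,0))\cap N(v)|=2$ for every $v\neq(0,0)$; a second reduction via the automorphism $(a,b)\mapsto(b,a)$ (which also preserves adjacency, swapping the third and fourth conditions only when combined with negation, so one should check this carefully or instead use $(a,b)\mapsto(-a,-b)$ and $(a,b)\mapsto(b,a)$) cuts down the number of cases further.

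Next I would record $N((0,0))=\{(1,0),(3,0),(0,1),(0,3),(1,1),(3,3)\}$, exactly as computed in the proof of Lemma~\ref{ShriK4}. The vertex $v$ ranges over the $15$ nonzero elements of $A$; up to the symmetries above these fall into a handful of orbits — roughly the ``neighbor'' type (e.g. $v=(1,0)$), the ``diagonal non-neighbor'' type (e.g. $v=(1,3)$ or $v=(2,2)$), and the ``axis'' type (e.g. $v=(2,0)$) — and for each representative one simply writes down $N(v)$ (again by translating the list for $N((0,0))$ by $v$) and intersects it with $N((0,0))$. Each intersection is a short finite check; for instance $N((1,0))=\{(0,0),(2,0),(1,1),(1,3),(2,1),(0,3)\}$, whose intersection with $N((0,0))$ is $\{(1,1),(0,3)\}$, of size $2$, matching the computation already done in Claim~\ref{Shiraba+1bnotinK}.

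The one genuinely forward-looking point is bookkeeping rather than mathematics: I must make sure the orbit representatives I pick actually cover all $15$ nonzero $v$, and that the symmetry group I invoke ($\langle x\mapsto x+g\rangle$ for translations has already been used to fix $u$; the point stabilizer of $(0,0)$ is what acts on the choices of $v$) is correctly identified — the stabilizer of $(0,0)$ includes $(a,b)\mapsto(-a,-b)$, $(a,b)\mapsto(b,a)$, and possibly more, and I want just enough of it to keep the case list short without claiming symmetries that fail. So the main obstacle is verifying that the chosen automorphisms are genuine automorphisms of $H$ (i.e. that the asymmetric-looking third and fourth adjacency conditions, $(a-c,b-d)\equiv(1,1)$ and $\equiv(3,3)$, are respected) and that the resulting orbit decomposition of $A\setminus\{(0,0)\}$ is complete; once that is pinned down, the remaining arithmetic is routine and mechanical.
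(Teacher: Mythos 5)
Your proposal is correct and takes essentially the same route as the paper: the paper's proof also reduces to the difference $v-u$ (working with a general base point $(a,b)$, i.e.\ using translation invariance implicitly), checks six representative differences by writing out $N(u)$ and $N(v)$ explicitly, and disposes of the remaining cases by exactly the symmetries you name. Your one hedge resolves positively: $(a,b)\mapsto(b,a)$ fixes the conditions $(a-c,b-d)\equiv(1,1)$ and $\equiv(3,3)$ individually while $(a,b)\mapsto(-a,-b)$ swaps them, so both maps are automorphisms and the orbit representatives $(1,0),(1,1),(2,0),(1,2),(1,3),(2,2)$ indeed cover all fifteen nonzero differences.
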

	\begin{proof}
		Let $u=(a,b)$ be an arbitrary vertex of $H$. Then by the definition of the graph $H$, $N((a,b))=\{(a,b+1),(a+1,b),(a+1,b+1),(a,b+3),(a+3,b),(a+3,b+3)\}$.
		Now we have following:
		\begin{enumerate}[label=(\roman*)]
			\item Suppose that $v-u=(1,0)$. Then $v=(a+1,b)$ and thus $N((a+1,b))=\{(a+1,b+1),(a+2,b),(a+2,b+1),(a+1,b+3),(a,b),(a,b+3)\}$. Hence  $|N(u)\cap N(v)|=|\{(a+1,b+1),(a,b+3)\}|=2$. Similarly, we can conclude whenever $v-u=(3,0)$. Likewise, we can conclude whenever $v-u\in \{(0,1),(0,3)\}$.
			
			\item Next suppose that $v-u=(1,1)$. Then $v=(a+1,b+1)$ and thus $N((a+1,b+1))=\{(a+1,b+2),(a+2,b+1),(a+2,b+2),(a+1,b),(a,b+1),(a,b)\}$. Hence $|N(u)\cap N(v)|=|\{(a+1,b),(a,b+1)\}|=2$. Similarly, we can conclude whenever $v-u=(3,3)$.
			
			\item Next suppose that $v-u=(0,2)$. Then $v=(a,b+2)$. Then since $N((a,b+2))=\{(a,b+3),(a+1,b+2),(a+1,b+3),(a,b+1),(a+3,b+2),(a+3,b+1)\}$, we have $|N(u)\cap N(v)|=|\{(a,b+3),(a,b+1)\}|=2$. Likewise, we can conclude whenever $v-u=(2,0)$.
			
			\item Next suppose that $v-u=(1,2)$. Then $v=(a+1,b+2)$. Then since $N((a+1,b+2))=\{(a+1,b+3),(a+2,b+2),(a+2,b+3),(a+1,b+1),(a,b+2),(a,b+1)\}$, we have $|N(u)\cap N(v)|=|\{(a+1,b+1),(a,b+1)\}|=2$. Similarly, we can conclude whenever $v-u=(3,2)$. Likewise, we can conclude whenever $v-u\in \{(2,1),(2,3)\}$.
			
			\item Next suppose that $v-u=(1,3)$. Then $v=(a+1,b+3)$. Then since $N((a+1,b+3))=\{(a+1,b),(a+2,b+3),(a+2,b),(a+1,b+2),(a,b+3),(a,b+2)\}$, we have $|N(u)\cap N(v)|=|\{(a+1,b),(a,b+3)\}|=2$. Similarly, we can conclude whenever $v-u=(3,1)$.
			
			\item Finally, we may assume that $v-u=(2,2)$. Then $v=(a+2,b+2)$. Then since $N((a+2,b+2))=\{(a+2,b+3),(a+3,b+2),(a+3,b+3),(a+2,b+1),(a+1,b+2),(a+1,b+1)\}$, we have $|N(u)\cap N(v)|=|\{(a+3,b+3),(a+1,b+1)\}|=2$.
		\end{enumerate}
		Therefore, by (i)-(vi), for any two vertices $u,v\in V(H)$, $|N(u)\cap N(v)|=2$.
	\end{proof}
	\begin{figure}[t]
			\begin{center}
			\includegraphics[scale=.21]{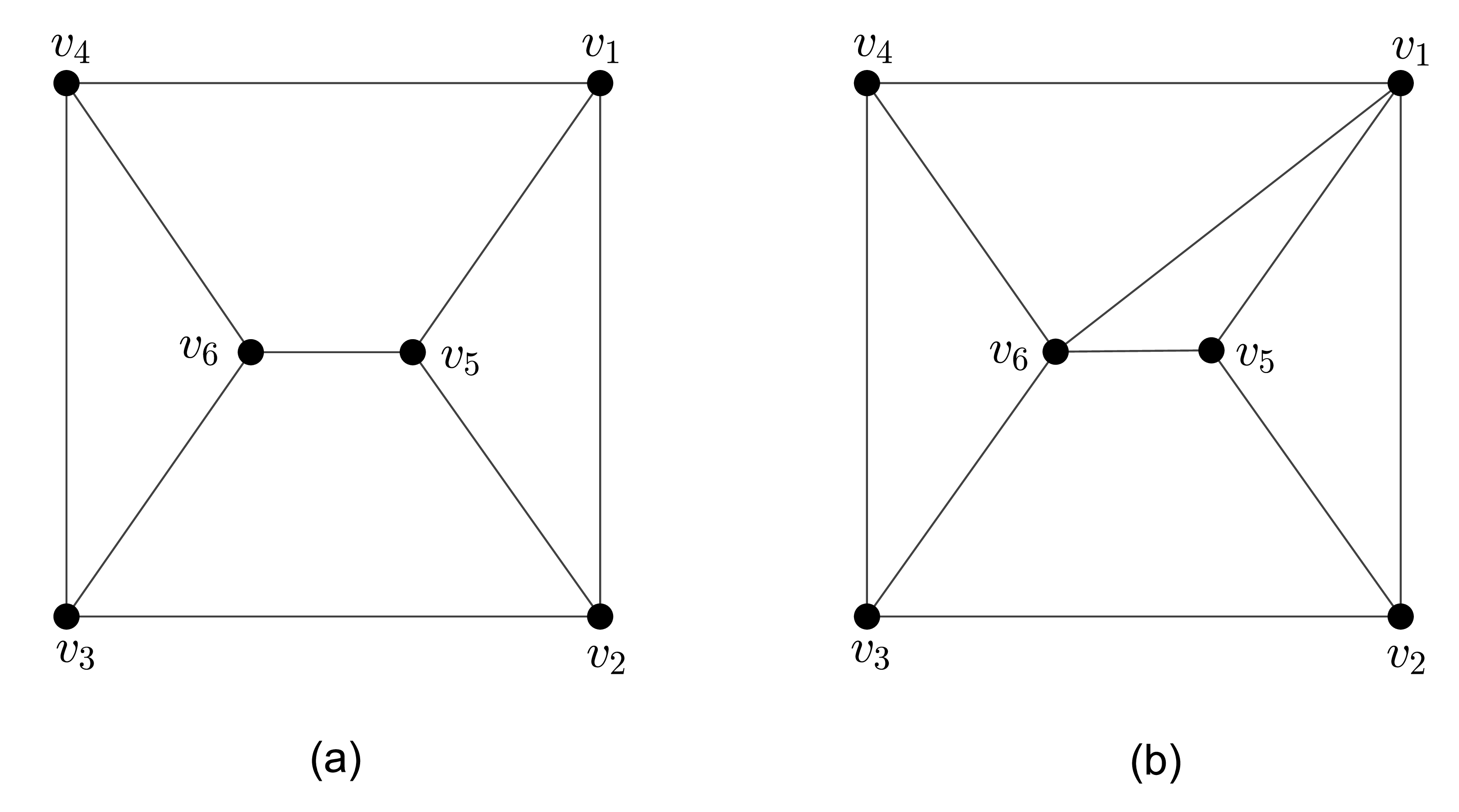}
			\caption{(a)~$\overline{C_6}$   and (b)~$\overline{P_6}$}
			\label{complmentp6c6}
				\end{center}
		\end{figure}
		%\footnotetext{Figures are generated through Sagemath.}

\begin{lemma}\label{clfree}
	$H$ is $\overline{C_\ell}$-free, for all $\ell\ge 6$. 
\end{lemma}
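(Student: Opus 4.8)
The plan is to argue by the value of $\ell$. The maximum independent set of $C_\ell$ has $\lfloor\ell/2\rfloor$ vertices, so the clique number of $\overline{C_\ell}$ equals $\lfloor\ell/2\rfloor$; hence $\overline{C_\ell}$ contains an induced $K_4$ for every $\ell\ge 8$, and Lemma~\ref{ShriK4} immediately gives that $H$ is $\overline{C_\ell}$-free in that range. Only the finite cases $\ell=6$ (where $\overline{C_6}$ is the triangular prism) and $\ell=7$ remain. For both I would use two facts drawn from the previous lemmas. First, for every $u\in V(H)$ the graph $H[N(u)]$ is a $6$-cycle: by Lemma~\ref{Shirstreg} each vertex of $N(u)$ has exactly two neighbours inside $N(u)$, so $H[N(u)]$ is $2$-regular on six vertices, and it is not $2C_3$ since a triangle in $N(u)$ would give a $K_4$ with $u$, contradicting Lemma~\ref{ShriK4}. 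Second, $H$ is a Cayley graph on $(\mathbb{Z}_4\times\mathbb{Z}_4,+)$ and therefore vertex-transitive, so I may assume a chosen vertex of the forbidden subgraph is $(0,0)$; a direct check of the adjacency rules shows that $H[N((0,0))]$ is the $6$-cycle with cyclic order $[(0,1),(1,1),(1,0),(0,3),(3,3),(3,0)]$.

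For $\ell=6$, suppose $H$ contains an induced prism with triangles $A=\{a_1,a_2,a_3\}$ and $B=\{b_1,b_2,b_3\}$ and matching edges $a_ib_i$, and place $a_1=(0,0)$. The three prism-neighbours $a_2,a_3,b_1$ of $a_1$ induce $K_2+K_1$ inside the $6$-cycle $H[N(a_1)]$, hence consist of an edge of that cycle together with one of the two vertices nonadjacent to both its endpoints; up to the symmetries of $H$ this fixes $a_2,a_3,b_1$, and then the remaining prism-vertices are forced. Propagating the constraints — each of the six triangle-edges lies in exactly two triangles of $H$ (Lemma~\ref{Shirstreg}), all neighbourhoods induce $6$-cycles, and adjacent vertices have exactly two common neighbours — leads to a contradiction after a bounded amount of case checking; the auxiliary facts I expect to use are that the three triangle-extension vertices around a fixed triangle of $H$ are distinct, avoid the prism, and form an independent set, and that no vertex of $H$ is adjacent to all four vertices of an induced $C_4$, both consequences of Lemmas~\ref{ShriK4} and~\ref{Shirstreg}.

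For $\ell=7$, write $\overline{C_7}$ as $v_0,\dots,v_6$ with $v_i\sim v_j$ iff $i-j\equiv\pm2,\pm3\pmod 7$, and place $v_0=(0,0)$. Each $N_{\overline{C_7}}(v_i)=\{v_{i\pm2},v_{i\pm3}\}$ induces a $P_4$, which must appear as an induced $P_4$ inside the $6$-cycle $H[N(v_i)]$; consequently $v_i$ has exactly two more neighbours in $H$, and their adjacencies to that $P_4$ are forced because $H[N(v_i)]\cong C_6$. Chasing these forced adjacencies around a suitable pair of adjacent $v_i$'s, together with the exactly-two-common-neighbours property, again produces a contradiction; a finite check on the Cayley model is the fallback if the bookkeeping becomes unwieldy.

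The cases $\ell\ge 8$ are immediate, so the real content is $\ell\in\{6,7\}$, where there is no clique obstruction. The hard part will be the finite but intricate local analysis: reconciling the rigid $6$-cycle neighbourhoods of $H$ with the degree-$3$ neighbourhoods forced by the prism and the degree-$4$ neighbourhoods forced by $\overline{C_7}$, while keeping all common-neighbour counts consistent.
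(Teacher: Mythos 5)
Your reduction for $\ell\ge 8$ is correct and clean: $C_\ell$ has an independent set of size $4$, so $\overline{C_\ell}$ contains a $K_4$ and Lemma~\ref{ShriK4} applies; your observation that $H[N(u)]$ is an induced $C_6$ also follows correctly from Lemmas~\ref{ShriK4} and~\ref{Shirstreg}. The genuine gap is that for the two cases that remain, $\ell=6$ and $\ell=7$, you never actually produce a contradiction. For $\ell=6$ you assert that propagating the constraints ``leads to a contradiction after a bounded amount of case checking,'' and for $\ell=7$ that chasing forced adjacencies ``again produces a contradiction,'' with ``a finite check on the Cayley model'' as a fallback; none of this checking is carried out, and these cases are the entire content of the lemma once the clique obstruction is unavailable. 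In addition, the symmetry step in your prism analysis is unjustified as stated: saying that ``up to the symmetries of $H$'' the triple $a_2,a_3,b_1$ is fixed requires knowing that the vertex stabilizer in the automorphism group of $H$ induces enough of the dihedral group on the hexagon $H[N(a_1)]$ to identify the candidate configurations, which you do not establish. As written, this is a plan for a proof of the hard cases, not a proof.

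It is worth noting that both remaining cases fall immediately to the two-common-neighbours property, which is how the paper argues. For $\ell=7$ (and in the paper for every $\ell\ge 7$, via an induced $\overline{P_6}$ inside $\overline{C_\ell}$): two vertices that are consecutive on the underlying cycle are, in the complement, both adjacent to the three vertices farthest from them, so an induced $\overline{C_7}$ in $H$ would yield two vertices with at least three common neighbours, contradicting Lemma~\ref{Shirstreg} --- no adjacency chasing is needed. For $\ell=6$ the paper works directly on the prism: counting the common neighbours of suitable pairs of prism vertices, and using that every vertex of $H$ has degree exactly $6$, forces a single outside vertex adjacent to a triangle of the configuration, producing a $K_4$ and contradicting Lemma~\ref{ShriK4}. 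Replacing your deferred case analyses for $\ell\in\{6,7\}$ with arguments of this kind (or actually carrying out and writing down the finite check you allude to) is what is needed to complete the proof.
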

\begin{proof}
For the sake of contradiction, assume that $H$ contains a $\overline{C_\ell}$, where $\ell \ge 6$.  Next we claim the following.

	\begin{claim}\label{C6free}
		$\ell>6$.
	\end{claim}
	\begin{proof}[Proof of Claim~\ref{C6free}]
	For the sake of contradiction, assume that $H$ contains an induced $\overline{C_6}$ induced by the vertex set $V'=\{v_1,v_2,v_3,v_4,v_5,v_6\}$ and the edge set $E'=\{v_1v_2,v_1v_4,v_1v_5,v_2v_3,$ $v_2v_5,v_3v_4,v_3v_5,$ $v_4v_6,v_5v_6\}$ (see Figure~\ref{complmentp6c6}(a)). Since $V'\cap (N(v_5)\cap N(v_6))=\emptyset$,  by Lemma~\ref{Shirstreg},  there exist vertices in  $V(H)\setminus V'$, say $a$ and $b$ such that $N(v_5)\cap N(v_6)= \{a,b\}$.  Note that $N(v_1)\cap N(v_6)=\{v_4,v_5\}$, $N(v_2)\cap N(v_6)=\{v_3,v_5\}$, $V'\cap (N(v_1)\cap N(v_5))=\{v_2\}$, $V'\cap (N(v_2)\cap N(v_5))=\{v_1\}$. Thus by Lemma~\ref{Shirstreg}, we have $\{a,b\}$ is anticomplete to $\{v_1,v_2\}$ and there are vertices in $V(H)\setminus V'$, say $c$ and $d$ such that $c\in N(v_1)\cap N(v_5)$ and $d\in N(v_2)\cap N(v_5)$. Note that by the definition of $H$, $d(v_5)=6$. Then since $a,b,c,d,v_1,v_2,v_6\in N(v_5)$, we have $c=d$. Then clearly, $\{c,v_1,v_2,v_5\}$ induces a $K_4$, a contradiction to Lemma~\ref{ShriK4}.
	\end{proof}
    Since each induced cycle with length at least $7$ contains an induced $P_6$, by Claim~\ref{C6free},  $H$ contains an induced $\overline{P_6}$ with vertex set $V'=\{v_1,v_2,v_3,v_4,v_5,v_6\}$ and $E'=\{v_1v_2,v_1v_4,v_1v_5,v_1v_6,$ $v_2v_3,v_2v_5,v_3v_4,v_3v_6,v_4v_6,v_5v_6\}$  (see Figure~\ref{complmentp6c6}(b)). Then we have $\{v_1,v_3,v_5\}\subseteq N_H(v_2)\cap N_H(v_6)$. Therefore $|N_H(v_2)\cap N_H(v_6)|\ge 3$, a contradiction to Lemma~\ref{Shirstreg}. Hence $H$ is $\overline{C_\ell}$-free, for all $\ell\ge 6$.
\end{proof}
	
\begin{proof}[\textbf{Proof of Theorem~\ref{4k13copeg}}]
	By Lemma~\ref{ShriK4}, we have $H$ is $K_4$-free. So $\overline{H}$ is $4K_1$-free. By Lemma~\ref{clfree}, we have $\overline{H}$ is $C_\ell$-free, $\ell\ge6$. We now show that the cop number of $\overline{H}$ is $3$. For the sake of contradiction, assume that the cop number of $\overline{H}$ is $2$. At any given stage of the game, $u_i,v_i\in V(\overline{H})$ are the positions of two cops at the round $i$ of the game. If $u_i\neq v_i$, then by Lemma~\ref{Shirstreg}, there exist two distinct vertices in $V(H)=V(\overline{H})$, say $a_i$ and $b_i$, such that $a_iu_i,b_iu_i,a_iv_i,b_iv_i\in E(H)$ or in other words, $a_iu_i,b_iu_i,a_iv_i,b_iv_i\notin E(\overline{H})$. Recall that $p_i(R)$ is the position of the robber $R$ at the round $i$ of the game. We will now present a strategy for the player controlling the robber to win the game.
	
	\noindent\textbf{Case 1:} $\boldsymbol{i=1}$.
	
	If $u_i=v_i$, then we choose $p_i(R)$ to be a vertex in $V(\overline{H})$, say $p$, such that $pu_i\notin E(\overline{H})$. Such a vertex exists as $|V(H)\setminus N_{\overline{H}}[u_i]|=|N_{H}(u_i)|=6> 1$. If $u_i\neq v_i$, then we choose $p_i(R)=a_i$. Note that $a_i$ is not adjacent to $u_i$ and $v_i$.

	\noindent\textbf{Case 2:} $\boldsymbol{i>1}$.
	
	If $\{u_i,v_i\}$ is anticomplete to $p_{i-1}(R)$ in $\overline{H}$, then we choose $p_i(R)=p_{i-1}(R)$. If $\{u_i,v_i\}$ is not anticomplete to $p_{i-1}(R)$ in $\overline{H}$ and $u_i=v_i$, then we choose $p_i(R)=w$, where $w$ is a vertex in  $V(\overline{H})\setminus\{u_i\}$ such that $wp_{i-1}(R)\in E(\overline{H})$ and $u_iw\notin E(\overline{H})$. We now show that such a vertex $w$ exists. Since $\{u_i,v_i\}$ is not anticomplete to $p_{i-1}(R)$ in $\overline{H}$ and $u_i=v_i$, we have $u_ip_{i-1}(R)\in E(\overline{H})$. Now $|(N_{\overline{H}}(p_{i-1}(R))\setminus N_{\overline{H}}[u_i]|=|N_{H}(u_i)\setminus N_{H}(p_{i-1}(R)))|=|N_{H}(u_i)\setminus (N_{H}(u_i)\cap N_{H}(p_{i-1}(R)))|=|N_{H}(u_i)|-|N_{H}(u_i)\cap N_{H}(p_{i-1}(R))|$. So by Lemma~\ref{Shirstreg}, $|(N_{\overline{H}}(p_{i-1}(R))\setminus N_{\overline{H}}[u_i]|=|N_{H}(u_i)|-|N_{H}(u_i)\cap N_{H}(p_{i-1}(R))|=6-2=4>1$ and hence there exists $w\in V(\overline{H})\setminus \{u_i\}$ such that $p_{i-1}(R)w\in E(\overline{H})$ and $u_iw\notin E(\overline{H})$. 
	
If $\{u_i,v_i\}$ is not anticomplete to $p_{i-1}(R)$ in $\overline{H}$ and $u_i\neq v_i$, then we choose $p_i(R)=w'$, where $w'\in N_{\overline{H}}(p_{i-1}(R))\cap \{a_i,b_i\}$. We now show that such a vertex $w'$ exists.  If $a_ip_{i-1}(R),b_ip_{i-1}(R)\notin E(\overline{H})$, then $a_ip_{i-1}(R),b_ip_{i-1}(R)\in E(H)$;  thus $N_H(a_i)\cap N_H(b_i)=\{u_i,v_i,p_{i-1}(R)\}$, a contradiction to Lemma~\ref{Shirstreg}. So $N_{\overline{H}}(p_{i-1}(R))\cap \{a_i,b_i\}\neq \emptyset$ and let $w'\in N_{\overline{H}}(p_{i-1}(R))\cap \{a_i,b_i\}$.

By Case 1 and Case 2, at any round $i$ of the game, where $i\geq 1$, the robber can find the position $p_i(R)$ such that $u_ip_i(R), v_ip_i(R)\notin E(\overline{H})$. So we require at least $3$ cops for $\overline{H}$ to win the game. Since $\overline{H}$ is $4K_1$-free, by Observation~\ref{obs}, we have $c(\overline{H})\leq \gamma(\overline{H})\leq 3$. Thus the cop number of $\overline{H}$ is $3$. 
	\end{proof}

	\section{The upper threshold degree of Forb$(4K_1)$}\label{UpperThreshold}
	
%	By Theorem~\ref{4k13copeg}, there exists a $4K_1$-free graph with 16 vertices that has the cop number $3$. The question is whether there are any $4K_1$-free graphs with the cop number $3$ having fewer than $16$ vertices. We are trying to explore this computationally by using the upper threshold degree and the lower threshold degree of Forb$(4K_1)$. We use the following theorem in this section. In the following two subsections, we present detailed proofs of Theorems~\ref{upthcop} and \ref{lothcop}.

%	\subsection{Proof of Theorem~\ref{upthcop}}
	We use the following theorem to prove that the upper threshold degree of Forb$(4K_1)$ is $6$.
	\begin{theorem}[\cite{Baird14}]\label{cn2char}
		Let $G$ be a graph on $n$ vertices such that $G$ has a vertex, say $u$, with	degree at least $n-6$. Then either $c(G) \leq 2$ or the induced subgraph $G[V\setminus N[u]]$ is a $C_5$.
	\end{theorem}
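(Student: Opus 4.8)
The plan is to prove the statement in its useful contrapositive form: I will exhibit a two-cop winning strategy whenever the induced subgraph on $R := V(G)\setminus N[u]$ is \emph{not} a $C_5$, which is exactly the assertion ``$c(G)\le 2$ or $G[R]=C_5$''. Since $d_G(u)\ge n-6$, we have $|R| = n - d_G(u) - 1 \le 5$, so the robber's ``outside'' territory is tiny. The backbone is a \emph{confinement} observation: if one cop is permanently stationed at $u$, then the robber can never end a turn at a vertex of $N[u]$, for otherwise it would lie in the closed neighbourhood of the cop at $u$ and be captured on the next move. Hence, as long as this cop holds $u$, the robber is confined to $R$ and may move only along edges of $G[R]$. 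The whole problem thus reduces to catching, with the second cop, a robber living on the at-most-five-vertex graph $G[R]$.

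First I would dispose of the easy regime. If $G[R]$ is cop-win (that is, $c(G[R])=1$, equivalently dismantlable), the second cop first walks into $R$ --- harmless, since the first cop keeps the robber confined meanwhile --- and then executes the cop-win strategy of $G[R]$, playing only along edges of the induced subgraph; as the confined robber is effectively playing the game on $G[R]$, it is captured. A routine finite check of all connected graphs on four and five vertices shows the only non-cop-win ones are $C_4$, $C_5$, and the five-vertex graphs containing an induced $C_4$. A disconnected $G[R]$ traps the robber inside a single component of order at most four, which is either cop-win or a $C_4$. Consequently the only remaining situations are $G[R]=C_5$ (the claimed exception) and the case where $G[R]$ is non-cop-win, different from $C_5$, and therefore contains an induced $C_4$, say on $\{s_1,s_2,s_3,s_4\}$.

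The heart of the argument is this last case, and here the decisive feature of $C_4$ (absent in $C_5$) is that a single vertex dominates all but one vertex of the cycle: placing the second cop at $s_1$ already puts $s_1,s_2,s_4$ into its closed neighbourhood, so a robber confined to $R$ is pinned to the antipode $s_3$ together with the external neighbours of $s_3$ lying in $N(u)$ --- and those are blocked by the first cop holding $u$. To finish I would exploit that $u$ is adjacent to \emph{every} vertex of $N(u)$: the first cop can step, in a single move, onto a neighbour of $s_3$ (or onto the relevant gateway vertex into $R$) and close a pincer in which every escape route of the robber is simultaneously covered. When $R$ meets $N(u)$ through essentially one gateway vertex, the robber is instead trapped on a tree hanging off that gateway and is caught there by the now-free cop. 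For a five-vertex $G[R]$ containing a $C_4$, the extra vertex is dominated and does not enlarge the robber's effective territory, so the same pincer applies.

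The main obstacle will be the bookkeeping of this endgame: one must verify that, across all the ways $R$ can attach to $N(u)$, there is always a choice of pinned vertex (or of an external vertex of $N(u)$ dominating two consecutive vertices of the $C_4$) for which the robber's neighbours are all covered at the moment the first cop leaves $u$, so that confinement is never actually broken to the robber's advantage. This is precisely where $C_5$ resists: no closed neighbourhood of a single vertex of $C_5$ covers four of its five vertices, so the robber can never be pinned to a unique vertex, and the instant both cops commit to $R$ it slips back out into $N(u)$; the pincer therefore has no analogue for $C_5$, which is why it alone survives as the exception. Assembling the cases, whenever $G[R]\neq C_5$ two cops win, giving $c(G)\le 2$, and the theorem follows.
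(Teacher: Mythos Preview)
This theorem is quoted from Baird et al.\ and is \emph{not} proved in the present paper; it is invoked as a black box at the start of Section~\ref{UpperThreshold}. There is therefore no ``paper's own proof'' to compare against, so I evaluate your proposal on its own terms.

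Your reduction is sound through the classification step. The confinement observation is correct, and the assertion that every non-cop-win connected graph on at most five vertices is either $C_5$ or contains an induced $C_4$ is true: a five-vertex connected graph with no induced $C_4$ and not equal to $C_5$ has no induced cycle of length at least four, hence is chordal, hence dismantlable.

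The genuine gap is the $C_4$ endgame, which you yourself flag as ``the main obstacle'' and then do not resolve. The pincer intuition does not go through as stated. Take $G[R]=C_4$ on $s_1s_2s_3s_4$, with $D_2$ fixed at $s_1$ and $D_1$ at $u$, pinning the robber at $s_3$. If $s_3$ has two nonadjacent neighbours $w,w'\in N(u)$, then the move $D_1:u\to w$ lets the robber slip to $w'$; on $D_1:w\to u$ the robber is forced back into $R$, but if $w'$ happens also to see $s_2$ or $s_4$, it lands on the antipode of wherever $D_2$ has moved, and the position has merely been rotated. One can push further---showing that survival forces $w'$ to see three consecutive $s_j$, then re-basing $D_2$ at $w'$---but you have to prove such a bootstrapping terminates, and you have to redo it for the five-vertex variants (the fifth vertex is not always harmless: if $s_5$ is adjacent only to $s_3$, the robber gains a second safe square under $D_2$ at $s_1$). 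None of this bookkeeping is in the proposal; what is written is an outline that correctly locates the difficulty but does not discharge it.
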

	Let $G$ be a $4K_1$-free graph with $\Delta(G)=n-6$. So by Theorem~\ref{cn2char}, we may assume that there are vertices in $V(G)$, say $u,v_1,v_2,v_3,v_4$, and $v_5$, such that $d_G(u)=n-6$, $uv_i\notin E(G)$ for $i\in \{1,2,3,4,5\}$, and $\{v_1,v_2,v_3,v_4,v_5\}$ induces a $C_5$ with edges $v_1v_2,v_1v_5,v_2v_3,v_3v_4,$ and $v_4v_5$. Let $C:=\{v_1,v_2,v_3,v_4,v_5\}$. Let $D_1$ and $D_2$ be two cops and $R$ be the robber. In this section except the indices of $p$ and $D$, every other indices are in modulo 5. Next for $i\in \{1,2,3,4,5\}$, we define the following.
	\begin{eqnarray*}
		A_i&:=&\{v\in N(u)| N(v)\cap C=\{v_i\}\}\\
		B_i&:=&\{v\in N(u)| N(v)\cap C=\{v_{i+2},v_{i-2}\}\}\\
		X_i&:=&\{v\in N(u)| N(v)\cap C=\{v_{i+1},v_{i-1}\}\}\\
		Y_i&:=&\{v\in N(u)| N(v)\cap C=\{v_{i},v_{i+1},v_{i-1}\}\}\\
		Z_i&:=&\{v\in N(u)| N(v)\cap C=\{v_i,v_{i+2},v_{i-2}\}\}\\
		T&:=&\{v\in N(u)| N(v)\cap C=\emptyset\}\\
		L&:=&\{v\in N(u)| |N(v)\cap C|\geq 4\}
	\end{eqnarray*}
	Let $A=\bigcup\limits_{i=1}^5 A_i$, $B=\bigcup\limits_{i=1}^5B_i$, $X=\bigcup\limits_{i=1}^5X_i$, $Y=\bigcup\limits_{i=1}^5Y_i$, $Z=\bigcup\limits_{i=1}^5Z_i$. Next we have following lemmas.
	\begin{lemma}\label{prop}
		The following hold. \\ [-20pt]
		\begin{enumerate}[label=\rm{(\roman*)}]
			\item\label{unbCintatm3} $V(G)=C\cup N(u)\cup \{u\}=A\cup B \cup C\cup L\cup  X\cup Y\cup Z\cup T\cup \{u\}$. \\ [-20pt]
			
			\item \label{Xi+1Xi+1Aiclq}
			$A_i\cup A_{i+2}\cup A_{i-2}\cup B_i\cup X_{i+1}\cup X_{i-1}\cup Z_i\cup T$ is a clique. \\ [-20pt]
			
			\item\label{robpos} Suppose for some $i\in \mathbb{N}$, $p_i(D_1),p_i(D_2)\in N(u)$ and $v_j\in  N(p_i(D_1))\cup N(p_i(D_2))$ for some $j\in \{1,2,3,4,5\}$. If  $p_i(R)\in A_j\cup A_{j+1}\cup A_{j-1}\cup B_{j+2}\cup B_{j-2}\cup X_j\cup Y_j\cup T\cup \{v_j\}$, then the robber $R$ will be captured by either the cop $D_1$ or the cop $D_2$. \\ [-20pt]
		\end{enumerate}
	\end{lemma}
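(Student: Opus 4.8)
The plan is to prove the three parts separately; each one reduces to a short, self-contained verification once the right auxiliary object is isolated.

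For part~\ref{unbCintatm3}, I would first note that $|N(u)|=d_G(u)=n-6$, that $u\notin N(u)$, and that $v_1,\dots,v_5\notin N(u)$ by the choice of $u$ and of the $v_i$'s; hence $\{u\}\cup C\cup N(u)$ is a set of $1+5+(n-6)=n$ distinct vertices and therefore equals $V(G)$. It then remains to check that $N(u)=A\cup B\cup X\cup Y\cup Z\cup T\cup L$, and for this I would classify each $v\in N(u)$ according to the isomorphism type of $G[N(v)\cap C]$ inside the $5$-cycle $G[C]$. Since $C_5$ is triangle-free with independence number $2$, the only possibilities for $N(v)\cap C$ are: the empty set (so $v\in T$); a singleton (so $v\in A_i$); an edge (so $v\in B_i$) or a non-edge (so $v\in X_i$) of $G[C]$; an induced $P_3$ (so $v\in Y_i$) or an induced $K_2+K_1$ (so $v\in Z_i$); or a set of size at least $4$ (so $v\in L$). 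Matching each configuration to the correct index — the centre of the $P_3$, the isolated vertex of the $K_2+K_1$, and so on — is routine arithmetic modulo $5$, and it simultaneously shows that the listed sets are pairwise disjoint.

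For part~\ref{Xi+1Xi+1Aiclq}, the crux is the single observation that every vertex $v$ of $A_i\cup A_{i+2}\cup A_{i-2}\cup B_i\cup X_{i+1}\cup X_{i-1}\cup Z_i\cup T$ satisfies $v_{i-1},v_{i+1}\notin N(v)$, which is read off set by set from the definitions (in each of the seven cases neither $v_{i-1}$ nor $v_{i+1}$ lies in $N(v)\cap C$). Granting this, if two such vertices $v,w$ were non-adjacent, then, using that $v_{i-1}v_{i+1}\notin E(G)$ in $G[C]$ and that $v,w\in N(u)$ are distinct from $v_{i-1},v_{i+1}\in C$, the set $\{v,w,v_{i-1},v_{i+1}\}$ would induce a $4K_1$ in $G$, contradicting the hypothesis that $G$ is $4K_1$-free. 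Hence the union is a clique; this is the only place in the proof where $4K_1$-freeness is used.

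For part~\ref{robpos}, which carries the actual content, the idea is a one-move trap. Put $S_j:=A_j\cup A_{j+1}\cup A_{j-1}\cup B_{j+2}\cup B_{j-2}\cup X_j\cup Y_j\cup T\cup\{v_j\}$ and assume, without loss of generality, that $v_j\in N(p_i(D_1))$. In round $i+1$ I move $D_1$ onto $v_j$ and $D_2$ onto $u$; both moves are legal since $p_i(D_1)\sim v_j$ and $p_i(D_2)\in N(u)$. By part~\ref{unbCintatm3}, $N[u]=\{u\}\cup N(u)=V(G)\setminus C$, and plainly $N[v_j]\cap C=\{v_{j-1},v_j,v_{j+1}\}$, so the only vertices of $G$ lying outside $N[v_j]\cup N[u]$ are $v_{j+2}$ and $v_{j-2}$. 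I would then check — by inspecting the neighbourhoods in $C$ of the vertices of $A_j,A_{j+1},A_{j-1},B_{j+2},B_{j-2},X_j,Y_j,T$ and of $v_j$ itself — that no vertex of $S_j$ is adjacent to, or equal to, $v_{j+2}$ or $v_{j-2}$; equivalently, every vertex of $S_j$ is non-adjacent to the two vertices of $C$ at distance $2$ from $v_j$. Consequently every legal move of the robber from $p_i(R)\in S_j$ in round $i+1$ lands in $N[v_j]\cup N[u]=N[p_{i+1}(D_1)]\cup N[p_{i+1}(D_2)]$ (and if $p_i(R)=v_j$ the robber is already on $D_1$'s vertex), so in round $i+2$ the cop guarding the robber's new position steps onto it and captures $R$.

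I expect no real obstacle in parts~\ref{unbCintatm3} and~\ref{Xi+1Xi+1Aiclq}: each is purely a matter of organizing a finite case analysis in $\mathbb{Z}_5$. The only point that demands care is the bookkeeping in part~\ref{robpos} — keeping the turn order straight so that both cop moves precede the robber's forced move, and verifying the ``cannot reach $v_{j\pm2}$'' property uniformly over the nine pieces making up $S_j$ (which is the same flavour of computation as the key step of part~\ref{Xi+1Xi+1Aiclq}).
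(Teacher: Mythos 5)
Your proposal is correct and follows essentially the same route as the paper: part (i) by the same counting of $\{u\}\cup C\cup N(u)$ together with the (implicit in the paper) classification of $N(v)\cap C$, part (ii) by exhibiting the same $4K_1$ on $\{v,w,v_{i-1},v_{i+1}\}$, and part (iii) by the same two-cop move $D_1\to v_j$, $D_2\to u$ with the observation $N[p_i(R)]\subseteq N[u]\cup N[v_j]=V(G)\setminus\{v_{j+2},v_{j-2}\}$. Your write-up just spells out a few verifications (the $\mathbb{Z}_5$ case check and the ``cannot reach $v_{j\pm2}$'' inclusion) that the paper leaves implicit.
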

\begin{proof}		
(i) Since $G[V\setminus N[u]]$ induces a $C_5$, we have $C\cap N[u]=\emptyset$. Moreover, since $d_G(u)=n-6$, we have $N(u)=A\cup B \cup L\cup T\cup  X\cup Y\cup Z$. If $p\in V(G)\setminus (C\cup N[u])$, then $n=|V(G)|\geq |C|+|N[u]|+1=5+n-6+1+1=n+1$, a contradiction. So we have $V(G)=C\cup N[u]$; thus $V(G)=A\cup B \cup C\cup L\cup  X\cup Y\cup Z\cup T\cup \{u\}$. This proves \ref{unbCintatm3}. 
			
(ii) If there are nonadjacent vertices, say $a,b\in A_i\cup A_{i+2}\cup A_{i-2}\cup B_i\cup X_{i+1}\cup X_{i-1}\cup Z_i\cup T$, then $\{a,b,v_{i+1},v_{i-1}\}$ induces a $4K_1$, a contradiction. So $A_i\cup A_{i+2}\cup A_{i-2}\cup B_i\cup X_{i+1}\cup X_{i-1}\cup Z_i\cup T$ is a clique. This proves \ref{Xi+1Xi+1Aiclq}. 
			
(iii) Suppose $p_i(R)\in A_j\cup A_{j+1}\cup A_{j-1}\cup B_{j+2}\cup B_{j-2}\cup X_j\cup Y_j\cup T \cup \{v_j\}$. We may assume that $v_j\in N(p_i(D_1))$. Next $p_{i+1}(D_1)=v_j$ and $p_{i+1}(D_2)=u$. Note that $N(u)=A\cup B \cup L\cup T\cup  X\cup Y\cup Z$. This implies that $N[p_i(R)]\subseteq N[u]\cup N[v_j]$ and hence $p_{i+1}(R)\in N[u]\cup N[v_j]$. So the robber $R$ will be captured either the cop $D_1$ or the cop $D_2$. This proves \ref{robpos}. 
\end{proof}
	
	\setcounter{claim}{0}
	\begin{lemma}\label{LZemp}
		If $L\cup Z\neq \emptyset$, then $c(G)\leq 2$.
	\end{lemma}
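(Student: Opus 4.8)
The plan is to exhibit an explicit winning strategy for the two cops $D_1,D_2$ against the robber $R$, organised around the maximum-degree vertex $u$ and a vertex $w\in L\cup Z$ (which exists by hypothesis). Everything rests on Lemma~\ref{prop}\ref{unbCintatm3}, which gives $N[u]=V(G)\setminus C$; in particular $u$ is adjacent to \emph{every} vertex of $N(u)$, so as soon as a cop occupies $u$, on each of its subsequent turns the robber is either captured or forced to stand on the $5$-cycle $C$, and it can never step into $N(u)$ while $u$ is guarded. The game thus reduces to cornering a robber that lives on $C$, using the second cop together with the ``chords'' into $C$ supplied by $w$.

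First I would dispose of the degenerate case: if $w\in L$ with $N(w)\cap C=C$, then $N[u]\cup N[w]=V(G)$, so $\{u,w\}$ is a dominating set and $c(G)\le\gamma(G)\le 2$ by Observation~\ref{obs}. Otherwise $C\setminus N[w]$ is either a single vertex $v_j$ (when $w\in L$ and $N(w)\cap C=C\setminus\{v_j\}$) or the non-adjacent pair $\{v_{i+1},v_{i-1}\}$ (when $w\in Z_i$). I would send $D_1$ to $u$ and $D_2$ to $w$. After $D_1$ reaches $u$ the robber must sit on $C\setminus N[w]$, and whichever such vertex $v$ it occupies, every neighbour of $v$ already lies in $N[u]\cup N[w]$ — its two neighbours on $C$ belong to $N(w)$ and its remaining neighbours belong to $N(u)\subseteq N[u]$ — while $v$ itself lies in neither $N[u]$ nor $N[w]$. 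Hence the robber is frozen at $v$.

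Now the idea is to keep $D_2$ on $w$ (which keeps $R$ frozen) while $D_1$ leaves $u$ and marches toward $v$. Since $G$ is connected and $C=V(G)\setminus N[u]$, some vertex of $C$ has a neighbour in $N(u)$, so $D_1$ can reach a vertex of $N[v]$ along $C$ and through $N(u)$; the only way for $R$ to react to $D_1$ vacating $u$ is to try to escape into $N(v)\cap N(u)$. To rule this out I would argue, using Lemma~\ref{prop}\ref{Xi+1Xi+1Aiclq} and $4K_1$-freeness, that $D_1$ can choose its route so that at the instant it threatens $v$, the robber's only escape targets either lie in the clique already dominated by $D_2$, or leave $R$ inside a set of the form $A_{j}\cup A_{j+1}\cup A_{j-1}\cup B_{j+2}\cup B_{j-2}\cup X_{j}\cup Y_{j}\cup T\cup\{v_{j}\}$ with $v_j$ in the closed neighbourhood of a cop currently in $N(u)$; in the latter case Lemma~\ref{prop}\ref{robpos} ends the game at once, and in the former $D_1$ either captures $R$ or re-freezes it on a vertex strictly closer to $D_1$, so the procedure terminates. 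In the $Z_i$ case one must in addition check that $R$ cannot oscillate between $v_{i+1}$ and $v_{i-1}$ via a vertex of $N(u)$, which is again handled by the clique structure of Lemma~\ref{prop}\ref{Xi+1Xi+1Aiclq}.

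The hard part is precisely this escape-hatch analysis: a bare $C_5$ needs two cops, and the cop pinned at $u$ is worthless once $R$ is on $C$, so the whole argument turns on showing that the chords at $w\in L\cup Z$ give the second cop just enough extra reach, and that $4K_1$-freeness forbids $R$ from using a vertex of $N(u)$ as a safe refuge while $D_1$ is away from $u$. (The companion lemmas, which assume $L=Z=\emptyset$, meet the same obstruction but with the weaker chords coming from $X$, $Y$, and $B$, so there one additionally needs a careful bookkeeping of which of those classes can be nonempty.)
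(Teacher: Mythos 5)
Your reduction to a robber frozen on $C\setminus N[w]$ with $D_1$ at $u$ and $D_2$ at $w$ is fine, but the proof has a genuine gap at exactly the point you flag as ``the hard part'': the claim that $D_1$ can vacate $u$ and march toward the frozen vertex while $D_2$ alone holds the position. The freezing argument needs \emph{both} cops: the instant $D_1$ steps off $u$, the robber at $v_{i+1}$ (say, with $w\in Z_i$) may move to any vertex of $N(u)\cap N(v_{i+1})$ --- e.g.\ a vertex of $X_{i+2}$, $Y_{i+1}$, $Z_{i-1}$, or another part of $N(u)$ --- that is nonadjacent to $w$ and to $D_1$'s current position, and from there it can wander inside $N(u)$ or re-enter $C$ at a different vertex. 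Lemma~\ref{prop}\ref{robpos} does not rescue this: it is a one-shot capture statement that requires \emph{both} cops to sit in $N(u)$ with some $v_j$ already in a cop's closed neighborhood at that very round; it gives no mechanism for one cop to guard while the other travels several steps. You give no argument that a safe route for $D_1$ exists, and in general it need not when only a single $w\in Z_i$ is available --- which is precisely why the paper's proof, after disposing of $L$, goes through a long case analysis (its Claims on the index sets $I_k$, on $Z_k$ being anticomplete to $Z_{k\pm2}$, and on adjacencies among representatives $z_k'$) to show that either some degenerate configuration of the $Z_j$'s already yields a two-cop win, or \emph{all five} $Z_j$ are nonempty with two ``consecutive'' adjacent representatives, at which point the winning strategy places the two cops simultaneously at two different $Z$-representatives (e.g.\ $z_2'$ and $z_3'$) and invokes Lemma~\ref{prop}\ref{robpos}, rather than parking one cop at a single $w$.

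Even your easier $L$-case deviates from what actually works: for $w\in L$ with $|N(w)\cap C|=4$ the paper keeps a cop at $u$ \emph{forever} (so the robber never leaves $C$) and lets the second cop chase the robber around the $5$-cycle using $w$'s four chords, whereas your plan (hold $D_2$ at $w$, move $D_1$ off $u$) reopens the same escape hatch into $N(u)$ and is not justified. So the proposal correctly identifies the obstruction but does not overcome it; the missing content is essentially the entire body of the paper's proof.
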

	\begin{proof}
		First we claim the following:
		\begin{claim}\label{Lemp}
			We may assume that $L=\emptyset$.
		\end{claim}
		\begin{proof}[Proof of Claim~\ref{Lemp}]
		For the sake of contradiction, assume that $L\neq \emptyset$. Then there is a vertex in $N(u)$, say $p$, such that $p\in L$ and $4\leq |C\cap N(p)|\leq 5$. If $|C\cap N(p)|=5$, then since $\{p\}$ is complete to $C$, $\{u,p\}$ is a dominating set of $G$. Moreover, by Observation~\ref{obs}, $c(G)\leq 2$. So we assume $|C\cap N(p)|=4$. Without loss of generality, assume that $pv_1\notin E(G)$. Now we will show two cops are enough to catch the robber. Now for all $i\in \mathbb{N}$, $p_i(D_1)=u$. So $p_1(R)\in C$ and by Lemma~\ref{prop}(i), for all $i\in \mathbb{N}$ and $i\geq 2$, $p_i(R)\in C\cap N(p_{i-1}(R))$. Now $p_1(D_2)=p$. Then $p_1(R)=v_1$. Next $p_2(D_2)=v_5$. This implies that $p_2(R)=v_2$. Next $p_3(D_2)=v_1$ and thus $p_3(R)=v_3$. After that $p_4(D_2)=v_2$ and so $p_4(R)=v_4$. Next $p_5(D_2)=p$. Now since $N[v_4]\cap C\subset N(p)$, the robber $R$ will be captured in the round $5$ of the game. So we may assume that $L=\emptyset$.
		\end{proof}
		So by Claim~\ref{Lemp}, $Z\neq \emptyset$. For $k\in \{1,2,3,4,5\}$, let $I_k:=\{l\in \{1,2,3,4,5\}|N(v_k)\cap Z_l\neq \emptyset\}$. Note that $I_k\subseteq\{k-2,k,k+2\}$ and  $|I_k|\leq 3$. Next we claim the following.
		
		\begin{claim}\label{Ikcard2}
			We may assume that for all $k\in \{1,2,3,4,5\}$, if $|I_k|=2$, then for any $a,b\in I_k$, $|a-b|\equiv\pm1\Mod{5}$.
		\end{claim}
		\begin{proof}[Proof of Claim~\ref{Ikcard2}]
		For the sake of contradiction, assume that there exists a set $I_k$ for some $k\in \{1,2,3,4,5\}$ such that $|I_k|=2$ and for any $a,b\in I_k$, $|a-b|\not\equiv \pm1\pmod{5}$. Without loss of generality, assume that $|I_1|=2$ and so either $I_1=\{1,3\}$ or $I_1=\{1,4\}$. By symmetry, we may assume that $I_1=\{1,3\}$. So $Z_4=\emptyset$. Let $z_1\in Z_1$ and $z_3\in Z_3$. Now $p_1(D_1)=u$ and $p_1(D_2)=v_1$. So by Lemma~\ref{prop}(i), we have $p_1(R)\in \{v_3,v_4\}$. First suppose $p_1(R)=v_3$. Next we have $p_2(D_1)=u$ and $p_2(D_2)=v_2$ and so we have $p_2(R)=v_4$. Now suppose $Z_2\cup Z_5\neq \emptyset$. Let $z\in Z_2\cup Z_5$. 
		Then $p_3(D_1)=z_1$ and $p_3(D_2)=z$. Then $p_3(R)\notin C$. Also by Lemma~\ref{prop}(iii), we may assume that $p_3(R)\notin A\cup B\cup X\cup Y$ and by Lemma~\ref{prop}(ii), each $Z_i$ is a clique; so we may assume that $p_3(R)\notin Z_1\cup Z_2$. So $p_3(R)\in Z_4$, a contradiction to the fact $Z_4=\emptyset$. Hence we assume that $Z_2\cup Z_5=\emptyset$. Then $p_3(D_1)=u$ and $p_3(D_2)=v_3$ which implies that $p_3(R)=v_5$. Now $p_4(D_1)=z_1$ and $p_4(D_2)=z_3$. Since $\{v_5\}$ is complete to $A_5\cup B_2\cup B_3\cup X_1\cup X_4\cup Y_1\cup Y_4\cup Y_5\cup Z_3\cup \{v_1,v_4\}$, by Lemma~\ref{prop}(ii) and (iii), $c(G)\leq 2$. 
		
		So we assume that $p_1(R)=v_4$. Now if $Z_2\cup Z_5\neq \emptyset$, then for any $z\in Z_2\cup Z_5$, we have $p_2(D_1)=z_1$ and $p_2(D_2)=z$, and by earlier argument, we have $c(G)\leq 2$. So we assume that $Z_2\cup Z_5=\emptyset$. Then $p_2(D_1)=u$ and $p_2(D_2)=z_1$ which implies that $p_2(R)=v_5$. Now $p_3(D_1)=z_3$ and $p_3(D_2)=z_1$ and also by similar argument to earlier one, we have $c(G)\leq 2$. So  we may assume that for all $k\in \{1,2,3,4,5\}$, if $|I_k|=2$, then for any $a,b\in I_k$, $|a-b|\equiv\pm1\Mod{5}$.
		\end{proof}
		
		\begin{claim}\label{Ikcard2case}
			We may assume that for all $k\in \{1,2,3,4,5\}$, $|I_k|\neq 2$.
		\end{claim}
		\begin{proof}[Proof of Claim~\ref{Ikcard2case}]
		For the sake of contradiction, assume that there is a $k\in \{1,2,3,4,5\}$ such that $|I_k|=2$. Without loss of generality, we may assume that $k=1$. Then by Claim~\ref{Ikcard2}, $Z_3$ and $Z_4$ are nonempty and $Z_1$ is empty. Since $Z_1=\emptyset$ and $Z_3\neq \emptyset$, by Claim~\ref{Ikcard2}, we have $Z_5=\emptyset$. Similarly, $Z_2= \emptyset$. Let $z_3\in Z_3$ and $z_4\in Z_4$. Now $p_1(D_1)=v_1$ and $p_1(D_2)=u$. Then $p_1(R)\in \{v_3,v_4\}$. Next $p_2(D_1)=z_3$ and $p_2(D_2)=z_4$. Since $(N(z_3)\cup N(z_4))\cap C=C$, by Lemma~\ref{prop}(iii), we have $c(G)\leq 2$. So we assume that for all  $k\in \{1,2,3,4,5\}$, $|I_k|\neq 2$. 
	    \end{proof}
	    
		\begin{claim}\label{Ikcard1}
			We may assume that for all $k\in \{1,2,3,4,5\}$,  $|I_k|= 3$.
		\end{claim}
		\begin{proof}[Proof of Claim~\ref{Ikcard1}]
		For the sake of contradiction, assume that there is a $k\in \{1,2,3,4,5\}$ such that $|I_k|\neq 3$. By Claim~\ref{Lemp}, $Z\neq \emptyset$ and thus without loss of generality, we may assume that $Z_1\neq \emptyset$. If $Z_3\cup Z_4\neq \emptyset$, then since $|I_1|\neq 2$, by Claim~\ref{Ikcard2case}, we have $Z_3$ and $Z_4$ are nonempty and thus since $|I_3|\neq 2$ and $|I_4|\neq 2$, we have $Z_2$ and $Z_5$ are nonempty; so for all $k\in \{1,2,3,4,5\}$, $|I_k|= 3$, a contradiction. So we assume that $Z_3\cup Z_4=\emptyset$. Since $|I_3|\neq 2$, by Claim~\ref{Ikcard2case}, we have $Z_5=\emptyset$. Similarly, we have $Z_2=\emptyset$.
		Let $z_1\in Z_1$. Now $p_1(D_1)=z_1$ and $p_1(D_2)=u$ and so $p_1(R)\in \{v_2,v_5\}$. By symmetry, we may assume that $p_1(R)=v_2$. Suppose $N(v_2)\cap (A\cup B\cup X\cup Y)\neq \emptyset$. Let $p\in N(v_2)\cap (A\cup B\cup X\cup Y)$. Then $p_2(D_1)=z_1$ and $p_2(D_2)=p$. Thus by Lemma~\ref{prop}(iii), we have $c(G)\leq 2$. So we assume that $N(v_2)\cap (A\cup B\cup X\cup Y)= \emptyset$. Now $p_2(D_1)=v_3$ and $p_2(D_2)=z_1$ and thus  $p_2(R)\in \{v_1,v_2,v_3\}$; so the robber $R$ will be captured in the round $2$ of the game. Hence we assume that $p_1(R)=v_5$ and the rest of the proof is similar to earlier one. So we may assume that for all $k\in \{1,2,3,4,5\}$, $|I_k|= 3$.
		\end{proof}
		So by Claim~\ref{Ikcard1}, we have for all $k\in \{1,2,3,4,5\}$, we have $Z_k\neq \emptyset$. Next we have following.
		
		\begin{claim}\label{Z1Z3Z4ancom}
			We may assume that for all $k\in \{1,2,3,4,5\}$, $Z_k$ is anticomplete to $Z_{k+2}\cup Z_{k-2}$.
		\end{claim}
		\begin{proof}[Proof of Claim~\ref{Z1Z3Z4ancom}]
		For the sake of contradiction, assume that that there is a $k\in \{1,2,3,4,5\}$ such that $Z_k$ is not anticomplete to $Z_{k+2}\cup Z_{k-2}$. Without loss of generality, we may assume that $Z_1$ is not anticomplete to $Z_{3}\cup Z_{4}$. Then there are adjacent vertices, $p\in Z_1$ and $q\in Z_3\cup Z_4$. We start with $p_1(D_1)=v_3$ and $p_1(D_2)=u$. So $p_1(R)\in \{v_1,v_5\}$. Suppose $p_1(R)=v_1$. First suppose $q\in Z_3$. Then $p_2(D_1)=q$ and $p_2(D_2)=v$, where $v\in Z_4$. So by Lemma~\ref{prop}(iii), $p_2(R)\notin A\cup B\cup C \cup X\cup Y$ and also by Lemma~\ref{prop}(ii), $p_2(R)\notin Z_3\cup Z_4$. Thus we have $p_2(R)\in Z_1$. Then $p_3(D_1)=p$ and $p_3(D_2)=u$ which implies that $p_3(R)\in C\cap N[p_2(R)]=\{v_1,v_3,v_4\}$. 
		Thus the robber $R$ will be captured in the round $3$ of the game. Similarly, when $q\in Z_4$, then $p_2(D_1)=v'$, where $v'\in Z_3$ and $p_2(D_2)=q$ and the rest of the proof is similar to the earlier case.  So we assume that $p_1(R)=v_5$. Then $p_2(D_1)=v_4$ and $p_2(D_2)=u$ which implies that $p_2(R)=v_1$. Now rest of the proof is similar to the earlier case. So we have $Z_1$ is anticomplete to $Z_3\cup Z_4$.
		\end{proof}
		
		Let $z_k'\in Z_k$, where $k\in \{1,2,3,4,5\}$. Next we have following.
		\begin{claim}\label{Ziindices}
			We may assume that for all $k\in \{1,2,3,4,5\}$, $\{z_k'\}$ is not complete to $\{z_{k+1}',z_{k-1}'\}$.
		\end{claim}
		\begin{proof}[Proof of Claim~\ref{Ziindices}]
		For the sake of contradiction, assume that there is a $k\in \{1,2,3,4,5\}$ such that $\{z_k'\}$ is complete to $\{z_{k+1}',z_{k-1}'\}$. Without loss of generality, assume that $k=1$. Then we start with $p_1(D_1)=z_1'$ and $p_1(D_2)=u$. Now $p_1(R)\in \{v_2,v_5\}$. By symmetry, we may assume that $p_1(R)=v_2$. Then $p_2(D_1)=z_1'$ and $p_2(D_2)=z_4'$. Then by Lemma~\ref{prop}(iii), we have $p_2(R)\in Z_2\cup Z_5$. Next $p_3(D_2)=u$ which implies that $p_3(R)\in C\cap N(p_2(R))$. If $p_2(R)\in Z_2$, then $p_3(D_1)=z_2'$ and if  $p_2(R)\in Z_5$, then $p_3(D_1)=z_5'$. So the robber $R$ will be captured by $D_1$ in the round $3$ of the game. 
		\end{proof}
		%	Similarly, when $p_1(R)=v_5$, we choose $p_2(D_1)=z_1'$ and $p_2(D_2)=z_3'$ and rest of the proof is similar to the earlier case.  
		
		\begin{claim}\label{Zifinalind}
			There are indices $k,\ell$ such that $k,\ell\in \{1,2,3,4,5\}$ and $\ell\notin \{k,k+1\}$, we have $z_k'z_{k+1}',z_\ell'z_{\ell+1}'\in E(G)$. 
		\end{claim}
		\begin{proof}[Proof of Claim~\ref{Zifinalind}]
		By Claim~\ref{Ziindices}, without loss of generality, we may assume that $z_1'z_2'\notin E(G)$. Also by Claim~\ref{Z1Z3Z4ancom}, we have $z_1'z_3',z_1'z_4',z_2'z_4'\notin E(G)$. Since $\{z_1',z_2',z_3',z_4'\}$ does not induce a $4K_1$, we have either $z_2'z_3'\in E(G)$ or $z_3'z_4'\in E(G)$. First we assume that $z_2'z_3'\in E(G)$. Then by Claim~\ref{Ziindices}, $z_3'z_4'\notin E(G)$. If $z_4'z_5'\in E(G)$, then $k=2$ and $\ell=4$ will be our desired index, so we assume that $z_4'z_5'\notin E(G)$. If $z_1'z_5'\in E(G)$, then $k=5$ and $\ell=2$ will be our desired index. So we may assume that $z_1'z_5'\notin E(G)$.  Then by Claim~\ref{Z1Z3Z4ancom}, $\{z_1',z_3',z_4',z_5'\}$ induces a $4K_1$, a contradiction. 
		
		Finally, we assume that $z_2'z_3'\notin E(G)$ and $z_3'z_4'\in E(G)$. Since by Claim~\ref{Z1Z3Z4ancom}, $\{z_1',z_2',z_3',z_5'\}$ does not induce a $4K_1$, we have $z_1'z_5'\in E(G)$. Then $k=3$ and $\ell=5$ will be our desired indices.
		\end{proof}

		\medskip
		By Claim~\ref{Zifinalind}, without loss of generality, assume that $z_1'z_2', z_3'z_4'\in E(G)$.  Now we start with $p_1(D_1)=v_2$ and $p_1(D_2)=u$. Then $p_1(R)\in \{v_4,v_5\}$. Suppose $p_1(R)=v_4$. Then $p_2(D_1)=z_2'$ and $p_2(D_2)=z_3'$. Then by Lemma~\ref{prop}(iii), we have $p_2(R)\in Z_1\cup Z_4$. If $p_2(R)\in Z_1$, then $p_3(D_1)=z_1'$ and $p_3(D_2)=u$ and thus $p_3(R)\in \{v_1,v_3,v_4\}$; so the robber $R$ will be captured by the cop $D_1$ in the round $3$ of the game. If $p_2(R)\in Z_4$, then $p_3(D_1)=u$ and $p_3(D_2)=z_4'$ and thus $p_3(R)\in \{v_1,v_2,v_4\}$; so the robber $R$ will be captured the cop $D_2$ in the round $3$ of the game. So we assume that $p_1(R)=v_5$. Then $p_2(D_1)=v_1$ and $p_2(D_2)=u$. Then we have $p_2(R)=v_4$. Now $p_2(D_1)=z_3'$, $p_2(D_2)=z_2'$. Then by Lemma~\ref{prop}(iii), we have $p_2(R)\in Z_1\cup Z_4$ and the rest of the proof is similar when $p_1(R)=v_4$. So $c(G)\leq 2$.  This completes the proof of Lemma~\ref{LZemp}.  
	\end{proof}
	\setcounter{claim}{0}
	\begin{lemma}\label{Yempcop}
		If $Y\neq \emptyset$, then $c(G)\leq 2$.
	\end{lemma}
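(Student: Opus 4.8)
The plan is to follow the same structural template used in Lemma~\ref{LZemp}: assume toward a contradiction that two cops suffice, exploit the already-established structure (the $C_5$ on $C=\{v_1,\dots,v_5\}$, the vertex $u$ dominating everything else, and the partition of $N(u)$ into $A,B,X,Y,Z,T,L$ together with the clique relation in Lemma~\ref{prop}\ref{Xi+1Xi+1Aiclq} and the capture criterion in Lemma~\ref{prop}\ref{robpos}), and show that whenever $Y\ne\emptyset$ the robber can in fact be caught. By Lemma~\ref{LZemp} we may already assume $L\cup Z=\emptyset$, so $N(u)=A\cup B\cup X\cup Y\cup T$, which substantially simplifies the case analysis. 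Fix $i$ with $Y_i\ne\emptyset$, say $Y_1\ne\emptyset$, and pick $y\in Y_1$, so $N(y)\cap C=\{v_1,v_2,v_5\}$. I would then place one cop at $u$ (pinning the robber inside $C$ unless it escapes into a controlled set) and maneuver the second cop among $v_1,\dots,v_5$ and the vertices of $A\cup B\cup X\cup Y\cup T$ so as to force the robber into the reach of Lemma~\ref{prop}\ref{robpos}.

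The key intermediate steps, in order, would be: (1) record that under $L\cup Z=\emptyset$ the set $A_i\cup A_{i+2}\cup A_{i-2}\cup B_i\cup X_{i+1}\cup X_{i-1}\cup T$ is a clique (Lemma~\ref{prop}\ref{Xi+1Xi+1Aiclq}) and, by the $4K_1$-freeness applied to $\{a,b,v_{i+1},v_{i-1}\}$-type quadruples, that most pairs of these "peripheral" sets are complete or anticomplete in a rigid pattern; (2) show, as in the earlier claims, that we may assume certain $Y_j$-to-$Y_{j\pm2}$ or $Y_j$-to-$X$ adjacency configurations do not occur, each time by exhibiting a two-cop capture using one cop on $u$ and the other first on an appropriate $v_j$ to split $C$ and then chasing into $A\cup B\cup X\cup Y\cup T$ where Lemma~\ref{prop}\ref{robpos} finishes; (3) reduce to a clean "extremal" configuration of the $Y_i$'s (analogous to Claim~\ref{Zifinalind}) in which a short explicit cop schedule of length at most $3$--$4$ rounds captures the robber; (4) conclude $c(G)\le 2$, contradicting the standing assumption, hence $Y=\emptyset$ or $c(G)\le 2$ as claimed. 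Throughout, the capture argument has the same shape every time: place $D_1=u$, use $D_2$ to visit two consecutive $v_j$'s so that $p(R)$ is confined, then when the robber is forced to leave $C$ it lands in a set of the form covered by Lemma~\ref{prop}\ref{robpos}, and the roles of $D_1,D_2$ swap so that one of them sits on $v_j$ while the other returns to $u$.

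The main obstacle I expect is bookkeeping the adjacency structure between $Y$ and the other peripheral classes: unlike $Z$, which interacts with $C$ only through a $\{v_i,v_{i+2},v_{i-2}\}$ pattern, a vertex of $Y_i$ is adjacent to three consecutive cycle vertices $v_{i-1},v_i,v_{i+1}$, so it can dominate $N[v_i]\cap C$ for two different choices of the "center" and it sits inside more of the cliques in Lemma~\ref{prop}\ref{Xi+1Xi+1Aiclq}; consequently there are more sub-cases for where the robber can flee and the symmetry group acting on the configuration is smaller. The delicate point is to verify that in every surviving configuration of the $Y_i$'s (after discarding the ones killed off by two-cop captures) there really is a vertex $v_j$ together with a $Y$- or $X$-vertex adjacent to it from which Lemma~\ref{prop}\ref{robpos} applies on the robber's next move; I would handle this by a case split on which of the $Y_i$ are nonempty and, within each, on the edges between representative vertices $y_i'\in Y_i$, mirroring the $z_k'$ analysis of Claims~\ref{Z1Z3Z4ancom}--\ref{Zifinalind}. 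The routine parts — checking a given four-round cop schedule actually captures, and checking a given quadruple induces $4K_1$ — I would compress, citing Lemma~\ref{prop} each time rather than re-deriving neighborhoods.
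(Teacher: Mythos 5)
Your outline follows the paper's template at the top level --- invoke Lemma~\ref{LZemp} to assume $L\cup Z=\emptyset$, then eliminate bad configurations of the $Y_i$'s via explicit two-cop chases that use Lemma~\ref{prop} --- but the decisive part of the argument is left unexecuted, and the specific way you propose to finish diverges from what is actually needed. The paper's proof has exactly one structural claim about the $Y_i$'s: for every $i$, at least one of $Y_i$ and $Y_{i+2}\cup Y_{i-2}$ is \emph{empty} (proved by a concrete five-round schedule), which up to symmetry confines $Y$ to $Y_1\cup Y_2$. After that, the endgame does \emph{not} mirror the $Z$-analysis of Claims~\ref{Z1Z3Z4ancom}--\ref{Zifinalind}: there is no anticompleteness claim among the $Y_i$'s and no analysis of edges between representatives $y_i'\in Y_i$, because a $Y_i$-vertex meets three consecutive cycle vertices and the corresponding $4K_1$ arguments do not go through the same way. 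Instead the paper splits on whether both $Y_1,Y_2$ are nonempty; when one is empty it reduces to the assumption $N(v_3)\cap N(u)=N(v_4)\cap N(u)=\emptyset$ (Claims~\ref{Nv3Nv4emp} and \ref{Nv3emp}) and then corners the robber on the $C_5$, and when both are nonempty it first shows $X_4=\emptyset$ may be assumed (Claim~\ref{I2X4emp}) and then runs a five-round chase ending with the cops on $y_2$ and $y_1$. These reductions --- which sets can be assumed empty so that the robber, once forced off $C$, lands in a region covered by Lemma~\ref{prop}\ref{robpos} --- are the real content of the lemma, and your proposal only asserts that such reductions will exist ``in every surviving configuration'' without exhibiting them.

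Concretely, the gap is this: Lemma~\ref{prop}\ref{robpos} only captures a robber already sitting in $A_j\cup A_{j\pm1}\cup B_{j\pm2}\cup X_j\cup Y_j\cup T\cup\{v_j\}$ for a $v_j$ adjacent to a cop; the work is in designing move sequences that provably force the robber into such a set while the two cops keep it from slipping back around the $C_5$ or into $X$- and $B$-classes not covered by the current $j$. Your plan to resolve this by a case split on edges among $y_i'$ representatives has no supporting argument (and the paper never needs such edges), so as written the proof does not close; to repair it you would need either the paper's Claims~\ref{Nv3Nv4emp}, \ref{Nv3emp} and \ref{I2X4emp} (or equivalents) together with the explicit schedules verifying capture, or some other mechanism that you have not supplied for handling the escape routes through $X_4$ and through $N(v_3)\cap N(u)$, $N(v_4)\cap N(u)$.
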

	\begin{proof}
		By Lemma~\ref{LZemp}, we have $L\cup Z=\emptyset$.
		Next we claim the following:
		\begin{claim}\label{YiYi+2Yi-2oneemp}
			We may assume that for all $i\in \{1,2,3,4,5\}$, at least one of $Y_i$ and $Y_{i+2}\cup Y_{i-2}$ is empty.
		\end{claim}
		\begin{proof}[Proof of Claim~\ref{YiYi+2Yi-2oneemp}]
		For the sake of the contradiction, assume that there is a $i\in\{1,2,3,4,5\}$ such that both $Y_i$ and $Y_{i+2}\cup Y_{i-2}$ are nonempty. Without loss of generality, assume that $i=1$. Then $Y_1$ and $Y_3\cup Y_4$ are nonempty. Note that by Lemma~\ref{prop}(i), if there is a cop in $u$, $p_j(R)\in C$. Let $y_1\in Y_1$ and $y^*\in Y_3\cup Y_4$. By symmetry, we may assume $y^*\in Y_3$. Now $p_1(D_1)=y_1$, $p_1(D_2)=u$. Then since $N(y_1)=\{v_5,v_1,v_2\}$, we have $p_1(R)\in \{v_3,v_4\}$. First suppose $p_1(R)=v_3$. Then  $p_2(D_1)=u$ and  $p_2(D_2)=y^*$. Now since $p_1(R)\notin N(u)$, we have $p_2(R)\in \{v_2,v_3,v_4\}$ and thus the robber $R$ will be captured in the round $2$ of the game. So $c(G)\leq 2$. So we have $p_1(R)=v_4$. Then $p_2(D_1)=v_5$ and $p_2(D_2)=u$. So $p_2(R)=v_3$. Next $p_3(D_1)=v_4$ and $p_3(D_2)=u$, and thus $p_3(R)=v_2$. Next $p_4(D_1)=y^*$ and $p_4(D_2)=u$, and thus $p_4(R)=v_1$. Next $p_5(D_1)=u$ and $p_5(D_2)=y_1$. Thus $p_5(R)\in \{v_1,v_2,v_5\}\subseteq N(y_1)$. So the robber $R$ will be captured in the round $5$ by $D_2$. Hence $c(G)\leq 2$. So we may assume that at least one of $Y_i$ and $Y_{i+2}\cup Y_{i-2}$ is empty.
		\end{proof}
		By Claim~\ref{YiYi+2Yi-2oneemp}, we may assume that $Y_1\cup Y_2\neq \emptyset$ and $Y_3\cup Y_4\cup Y_5=\emptyset$. Now we split the proof into two cases.

	\noindent\textbf{Case 1:} Either $Y_1=\emptyset$  or $Y_2=\emptyset$.
	
		We may assume that $Y_2$ is empty. Let $y_1\in Y_1$. We start with $p_1(D_1)=y_1$ and $p_1(D_2)=u$. So by Lemma~\ref{prop}(i), we have $p_1(R)\in \{v_3,v_4\}$. Next we claim the following.
		\begin{claim}\label{Nv3Nv4emp}
			We may assume that either $N(v_3)\cap N(u)= \emptyset$ or $N(v_4)\cap N(u)=\emptyset$.
		\end{claim}
		\begin{proof}[Proof of Claim~\ref{Nv3Nv4emp}]
		For the sake of contradiction, assume that $N(v_3)\cap N(u)\neq \emptyset$ and $N(v_4)\cap N(u)\neq \emptyset$. Let $v\in N(v_3)\cap N(u)$ and $w\in N(v_4)\cap N(u)$. Suppose $p_1(R)=v_3$. Next $p_2(D_1)=y_1$ and $p_2(D_2)=v$. Then by Lemma~\ref{prop}(i) and (iii), we have $p_2(R)=v_4$. Next $p_3(D_1)=u$ and $p_3(D_2)=v_3$ which implies that $p_3(R)=v_5$. Next $p_4(D_1)=u$ and $p_4(D_2)=v_4$ which implies that $p_4(R)=v_1$. Now $p_5(D_1)=y_1$ and $p_5(D_2)=w$. Then $p_5(R)\in A_1\cup B_3\cup B_4\cup X_2\cup X_5\cup Y_1$. Then by Lemma~\ref{prop}(iii), we have $c(G)\leq 2$. Similarly, when $p_1(R)=v_4$, we have $c(G)\leq 2$. So we assume that either $N(v_3)\cap N(u)=\emptyset$ or $N(v_3)\cap N(u)= \emptyset$.
		\end{proof}

		\begin{claim}\label{Nv3emp}
			We may assume that $N(v_3)\cap N(u)= \emptyset$. Likewise $N(v_4)\cap N(u)= \emptyset$.
		\end{claim}
		\begin{proof}[Proof of Claim~\ref{Nv3emp}]
		For the sake of contradiction, assume that $N(v_3)\cap N(u)\neq \emptyset$. Let $v\in N(v_3)\cap N(u)$. Then by Claim~\ref{Nv3Nv4emp}, $N(v_4)\cap N(u)=\emptyset$. Suppose $p_1(R)=v_3$. Next $p_2(D_1)=y_1$ and $p_2(D_2)=v$. Then by Lemma~\ref{prop}(iii), we have $p_2(R)=v_4$. Next $p_3(D_1)=y_1$ and $p_3(D_2)=v_3$ which implies that $p_3(R)\in \{v_3,v_4,v_5\}$. So the robber $R$ will be captured in the round $3$ of the game. Now suppose $p_1(R)=v_4$. Then $p_2(D_1)=y_1$ and $p_2(D_2)=v$. Then by Lemma~\ref{prop}(iii), we have $p_2(R)=v_4$ and by earlier argument, we conclude that $c(G)\le 2$.
		\end{proof}
		Now by symmetry, we may assume that $p_1(R)=v_3$. Next $p_2(D_1)=v_2$ and $p_2(D_1)=y_1$. Since Claim~\ref{Nv3emp}, $N(v_3)\cap N(u)= \emptyset$, we have $p_2(R)=v_4$. Next $p_3(D_1)=v_3$ and $p_3(D_2)=y_1$ and again by Claim~\ref{Nv3emp}, we have $p_3(R)\in \{v_3,v_4,v_5\}$. Hence the robber $R$ will be captured in the round $3$ of the game. Hence $c(G)\leq 2$.

	\noindent\textbf{Case 2:} $Y_1\neq\emptyset$ and $Y_2\neq\emptyset$.
	
		Let $y_1\in Y_1$ and $y_2\in Y_2$. We start with $p_1(D_1)=y_1$ and $p_1(D_2)=y_2$. By Lemma~\ref{prop}(iii), we have $p_1(R)\notin A\cup B\cup X_1\cup X_2\cup X_3\cup X_5$. So $p_1(R)\in X_4\cup \{v_4\}$. Next we have following:
		\begin{claim}\label{I2X4emp}
			We may assume that $X_4=\emptyset$.
		\end{claim}
		\begin{proof}[Proof of Claim~\ref{I2X4emp}]
		For the sake of contradiction, assume that $X_4\neq \emptyset$. Let $x\in X_4$. Now $p_2(D_1)=v_5$ and $p_2(D_2)=u$, and thus by Lemma~\ref{prop}(i), we have $p_2(R)=v_3$. Next $p_{3}(D_1)=v_5$ and $p_{3}(D_2)=y_2$. Then $p_{3}(R)\in A_3\cup B_1\cup B_5\cup X_2$. If $p_{3}(R)\in A_3\cup B_1$, then $p_{4}(D_1)=v_4$ and $p_{4}(D_2)=u$. Therefore $p_{4}(R)\subseteq N[v_4]$ and the robber $R$ will be captured in the round $4$ of the game. If $p_{3}(R)\in X_4$, then $p_{4}(D_1)=x$ and $p_{4}(D_2)=u$. By Lemma~\ref{prop}(ii), $p_{4}(R)\in\{v_3,v_5\}\subseteq N[x]$. So the robber $R$ will be captured in the round $4$ of the game. Finally, $p_{3}(R)\in B_5\cup X_2$. Then $p_{4}(D_1)=x$ and $p_{4}(D_2)=u$ which implies that $p_{4}(R)\in \{v_1,v_2\}$. Now $p_{5}(D_1)=u$. If $p_{4}(R)=v_1$, then $p_{5}(D_2)=y_1$ and if $p_{4}(R)=v_2$, then $p_{5}(D_2)=y_2$. Since $N(v_1)\cap C\subseteq N(y_1)$ and $N(v_2)\cap C\subseteq N(y_2)$, the robber $R$ will be captured in the round $5$ of the game. So we assume that $X_4=\emptyset$. 
		\end{proof}
		By Claim~\ref{I2X4emp}, we assume that $X_4=\emptyset$. So $p_1(R)=v_4$. Now $p_2(D_1)=v_5$ and $p_2(D_2)=u$ which implies that $p_2(R)=v_3$. Next we have $p_3(D_1)=v_4$ and $p_3(D_2)=u$ which implies that $p_3(R)=v_2$. Next  $p_4(D_1)=v_3$ and $p_3(D_2)=u$ which implies that $p_4(R)=v_1$. Now $p_5(D_1)=y_2$ and $p_5(D_2)=y_1$ which implies that $p_5(R)\in A_1\cup B_3\cup B_4\cup X_2\cup X_5\cup Y_1\cup Y_2$. So by Lemma~\ref{prop}(iii), we have $c(G)\leq 2$. 
	\end{proof}
	\setcounter{claim}{0}
%	\begin{lemma}\label{ABXcop}
%		If $A\cup B\cup X\neq \emptyset$, then $c(G)\leq 2$.
%	\end{lemma}
\begin{lemma}\label{upthcomp}
			Let $G$ be a $4K_1$-free graph with $\Delta(G)=n-6$. Then $c(G)\leq 2$.
\end{lemma}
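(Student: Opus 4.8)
The plan is to reduce, via the machinery already set up, to a tightly constrained structure and then exhibit an explicit two-cop strategy. First I would fix a vertex $u$ with $d_G(u)=n-6$; by Theorem~\ref{cn2char} we may assume $G[V(G)\setminus N[u]]$ is an induced $C_5$, written $C=v_1v_2v_3v_4v_5$, and by Lemmas~\ref{LZemp} and \ref{Yempcop} we may further assume $L=Z=Y=\emptyset$. Then Lemma~\ref{prop}\ref{unbCintatm3} gives $V(G)=A\cup B\cup C\cup X\cup T\cup\{u\}$, and by Lemma~\ref{prop}\ref{Xi+1Xi+1Aiclq} each set $Q_i:=A_i\cup A_{i+2}\cup A_{i-2}\cup B_i\cup X_{i+1}\cup X_{i-1}\cup T$ is a clique. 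I would record two facts. (a) Any two vertices of $C$ at cyclic distance $2$, say $v_{i-1}$ and $v_{i+1}$, dominate $C$, and $(N(v_{i-1})\cup N(v_{i+1}))\cap N(u)=N(u)\setminus Q_i$; hence with one cop on $v_{i-1}$ and one on $v_{i+1}$ the robber is forced off $C$ into the clique $Q_i\cup\{u\}\subseteq N[u]$ (it is $Q_i$ together with $u$, a common neighbour of $Q_i$). (b) Since $Y=\emptyset$, the sets $S_j:=A_j\cup A_{j+1}\cup A_{j-1}\cup B_{j+2}\cup B_{j-2}\cup X_j\cup T\cup\{v_j\}$, $1\le j\le 5$, cover $V(G)\setminus\{u\}$, and by Lemma~\ref{prop}\ref{robpos} the robber is caught as soon as both cops lie in $N(u)$ with one of them adjacent to $v_j$ and the robber in $S_j$.

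The strategy I would run is: start the cops on $v_1$ and $v_3$, trapping the robber in the clique $Q_2\cup\{u\}$, and then walk the two cops off $C$ and into $N(u)$ one at a time, keeping the robber confined to a clique $Q_m\cup\{u\}$ throughout. A cop on $v_{i-1}$ whose partner holds $v_{i+1}$ can move to any vertex $w$ with $\{w,v_{i+1}\}$ still dominating $C$; a vertex of $B_{i+1}$ is such a $w$ --- it covers the two cycle vertices $v_{i-1},v_{i-2}$ that $v_{i+1}$ misses, and it lies in $N(u)$ --- so this cop can march $v_{i-1}\to w$ and then deeper into $N(u)$. When the relevant $B$-set is empty, the corresponding stretch of $C$ has very few neighbours in $N(u)$, which constrains the robber's return routes; there I would instead rotate the cop pair (the legal move $v_{i-1},v_{i+1}\mapsto v_i,v_{i+2}$ preserves domination of $C$) until a usable $B$-set appears, or until the graph is small enough for a direct two-cop capture. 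Once both cops sit inside $N(u)$, I would pick $j$ with the robber in $S_j$ (possible by (b), since a robber at $u$ is anyway threatened by two cops in $N(u)$) and finish with Lemma~\ref{prop}\ref{robpos}.

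The hard part will be the instant a cop leaves $C$: one or two cycle vertices briefly lose their guard, and since (with $Y=Z=L=\emptyset$) no two vertices of $N(u)$ can dominate $C$, the robber always has a potential door back onto the cycle. Handling this seems to require a case analysis on which of the classes $A_i,B_i,X_i$ are nonempty --- i.e. on the bipartite adjacency pattern between $N(u)$ and $C$ --- arguing in each case that either the escorting cop's route keeps the exposed edge of $C$ covered, or that a robber who escapes back to $C$ lands where the next rotation re-traps him in a strictly smaller clique $Q_m\cup\{u\}$; recurring appeals to $4K_1$-freeness, used to forbid a fourth mutually non-adjacent vertex and thereby force suspect non-edges between the classes to be edges, should keep this analysis finite. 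The residual degenerate configurations (few classes nonempty, so that $G$ is essentially $C_5$ with a clique hung onto it) can be dispatched by hand.
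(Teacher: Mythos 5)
Your reduction is the same as the paper's (fix $u$ with $d(u)=n-6$, invoke Theorem~\ref{cn2char} to get the induced $C_5$, then Lemmas~\ref{LZemp} and~\ref{Yempcop} to assume $L=Y=Z=\emptyset$), and your facts (a) and (b) are correct: with $L=Y=Z=\emptyset$ the vertices of $N(u)$ nonadjacent to both $v_{i-1}$ and $v_{i+1}$ are exactly $Q_i=A_i\cup A_{i+2}\cup A_{i-2}\cup B_i\cup X_{i+1}\cup X_{i-1}\cup T$, which is a clique by Lemma~\ref{prop}\ref{Xi+1Xi+1Aiclq}, and Lemma~\ref{prop}\ref{robpos} is the right finishing device. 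But the proof stops exactly where the real work begins. Everything after ``the hard part will be the instant a cop leaves $C$'' is a statement of intent, not an argument: you do not specify the escort routes, do not prove that a robber who slips back onto $C$ is re-trapped in a ``strictly smaller'' clique (there is no natural well-ordering here --- the robber can oscillate between $C$ and $N(u)$, and your rotation move $v_{i-1},v_{i+1}\mapsto v_i,v_{i+2}$ can be answered by the robber rotating along $C$ indefinitely), and you do not carry out the case analysis on which of the $A_i,B_i,X_i$ are nonempty that you yourself identify as necessary. In a cops-and-robbers argument the burden is precisely to exhibit a strategy and verify capture against every robber response; ``recurring appeals to $4K_1$-freeness should keep this analysis finite'' and ``can be dispatched by hand'' do not discharge it. There is also a timing slip in your use of (b): Lemma~\ref{prop}\ref{robpos} needs a cop already adjacent to the particular $v_j$ whose set $S_j$ contains the robber's \emph{current} position, so ``once both cops sit inside $N(u)$, pick $j$'' is not automatically available --- the robber chooses its class after seeing where the cops went.

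For comparison, the paper resolves exactly this middle part by structural elimination rather than by escorting cops off $C$: it first shows $A$ is a clique (Claim~\ref{Aclq}), handles $B\cup X=\emptyset$ directly by parking one cop on a vertex of $A$ forever (Claim~\ref{unbXnemp}), then gives explicit short strategies showing one may assume no two consecutive $X_i$ are nonempty (Claim~\ref{unbXiXi+1nemp}) and in fact $X=\emptyset$ (Claim~\ref{unbXemp}), and finally, with only $A,B,T$ left, finishes in two rounds: cops at $v_1,u$ force the robber to $\{v_3,v_4\}$, and then cops at $v_2$ and some $b\in B_1$ cover $A_3\cup B_5\cup\{v_2,v_3,v_4\}$, which by Lemma~\ref{prop}\ref{Xi+1Xi+1Aiclq} contains every vertex the robber can occupy. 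If you want to salvage your approach, you would need to supply, for each nonemptiness pattern of the classes, a concrete escort/rotation schedule together with a termination argument; as written, the proposal has a genuine gap.
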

	\begin{proof}
	By Lemma~\ref{LZemp} and by Lemma~\ref{Yempcop}, $L\cup Y\cup Z=\emptyset$. To proceed further we claim the following.
		\begin{claim}\label{Aclq}
			$A$ is a clique.
		\end{claim}
	\begin{proof}[Proof of Claim~\ref{Aclq}]
	 For the sake of contradiction, assume that there are nonadjacent vertices in $A$, say $a$ and $b$. By Lemma~\ref{prop}(ii), we may assume that $a\in A_i$ and $b\in A_{i+1}\cup A_{i-1}$. By symmetry, we may assume that $b\in A_{i+1}$. Then $\{a,b,v_{i+2},v_{i-1}\}$ induces a $4K_1$, a contradiction. Thus $A$ is a clique. This proves Claim~\ref{Aclq}.
	 \end{proof}
	 
	\begin{claim}\label{unbXnemp}
			We may assume that $B\cup X\neq \emptyset$.	
	\end{claim}
	\begin{proof}[Proof of Claim~\ref{unbXnemp}]
	For the sake of contradiction, assume that $B\cup X=\emptyset$. Since $G$ is connected, we have $A\neq \emptyset$. Let $a\in A$. We may assume that $a\in A_1$. For all $i\in \mathbb{N}$, $p_i(D_1)=a$. Thus by Lemma~\ref{prop}(ii) and Claim~\ref{Aclq}, the  $p_i(R)\in C\setminus \{v_1\}$. Now $p_1(D_2)=v_3$ which implies $p_1(R)=v_5$. Next $p_2(D_2)=v_4$, then since $N(v_5)\subset N(a)\cup N(v_4)\cup \{v_4\}$, the robber $R$ will be captured in the round $2$ of the game. Thus $c(G)\leq 2$. So we may assume that $B\cup X\neq \emptyset$. This proves Claim~\ref{unbXnemp}.
	\end{proof}
		
	\begin{claim}\label{unbXiXi+1nemp}
		We may assume that for all $i\in \{1,2,3,4,5\}$, either $X_i=\emptyset$ or $X_{i+1}=\emptyset$.  	
	\end{claim}
	\begin{proof}[Proof of Claim~\ref{unbXiXi+1nemp}]
	For the sake of contradiction, assume that there is an $i\in\{1,2,3,4,5\}$ such that $X_i\neq\emptyset$ and $X_{i+1}\neq \emptyset$. Without loss of generality, we may assume that $i=1$. Then $X_1$ and $X_2$ are nonempty. Let $x_1\in X_1$ and $x_2\in X_2$. Let $p_1(D_1)=x_1$ and $p_1(D_2)=x_2$. Since by Lemma~\ref{prop}(ii), $\{x_1\}$ is complete to $X_3\cup X_4$ and $\{x_2\}$ is complete to $X_4\cup X_5$, by Lemma~\ref{prop}(iii), we have $p_1(R)=v_4$. Next $p_2(D_1)=v_5$ and $p_2(D_2)=x_2$. Now $p_2(R)\in X_3$; otherwise $p_2(R)\in N(v_4)\setminus X_3\subseteq N[x_2]\cup N[v_5]$, and thus the robber $R$ will be captured in the round $2$ of the game. So $p_2(R)=x$, for some $x\in X_3$. Next $p_3(D_1)=x_1$. 
		
	Now suppose $(N(v_4)\cap N(u))\setminus (X_3\cup B_2)\neq \emptyset$.   Let $x'\in (N(v_4)\cap N(u))\setminus (X_3\cup B_2)$=$A_4\cup B_1\cup X_5$. By Lemma~\ref{prop}(ii), $x_2x'\in E(G)$. We choose  $p_3(D_2)=x'$. Since $p_3(R)\in A\cup B\cup X\cup \{v_2,v_4\}$, by Lemma~\ref{prop}(ii) and (iii), we have $c(G)\leq 2$.  
		
	So we assume that  $(N(v_4)\cap N(u))\setminus (X_3\cup B_2)= \emptyset$. Then $p_3(D_2)=x_2$. Again since $p_3(R)\in A\cup B\cup X\cup \{v_2,v_4\}$, by Lemma~\ref{prop}(ii) and (iii), we have $p_3(R)= v_4$.  Next $p_4(D_1)=x_1$ and $p_4(D_2)=v_3$. Since by Lemma~\ref{prop}(ii), $\{x_1\}$ is complete to $X_3\cup B_2$, the robber $R$ will be captured in the round $4$ of the game. So we may assume that either $X_i=\emptyset$ or $X_{i+1}=\emptyset$.
	\end{proof}

		\begin{claim}\label{unbXemp}
			We may assume that  $X=\emptyset$.  	
		\end{claim}
	\begin{proof}[Proof of Claim~\ref{unbXemp}]
	For the sake of contradiction, assume that $X\neq \emptyset$. So by Claim~\ref{unbXiXi+1nemp}, we may assume that $X_2\cup X_5\neq \emptyset$ and $X_1\cup X_3\cup X_4=\emptyset$. Suppose $X_2\neq \emptyset$. Let $x_2\in X_2$. Now we start with, $p_1(D_1)=v_5$ and $p_1(D_2)=u$. Then $p_1(R)\in \{v_2,v_3\}$. Suppose $p_1(R)=v_2$. Then $p_2(D_1)=v_1$ and $p_2(D_2)=x_2$. Then $p_2(R)\in A_2\cup  B_4\cup B_5 $, otherwise, the robber $R$ will be captured in the round $2$ of the game. Let $p_2(R)=v$ for some $v\in A_2\cup  B_4\cup B_5 $. Next $p_3(D_1)=v_2$ and $p_3(D_2)=u$. Then since $X_1\cup X_3=\emptyset$, by Lemma~\ref{prop}(i) and (ii), $p_3(R)\in \{v_1,v_3\}\subseteq  N(v_2)$ and thus the robber $R$ will be captured in the round $3$ of the game. So $c(G)\leq 2$. So we assume that $p_1(R)=v_3$. Then we choose $p_2(D_1)=v_4$ and $p_2(D_2)=u$. Then $p_2(R)=v_2$ and the rest of the proof is similar to the earlier one. This proves Claim~\ref{unbXemp}. 
	\end{proof}
	
		So by Claim~\ref{unbXnemp} and Claim~\ref{unbXemp}, $B\neq \emptyset$ and without loss of generality, we may assume that $B_1\neq \emptyset$. Let $b\in B_1$. Then we start with $p_1(D_1)=v_1$ and $p_1(D_2)=u$. Then we have $p_1(R)\in \{v_3,v_4\}$. Suppose $p_1(R)=v_3$. Then $p_2(D_1)=v_2$ and $p_2(D_2)=b$. Then by Lemma~\ref{prop}(ii), $p_2(R)\in A_3\cup B_5\cup \{v_2,v_3,v_4\}$. Since $A_3\cup B_5\cup \{v_2,v_3,v_4\}\subseteq N[v_2]\cup N[b]$, the robber $R$ will be captured in the round $2$ of the game. Hence $c(G)\leq 2$.  
		\end{proof}
%	\begin{theorem}\label{upthcomp}
%		Let $G$ be a $4K_1$-free graph with $\Delta(G)=n-6$. Then $c(G)\leq 2$.
%	\end{theorem}
%	\begin{proof}
%		The proof follows from \cref{LZemp}, \cref{Yempcop}, and \cref{ABXcop}.
%	\end{proof}
\begin{proof}[\textbf{Proof of Theorem~\ref{upthcop}}]
  Let $\overline{H}$ be the complement of the Shrikhande graph (see Section~\ref{Shriprop}). Recall that $\overline{H}$ has $16$ vertices and every vertex of $\overline{H}$ has degree  $16-6-1=9$. Also by Theorem~\ref{4k13copeg}, $c(\overline{H})=3$. So we have graph with degree $n-k-1$, where $n=16$ and $k=6$. By Theorem~\ref{cn2char} and Lemma~\ref{upthcomp}, we have $ut($Forb$(4K_1))=6$.
\end{proof}

	\section{The lower threshold degree of Forb$(4K_1)$}\label{LowerThreshold}
	First, we define some terminology used in this section. A maximal connected subgraph of $G$ is called \emph{component} of $G$. A vertex $v$ is called a \emph{cut vertex} of a connected graph $G$ if $G\setminus \{v\}$ is not connected. We now present some lemmas that will be helpful to prove Theorem~\ref{lothcop}.
	
	%In this subsection, we give an upper bound of the lower threshold degree of Forb$(4K_1)$.
	%We use the following theorem from literature in this subsection.

%	\begin{theorem}[\cite{Petr23}]\label{diam4copnum}
%		\sout{Let $G$ be a connected $4K_1$-free graph with $diam(G)\ge 4$. Then $c(G)\leq 2$.}
%	\end{theorem}
	
	\begin{lemma}\label{4k1mindeg2}
		Let $G$ be a connected $4K_1$-free graph with $\delta(G)=1$. Then $c(G)\leq 2$.
	\end{lemma}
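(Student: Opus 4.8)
The plan is to exploit the degree-1 vertex to force the robber into a very constrained part of the graph, and then use the $4K_1$-freeness to show that this part is essentially dominated by two vertices. Let $v$ be a vertex of degree $1$ in $G$ and let $w$ be its unique neighbor. First I would place one cop, say $D_1$, permanently at $w$. This immediately removes $v$ and $w$ from the robber's reach: the robber can never occupy $v$ (it would be adjacent to $w=p(D_1)$, hence captured next round) and never occupy $w$ itself. So the robber is confined to $G':=G\setminus\{v,w\}$, and moreover the robber can never use $w$ as an intermediate vertex, so effectively the robber plays on the induced subgraph $G'$ with $D_1$ guarding $w$.

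Next I would analyze $G'$. The key structural observation is that $G'$ is $3K_1$-free: if $\{a,b,c\}$ were an independent set in $G'=G\setminus N[v]$, then since $v$ is nonadjacent to each of $a,b,c$ (as $d_G(v)=1$ and $w\notin\{a,b,c\}$), the set $\{v,a,b,c\}$ would be a $4K_1$ in $G$, contradiction. A $3K_1$-free graph has independence number at most $2$, equivalently its complement is triangle-free. I would then recall (or cite Observation~\ref{obs}) that a $3K_1$-free graph $G'$ satisfies $\gamma(G')\le 2$; indeed, take any vertex $x$, and if $x$ does not dominate $G'$ pick a non-neighbor $y$ of $x$ — then $\{x,y\}$ is dominating, for any vertex $z$ is adjacent to $x$ or $y$ lest $\{x,y,z\}$ be an independent triple. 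So $G'$ has a dominating set $\{x,y\}$ of size at most $2$.

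Now I would send the second cop $D_2$ to a vertex of this dominating set and use $D_1$ (which is currently parked at $w$, but $w$ is adjacent to some vertex — I may need to be slightly careful) — actually the cleanest route: since $\gamma(G')\le 2$, and we showed the robber is confined to $G'$, it suffices to place two cops on $\{x,y\}$ and let them sit there; on the robber's move he is forced onto a vertex of $N_{G'}[x]\cup N_{G'}[y]=V(G')$, so he is in the closed neighborhood of a cop and gets captured. But we have already committed $D_1$ to $w$. The fix is to note we never actually needed $D_1$ at $w$ for the whole game: we only need the confinement argument to get the robber into $G'$ on round $2$, after which we can re-route. Alternatively, and more simply, I would just observe directly that $\{w,x\}$ or a similar pair of size $\le 2$ dominates all of $G$: $w$ dominates $v$, and $\{x,y\}$ dominates $G'=V(G)\setminus\{v,w\}$, so $\{w,x,y\}$ dominates $G$ — that is size $3$, not $2$. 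To squeeze to $2$, I would argue that one of $x,y$ can be taken to be a neighbor of $w$ or that $w$ itself lies in a dominating pair of $G'$ together with one more vertex; concretely, since $w\in V(G')$, apply the $\gamma(G')\le 2$ argument starting from the vertex $w$: either $w$ dominates $G'$, giving the dominating set $\{w\}$ of $G$ minus... no, $\{w\}$ plus nothing dominates $v$ too, so $\gamma(G)\le 2$; or there is $y\in V(G')$ with $wy\notin E$, and then $\{w,y\}$ dominates $G'$ and also $v$, so $\gamma(G)\le 2$. Either way $\gamma(G)\le 2$, hence by Observation~\ref{obs}, $c(G)\le 2$.

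The main obstacle is precisely this last bookkeeping step — making sure the cop stationed at $w$ to enforce confinement is the same vertex that appears in the size-$2$ dominating set, so that we genuinely use only two cops rather than three. The resolution above (bootstrapping the $\gamma(G')\le 2$ argument from $w$ itself, and noting $w$ already dominates the pendant vertex $v$) handles this cleanly and in fact shows the stronger statement $\gamma(G)\le 2$, from which $c(G)\le 2$ is immediate by Observation~\ref{obs}; no explicit cop-movement strategy is even needed. I would present the proof in that streamlined form: establish $G\setminus N[v]$ is $3K_1$-free, deduce every $3K_1$-free graph has domination number at most $2$ realized by a pair containing any prescribed vertex, apply it with the prescribed vertex $w$, and conclude.
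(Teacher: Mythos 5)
Your proposal has a genuine gap, and in fact its final streamlined claim is false. The correct part is the observation that $G':=G\setminus N[v]=G\setminus\{v,w\}$ is $3K_1$-free (any independent triple there extends by $v$ to a $4K_1$), hence $\gamma(G')\le 2$. But the step where you ``bootstrap the $\gamma(G')\le 2$ argument from $w$ itself'' is broken: $w\notin V(G')$ by your own definition, so the $3K_1$-freeness of $G'$ says nothing about independent triples \emph{containing} $w$, and such a triple $\{w,y,z\}$ cannot be completed to a $4K_1$ by $v$, since $v$ is adjacent to $w$. So from ``$y$ is a non-neighbor of $w$ in $G'$'' you cannot conclude that $\{w,y\}$ dominates $G'$. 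Indeed the conclusion $\gamma(G)\le 2$ is false in general: take $C_5$ on $\{x_1,\dots,x_5\}$, add $w$ adjacent only to $x_1$, and add a pendant vertex $v$ adjacent only to $w$. This graph is connected, has $\delta=1$, and is $4K_1$-free (its independence number is $3$), yet no two vertices dominate it ($N[w]=\{v,w,x_1\}$ and no closed neighborhood in the $C_5$ covers the remaining four cycle vertices, while a pair containing $v$ would need a single vertex dominating all of the $C_5$), so $\gamma(G)=3$. Your earlier cop-strategy sketch (park a cop at $w$ to confine the robber to $G'$, then use a dominating pair of $G'$) genuinely needs three cops as you noticed, and the proposed ``re-route'' fix is not worked out; what your argument actually yields is only $\gamma(G)\le 3$, hence $c(G)\le 3$, which is weaker than the lemma.

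For comparison, the paper avoids domination of the whole graph entirely: it first disposes of the $P_5$-free case by citing $c(\textup{Forb}(P_5))=2$, and otherwise fixes an induced $P_5$ $u_1u_2u_3u_4u_5$, notes that $4K_1$-freeness forces every vertex outside it to be adjacent to one of $u_1,u_3,u_5$, and then gives explicit two-cop strategies according to whether the degree-$1$ vertex lies on the $P_5$ or not (in the latter case it is anticomplete to $\{u_2,u_4\}$ and to everything outside the $P_5$, so cops on $u_2,u_4$ pin the robber onto the pendant vertex and capture it). If you want to salvage your approach, you would need an actual two-cop pursuit argument on $G'$ (using $\gamma(G')\le 2$ or $c(G')\le 2$) that also handles the robber hiding on $v$ or $w$ without permanently tying a cop to $w$; that is exactly the bookkeeping your writeup flags but does not resolve.
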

	\begin{proof}
		If $G$ is $P_5$-free, then by Theorem~\ref{p5freecopnum2}, $c(G)\le 2$. So we may assume that $G$ contains a $P_5$ induced by the vertex set $X=\{u_1,u_2,u_3,u_4,u_5\}$ and the edge set $\{u_1u_2,u_2u_3,u_3u_4,u_4u_5\}$. Note that since for any $a\in V(G)\setminus X$, $\{a,u_1,u_3,u_5\}$ does not induce a $4K_1$, every vertex of $V(G)\setminus X$ must be adjacent to at least one vertex from $\{u_1,u_3,u_5\}$. Let $u$ be the vertex with degree $1$. Now we will give a strategy to capture the robber using two cops. Let $D_1$ and $D_2$ be the cops and $R$ be the robber. First suppose $u\in X$. By symmetry, we may assume that $u=u_1$. So $\{u_1\}$ is anticomplete to $V(G)\setminus X$. We start with $p_1(D_1)=u_3$ and $p_1(D_2)=u_5$. Clearly, $p_1(R)\notin X\setminus \{u_1\}$. Also we have $V(G)\setminus X\subseteq N(u_3)\cup N(u_5)$. This implies that $p_1(R)=\{u_1\}$. Next $p_2(D_1)=u_2$ and $p_2(D_2)=u_5$. Thus the robber $R$ will be captured by the cop $D_1$ in the round $2$ of the game.
		
		Finally, we assume that $u\in V(G)\setminus X$. So $\{u\}$ is anticomplete to $(V(G)\setminus X)\setminus \{u\}$. Also $u$ is adjacent to exactly one vertex from $\{u_1,u_3,u_5\}$. Thus we have $\{u\}$ is anticomplete to $\{u_2,u_4\}$. We start with $p_1(D_1)=u_2$ and $p_1(D_2)=u_4$. So $p_1(R)\notin X$. Since for any $a\in  (V(G)\setminus X)\setminus \{u\}$, $\{a,u,u_2,u_4\}$ does not induce a $4K_1$, we have $(V(G)\setminus X)\setminus \{u\}\subseteq N(u_2)\cup N(u_4)$. So $p_1(R)=u$. Now since $N(u)\subseteq N(u_2)\cup N(u_4)$, the robber $R$ will be captured by the cop $D_1$ in the next round of the game.
	\end{proof}	
	
	\begin{lemma}\label{4K1cutvertex}
		Let $G$ be a connected $4K_1$-free graph such that there is a cut vertex in $V(G)$. Then $c(G)\leq 2$.
	\end{lemma}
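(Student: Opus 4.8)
The plan is to exploit the cut vertex to confine the robber to a small piece of the graph where one cop suffices to finish, while a second cop holds the cut vertex itself. Let $v$ be a cut vertex of $G$, and write $G \setminus \{v\}$ as a disjoint union of components $G_1, \dots, G_t$ with $t \geq 2$. First I would observe that $4K_1$-freeness imposes very strong structure: if three distinct components each contributed an isolated-from-the-others vertex, picking one vertex from each of three components plus $v$ (which is itself not forced to be adjacent to all of them) would give an independent set of size $4$ unless the components interact through $v$ in a restricted way. More precisely, since any vertex in $G_i$ and any vertex in $G_j$ ($i \neq j$) are nonadjacent, choosing one vertex from each of four distinct components yields a $4K_1$; hence $t \leq 3$. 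Moreover, at most one component can fail to be a clique: if $G_i$ and $G_j$ each contained a nonadjacent pair, those four vertices would induce a $4K_1$. So the generic picture is $G \setminus \{v\}$ having at most three components, of which at least two are cliques.

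Next I would set up the two-cop strategy. Place cop $D_1$ permanently on $v$; this splits the robber's world, because to move between components the robber must pass through $v$ and would be captured. So after round $1$ the robber is trapped inside the closed neighborhood (within $G$) of a single component $G_i$, i.e. in $V(G_i) \cup \{v\}$, and in fact cannot reach $v$ without being caught. Now $D_2$ must catch a robber confined to $G_i$. If $G_i$ is a clique, $D_2$ simply moves onto any vertex of $G_i$ and wins in one more step. The only remaining case is when the robber hides in the unique non-clique component, say $G_1$. Here I would use $4K_1$-freeness of $G$ again: the component $G_1$ together with $v$ forms a connected $4K_1$-free graph of smaller order (any independent set in $G[V(G_1) \cup \{v\}]$ is also independent in $G$), so one could try to induct — but that only gives $c \leq $ something, not necessarily $\leq 2$. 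Instead I expect the cleaner route is to note that a $4K_1$-free graph with a cut vertex, restricted to one side, has cop number at most $2$ via the earlier structural lemmas (Lemmas~\ref{4k1mindeg2} and the $P_5$-free result, Theorem~\ref{p5freecopnum2}), or directly: within $G_1 \cup \{v\}$ the robber occupies a $4K_1$-free graph, and I would argue this smaller graph either is $P_5$-free (so cop number $\leq 2$ by Theorem~\ref{p5freecopnum2}, and $D_2$ alone suffices while $D_1$ guards $v$) or still has a cut vertex and we recurse on strictly fewer vertices.

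Carrying this out carefully, the induction should be on $|V(G)|$: the base case is handled by Theorem~\ref{p5freecopnum2} or Lemma~\ref{4k1mindeg2}. In the inductive step, having fixed $D_1$ on the cut vertex $v$ and forced the robber into $V(G_i) \cup \{v\}$, I would look at $G' := G[V(G_i) \cup \{v\}]$, which is $4K_1$-free, connected, and on fewer vertices. If $G'$ has no cut vertex and is $P_5$-free, Theorem~\ref{p5freecopnum2} gives $c(G') \leq 2$; if $G'$ has minimum degree $1$, Lemma~\ref{4k1mindeg2} applies; otherwise $G'$ has its own cut vertex and the induction hypothesis gives $c(G') \leq 2$. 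In every case a single additional cop $D_2$ suffices to catch the robber inside $G'$ while $D_1$ stays on $v$ (note $D_2$'s moves within $G'$ are legal in $G$), so $c(G) \leq 2$.

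The main obstacle I anticipate is making the reduction to $G'$ genuinely decrease a well-founded parameter while staying inside the $4K_1$-free world and ensuring the second cop's winning strategy on $G'$ transfers to $G$ with $D_1$ pinned at $v$ — in particular, one must check that the robber truly cannot exploit $v$ (it is occupied or its neighborhood is covered) and that when $G'$ is a clique or is $P_5$-free the cop strategy does not secretly need to use $v$ as a vertex the robber is allowed to share with $D_1$. A secondary subtlety is the case analysis on the number and clique-ness of the components of $G \setminus \{v\}$; I would streamline it by proving up front that at most one component is not a clique and that there are at most three components, so that only the single non-clique component needs the inductive/structural argument and all others are dispatched immediately.
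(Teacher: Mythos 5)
Your reduction to a single non-clique component is sound (one vertex from each of two non-clique components would give a $4K_1$), and your handling of the clique components matches the paper's first case. But the treatment of the non-clique component has two genuine gaps. First, the trichotomy you propose for $G' = G[V(G_i)\cup\{v\}]$ --- either $P_5$-free, or minimum degree $1$, or possessing a cut vertex --- is not exhaustive: a $4K_1$-free graph can be $2$-connected, have minimum degree at least $2$, and still contain an induced $P_5$ (the complement of the Shrikhande graph is exactly such a graph), so the recursion has no case to fall into in general. Second, even in the cases where you do conclude $c(G')\le 2$, your assembly step claims that ``a single additional cop $D_2$ suffices to catch the robber inside $G'$ while $D_1$ stays on $v$.'' That does not follow: a winning two-cop strategy on $G'$ requires two mobile cops, and a cop frozen at $v$ cannot play the role of the second cop in that strategy; if instead you let $D_1$ leave $v$ to execute the $G'$-strategy, the robber may escape through $v$ into another component, which is precisely the obstacle you flag but never resolve. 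So the induction does not close with only two cops.

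The paper sidesteps both problems with a direct strategy rather than a recursion. Writing $V_1$ for the vertex set of the non-clique component and $V_2$ for everything else except the cut vertex $u$, it observes that $V_2$ must be a clique (a nonadjacent pair in $V_1$ plus a nonadjacent pair in $V_2$ would induce a $4K_1$). It then starts both cops inside $V_1$: one at $a'\in N(u)\cap V_1$ and one at a non-neighbor $c$ of $a'$ in $V_1$. Any robber start in $V_1$ is immediately caught, since together with a fixed $b'\in V_2$ the four vertices would otherwise induce a $4K_1$; hence the robber must start in the clique $V_2$, and one cop simply walks through $u$ into $V_2$ to finish, while the other guards against an immediate dominating-set win. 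You would need an argument of this direct flavor (exploiting that the complement of the non-clique component's side is a clique) rather than the recursive reduction to $c(G')\le 2$.
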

	\begin{proof}
	  Let $u\in V(G)$ be a cut vertex of $G$. By Lemma~\ref{4k1mindeg2}, we may assume that $\delta(G)\ge 2$. So every component of $G\setminus \{u\}$ contains at least two vertices.  Let $D_1$ and $D_2$ be two cops and $R$ be the robber when the game is played for $G$. First suppose that each component of $G\setminus\{u\}$ is a complete graph. Then we start with $p_1(D_1)=u$. Also for all $j\in \mathbb{N}$, $p_j(D_2)=u$. Let $p_1(R)=v$ and $Q$ be the component of $G\setminus\{u\}$ such that $v\in V(Q)$. Clearly, $uv\notin E(G)$; otherwise the robber $R$ will be captured in the round $1$ of the game. Let $w\in N(u)\cap V(Q)$. Then $p_2(D_1)=w$. Since $u$ is a cut vertex of $G$, $p_2(R)\in V(Q)$. Then since $V(Q)$ is a clique, we have either $p_2(R)=w$ or $wp_2(R)\in E(G)$, and the robber $R$ will be captured in the round $2$ of the game.
	 
	 So we assume that there is a component in $G\setminus\{u\}$ which is not a complete graph, say $Q_1$. Let $V_1=V(Q_1)$ and $V_2=V(G)\setminus (V_1\cup \{u\})$. Since $u$ is a cut vertex of $G$, we have $V_1$ is anticomplete to $V_2$. Let $a,b\in V_1$ such that $ab\notin E(G)$. Since for any two nonadjacent vertices $x,y\in V_2$, $\{a,b,x,y\}$ does not induce a $4K_1$, we have $V_2$ is a clique. Let $a'\in N(u)\cap V_1$ and $b'\in V_2$. By Observation~\ref{obs}, we may assume that $\{a',b'\}$ is not a dominating set. Since $V_2$ is a clique and $V_1$ is anticomplete to $V_2$, there is a vertex in $V_1\setminus N(a')$, say $c$. Now we start with $p_1(D_1)=a'$ and $p_1(D_2)=c$. If $p_1(R)\in V_1$, then since $\{a',b',c,p_1(R)\}$ does not induce a $4K_1$, we have either $a'p_1(R)\in E(G)$ or $cp_1(R)\in E(G)$; thus the robber $R$ will be captured in the round $1$. Since $ua'\in E(G)$, we have $p_1(R)\neq u$. So $p_1(R)\in V_2$. Then $p_2(D_1)=u$ and $p_2(D_2)=c$. So $p_2(R)\in V_2$. Next $p_3(D_2)=c$ and let $p_3(D_1)$ be any vertex in $N(u)\cap V_2$. This implies that the robber $R$ will be captured in the round $3$ of the game. So $c(G)\leq 2$.        
%	 Since $G$ is a $4K_1$-free graph, $G\setminus \{u\}$ has at most three components. First we assume that every component of $G\setminus \{u\}$ is a clique.   Now we will give a strategy to capture the robber using two cops in $G$. We start with $p_1(D_1)=p_1(D_2)=u$. The robber can be captured by the cops within next two rounds of the game. 
%	 
%	 Next, we assume there is a component with a vertex set, say $V_1$, that is not a clique. Besides the subgraph $G[V_1]$, $G\setminus \{u\}$ contains exactly one other component with vertex set, say $V_2$. This is because $G\setminus \{u\}$ is also $4K_1$-free. Let $N_G(u)\subseteq\{v_1,v_2\}$. Without loss of generality, assume that $v_1\in V_1$ and $v_2\in V_2$. If $V_1$ and $V_2$ are clique, then $\{v_1,v_2\}$ is a dominating set $G$. Thus by \Cref{obs}, $c(G)\le 2$. So assume that $V_2$ is not a clique. Let $z\in V_2\setminus\{v_2\}$. Since for any two nonadjacent vertices $x,y\in V_1\setminus\{v_1\}$, $\{x,y,u,z\}$ does not induces a $4K_1$, we have $V_1\setminus\{v_1\}$ is a clique. Let $w\in V_1\setminus\{v_1\}$. Let $D_1$ and $D_2$ be two cops and $R$ be the robber when the game is played for $G$. Now we will give a strategy to capture the robber using two cops in $G$. We start with $p_1(D_1)=v_1$ and for all $j\in \mathbb{N}$, $p_j(D_2)=w$. So $p_1(R)\in V_2$. So the cop $D_1$ can capture the robber within next three rounds of the game. Hence $c(G)\le 2$.
	\end{proof}
	\setcounter{claimth}{0}

		\begin{lemma}\label{4K1mindeg3}
			Let $G$ be a connected $4K_1$-free graph such that there is a vertex in $V(G)$ with degree $2$, say $u$. If $c(G\setminus\{u\})\leq 2$, then $c(G)\leq 2$.
		\end{lemma}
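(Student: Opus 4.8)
The plan is to lift a winning two‑cop strategy for $G\setminus\{u\}$ to $G$ by following an \emph{image} of the robber inside $G\setminus\{u\}$, and then to repair by hand the few configurations in which this image is not a faithful surrogate. If $u$ is a cut vertex then $c(G)\le 2$ already by Lemma~\ref{4K1cutvertex}, so I may assume $G\setminus\{u\}$ is connected; write $N(u)=\{a,b\}$. First suppose $ab\in E(G)$. Then $N_G[u]=\{u,a,b\}\subseteq N_G[a]$, i.e.\ $a$ dominates $u$. Fix a winning two‑cop strategy $\sigma$ on $G\setminus\{u\}$ and play it on $G$, pretending the robber is on $a$ whenever it is on $u$. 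The only image transitions to check are $a\leftrightarrow u$ and $b\leftrightarrow u$, and they are legal since $a$ is adjacent to $b$ and to $u$; so the image is a legitimate robber in $G\setminus\{u\}$ and $\sigma$ captures it. At that moment the real robber is either on the captured vertex (and already caught) or on $u$ with a cop on $a$, and in the latter case it is caught within one more round since $ab\in E(G)$. Hence $c(G)\le 2$; equivalently, deleting a dominated vertex does not change the cop number.

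Now suppose $ab\notin E(G)$, which is the substantive case, and here I use $4K_1$‑freeness. Set $W:=V(G)\setminus(N[a]\cup N[b])$. Any two nonadjacent vertices of $W$ together with $a$ and $b$ would induce a $4K_1$, so $W$ is a clique; and since $u\in N[a]$, the pair $\{a,b\}$ dominates $V(G)\setminus W$. If $W=\emptyset$ then $\{a,b\}$ is a dominating set and $c(G)\le\gamma(G)\le 2$ by Observation~\ref{obs}. If instead $a$ and $b$ have a common neighbour $c$ with $c\neq u$, I run a winning strategy $\sigma$ for $G\setminus\{u\}$ again, this time letting the image sit on $c$ whenever the robber sits on $u$, which is legal since $c$ is adjacent to both $a$ and $b$. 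The one situation this does not settle is when $\sigma$ captures the image precisely on $c$ while the robber is on $u$, as $c\not\sim u$; then one cop $D_1$ is on $c$ and the robber on $u$, and I switch to a cornering routine. The cop $D_1$ holds $c$, so the robber cannot leave $u$ without being captured on the next cop turn (from $c$ one step reaches $a$ and one step reaches $b$), while $D_2$ walks a shortest path from its position to $a$ inside $G\setminus\{u\}$; when $D_2$ arrives the robber, still pinned on $u$, is captured in the following round whatever it does. As this routine lasts only finitely many rounds, $c(G)\le 2$.

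There remains the case $ab\notin E(G)$, $W\neq\emptyset$, and $a,b$ having no common neighbour other than $u$; then $N_G(a)\cap N_G(b)=\{u\}$. Appending $u$ to a shortest path from $a$ to $b$ in $G\setminus\{u\}$ yields an induced cycle through $u$ (since $u$ has no neighbour among the internal vertices of that path), and because $G$ is $4K_1$‑free it contains no induced path on $7$ vertices, hence no induced cycle on $8$ or more vertices; so this cycle has at most $7$ vertices, forcing $a$ and $b$ to lie on a short induced cycle and the clique $W$ to be tightly attached to it. I would finish here with an explicit round‑by‑round argument in the spirit of Lemmas~\ref{4k1mindeg2} and~\ref{4K1cutvertex}: the two cops first collapse the short cycle containing $u$, and the clique $W$, being covered from any single one of its vertices, provides the robber with no safe haven.

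The hard part is exactly this last subcase. When $a$ and $b$ have no common neighbour in $G\setminus\{u\}$, no single vertex can serve as the image of the robber while it is on $u$ — the image would have to jump between the nonadjacent vertices $a$ and $b$ — so the clean reduction to $G\setminus\{u\}$ genuinely fails and an ad hoc strategy is unavoidable; controlling the short‑cycle configurations while simultaneously accounting for the clique $W$ is where the delicate bookkeeping lies. A milder form of the same difficulty is already present in the common‑neighbour case, where one must ensure that the cornering cop always stays in position to punish the robber for stepping off $u$.
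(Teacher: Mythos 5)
Your first three configurations are handled correctly: the corner argument when $ab\in E(G)$, the domination argument when $W=\emptyset$, and the image-plus-cornering routine when $a$ and $b$ have a common neighbour $c\neq u$ are all sound (and the cut-vertex reduction via Lemma~\ref{4K1cutvertex} matches the paper). But the last subcase --- $ab\notin E(G)$, $W\neq\emptyset$, $N_G(a)\cap N_G(b)=\{u\}$ --- is not proved; what you offer there ("collapse the short cycle containing $u$, and $W$ provides no safe haven") is an intention, not a strategy, and you say so yourself. It is a genuine gap, not a routine verification: the induced cycle through $u$ can have up to $7$ vertices, so it already requires both cops to control by itself, and you have not explained how the cops simultaneously guard that cycle and the vertices of $W$ together with their attachments to $N(a)\cup N(b)$, nor how the hypothesis $c(G\setminus\{u\})\le 2$ enters at all in this case (your argument there makes no use of it). Since this is exactly the configuration where your image trick breaks down, the lemma is not established.

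For comparison, the paper avoids the case split on common neighbours entirely: the two cops simply play the winning strategy of $G\setminus\{u\}$ against the robber's actual position until (if ever) the robber steps onto $u$. At that moment the robber's only exits are $v_1,v_2=N(u)$, and each cop walks from its current vertex $a_i$ to $v_i$ along an induced path $S_i\cup\{a_i,v_i\}$ chosen to avoid the other neighbour of $u$ (such a path exists because $v_{3-i}$ is not a cut vertex). The point you are missing is Claim~1 of the paper's proof: $4K_1$-freeness, with $u$ available as a fourth pairwise-nonadjacent vertex, forces $|S_i|\le 3$ and forces $S_i\setminus N(a_i)$ to be complete to $N(v_i)\setminus\{u,v_{3-i}\}$, so while the cops march these short paths the robber is pinned inside $\{u,v_1,v_2\}$ and is caught within four further rounds. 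That mechanism works whether or not $a$ and $b$ have a common neighbour, which is why the paper needs no analogue of your hard subcase; if you want to salvage your write-up, you should replace the final sketch by a pinning argument of this kind rather than an ad hoc analysis of short cycles and $W$.
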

		
		\begin{proof}
			By Lemma~\ref{4K1cutvertex}, we may assume that $G$ does not have a cut vertex. Let $G'=G\setminus \{u\}$.	Suppose $c(G')=1$. Then $c(G)\leq c(G')+|\{u\}|=1+1=2$. So we may assume that $c(G')=2$. So the robber can be captured using a set of  two cops in $G'$. Let $D_1$ and $D_2$ be two cops and $R_1$ be the robber when the game is played for $G'$. Let $D_3$ and $D_4$ be two cops and $R_2$ be the robber when the game is played for $G$. Now we will give a strategy to capture the robber using two cops in $G$. We start with $p_1^G(D_3)=p_1^{G'}(D_1)$ and $p_1^G(D_4)=p_1^{G'}(D_2)$. For $i\geq 2$, $p_i^G(D_3)=p_i^{G'}(D_1)$, and $p_i^G(D_4)=p_i^{G'}(D_2)$, if $p_{i-1}^G(R_2)\neq u$.
			
			First suppose that for all $j\in \mathbb{N}$, $p_j^G(R_2)\neq u$. Then by using the strategy to capture $R_1$ by $D_1$ and $D_2$ in $G'$, $R_2$ can be captured by $D_3$ and $D_4$ within a finite round of the game. So we assume that there is a $j\in \mathbb{N}$ such that $p_j^G(R_2)=u$. Let $p_j^G(D_3)=a_1$ and $p_j^G(D_4)=a_2$. Since $d_G(u)=2$, we have $|N_G(u)|=2$. Let $N_G(u)=\{v_1,v_2\}$. Note that $a_1,a_2\notin \{v_1,v_2\}$, otherwise the robber $R_2$ will be captured in the round $j$ of the game.  
			
			Now if every set $U\subseteq V(G)\setminus\{a_1,v_1\}$ such that $U\cup \{a_1,v_1\}$ induces a path and $v_2\in U$, then there does not exist a path between $a_1$ and $v_1$ in $G\setminus\{v_2\}$, then $v_2$ is a cut vertex, a contradiction. Thus there exists a set, say $S_1\subseteq V(G)\setminus\{a_1,v_1,v_2\}$ such that $S_1\cup \{a_1,v_1\}$ induces a path and we choose $|S_1|$ to be the minimal. Similarly we can choose $S_2\subseteq V(G)\setminus\{a_2,v_1,v_2\}$ such that $S_2\cup \{a_2,v_2\}$ induces a path and $|S_2|$ is minimal. Note that if $a_1v_1\in E(G)$, then $S_1=\emptyset$ and if $a_2v_2\in E(G)$, then $S_2=\emptyset$. 
			Next we claim the following:

	\begin{claim}\label{mindegpathprop}
		For $i,k\in \{1,2\}$ and $i\neq k$, $|S_i|\leq 3$ and $S_i\setminus N(a_i)$ is complete to $N(v_i)\setminus  \{u,v_k\}$. 
	\end{claim}
	\begin{proof}[Proof of Claim~\ref{mindegpathprop}]
	We will prove for $i=1$. Assume that $|S_1|\geq 4$. Let $q_1,q_2,\ldots , q_r\in S_1$, for $r\ge 4$ such that $a_1q_1, v_1q_r\in E(G)$. By the choice of $S_1$, we may assume that $a_1q_2,q_2q_r\notin E(G)$. Then $\{a_1,q_2,q_r,u\}$ induces a $4K_1$, a contradiction. So $|S_1|\leq 3$. This proves the first assertion of the claim.
			
	To show the second assertion, suppose to the contrary that there are nonadjacent vertices, say $x\in S_1\setminus N(a_1)$ and $y\in N(v_1)\setminus \{u,v_2\}$.  Since $S_1\cap N(a_1)\neq \emptyset$, we have $|S_1|\geq 2$. If $a_1y\in E(G)$, then since $\{a_1,y,v_1\}$ induces a path, we get a contradiction to the minimality of $|S_1|$. So we have $a_1y\notin E(G)$. Then $\{a_1,x,y,u\}$ induces a $4K_1$, a contradiction. So Claim~\ref{mindegpathprop} holds.
	\end{proof}
	For $i\in \{1,2\}$, we choose the following until for some $k$ satisfying $p_{k}(R_2)\in N[p_{k}(D_3)]\cup N[p_{k}(D_4)]$:
	\begin{enumerate}[label=(\roman*)]
		\item If $|S_i|=3$, then  $p_{j+1}(D_{3+i-1})\in S_i\cap N(a_i)$, $p_{j+2}(D_{3+i-1})\in S_i\setminus(N(a_i)\cup N(v_i))$, $p_{j+3}(D_{3+i-1})\in S_i\cap N(v_i)$, and for $\ell \geq j+4$, $p_{\ell}(D_{3+i-1})=v_i$.
		\item If $|S_i|=2$, then $p_{j+1}(D_{3+i-1})\in S_i\cap N(a_i)$, $p_{j+2}(D_{3+i-1})\in S_i\cap N(v_i)$, and for $\ell \geq j+3$, $p_{\ell}(D_{3+i-1})=v_i$.
		\item If $|S_i|=1$, then $p_{j+1}(D_{3+i-1})\in S_i\cap N(a_i)$, and for $\ell \geq j+2$, $p_{\ell}(D_{3+i-1})=v_i$.
		\item If $|S_i|=0$, then for $\ell \geq j+1$, $p_{\ell}(D_{3+i-1})=v_i$.
	\end{enumerate}
			
			Next we claim the following:
			
	\begin{claim}\label{robcap}
		If for some $k\geq j+1$, the robber $R_2$ is not captured in the round $k$, then $p_{k}(R_2)\in \{u,v_1,v_2\}$.
	\end{claim}
		\begin{proof}[Proof of Claim~\ref{robcap}]
		For the sake of contradiction, assume that $d=p_{k}(R)$ and $d\notin \{u,v_1,v_2\}$. Since $p_j(R_2)=u$ and $N(u)=\{v_1,v_2\}$, the robber $R_2$ is required at least two rounds to reach $d$ from $u$. So we have $k\geq j+2$. Without loss of generality, we may assume that $p_{j+1}(R_2)=v_1$. Since $v_1p_{j+1}(D_3)\notin E(G)$, we have $|S_1|\geq 2$. Since $p_{j+2}(D_3)\in S_1\setminus N(a_1)$, by Claim~\ref{mindegpathprop}, $\{p_{j+2}(D_3)\}$ is complete to $N(v_1)\setminus \{u,v_2\}$, we have $p_{j+2}(R_2)\in \{u,v_1,v_2\}$. First suppose, $p_{j+2}(R_2)=v_1$. Then $|S_1|=3$. Thus by Claim~\ref{mindegpathprop}, $p_{j+3}(R_2)\in \{u,v_1,v_2\}$, and since by our choice, either $v_1p_{j+3}(D_3),v_2p_{j+3}(D_4)\in E(G)$ or $p_{j+4}(D_3)=v_1$ and $p_{j+4}(D_4)=v_2$, we have the robber $R_2$ will be captured within the round $j+4$ of the game, a contradiction. So we assume that $p_{j+2}(R_2)\in \{u,v_2\}$. Then again similar to the previous argument, we have the robber $R_2$ will be captured before reaching the vertex $d$, a contradiction. This proves Claim~\ref{robcap}.
	\end{proof}		
			
			Now by our choice and by Claim~\ref{robcap}, the robber $R_2$ will be captured within the round $j+4$ of the game.
		\end{proof}

	\begin{proof}[\textbf{Proof of Theorem~\ref{lothcop}}] 
		By Theorem~\ref{4k13copeg}, we have $c($Forb$(4K_1))=3$. So by Lemma~\ref{4k1mindeg2} and Lemma~\ref{4K1mindeg3}, we have every connected $4K_1$-free graph $G$ satisfying $c(G)=3$ has minimum degree at least $3$. So $lt(G)\geq 3$.
	\end{proof}
	
	\section{Computational results}\label{compute}
	By Theorem~\ref{4k13copeg}, there is a connected $4K_1$-free graph with the cop number $3$ that has $16$ vertices. This naturally leads to the following question.
		\begin{quest}\label{quest}
		Can we have a $4K_1$-free graph with the number of vertices less than $16$ and the cop number $3?$
	\end{quest} 
	Now we give partial answer to Question~\ref{quest} computationally. In this regards, we also discuss  advantages of computationally finding $c({\cal G})$ and  graphs in ${\cal G}$ which satisfy $c({\cal G})$ using the upper threshold degree and the lower threshold degree, for a given class of graphs ${\cal G}$.  From a result of Turcotte \cite{Turcotte21}, every graph with the cop number $3$ and vertex set of size $11$ contains a $4K_1$. Thus we have that if a $4K_1$-free graph $G$ has the cop number $3$, then $G$ has at least 12 vertices. 
	
Next we will show computationally that a $4K_1$-free graph with $12$ vertices has the cop number $2$. For that we use nauty \cite{Mckay13} in ``SageMath" to generate  graphs with $12$ vertices. Let $\mathcal{F}:=\{G\in \mathcal{F} | |V(G)|=12, \Delta(G)\le 5, \text{and~} G \text{~is connected}\} $

	\begin{table}[h]
		\centering
		\begin{tabular}{|c|c|}
			\hline
			$\delta(G)$~$\forall~G\in \mathcal{F}$ & Number of graphs \\ \hline
			$\ge1 $           & 471142472    \\ \hline
			$\ge2 $           & 333204830    \\ \hline
			$\ge3 $           & 96689615     \\ \hline
		\end{tabular}
		\caption{Number of connected graphs having $12$ vertices and maximum degree at most $5$.}
		\label{Totgraph}
	\end{table}
	
	By Theorem~\ref{upthcop} and Theorem~\ref{lothcop}, $ut(\text{Forb}(4K_1))=6$ and $lt(\text{Forb}(4K_1))\ge 3$. Therefore if there exists a $4K_1$-free graph having $12$ vertices with the cop number $3$, then the maximum degree of that graph is at most $5$ and the minimum degree of that graph is at least $3$. From Table~\ref{Totgraph}, it is evident that if we use Theorem~\ref{lothcop}, then we need to search very less number of graphs to  find a $4K_1$-free graph with the cop number $3$. Thus we only look for $4K_1$-free graphs having $12$ vertices with maximum degree at most $5$ and minimum  degree at least $3$. There are   such $96689615$ graphs. Out of which $31155$ graphs are $4K_1$-free. 
	
	By Observation~\ref{obs}, it is enough to check the graph with the domination number $3$. Note that there are $24455$ number of $4K_1$-free connected graphs with $12$ vertices and the domination number  $3$. Now by Theorem~\ref{p5freecopnum2}, it is enough to check the graph having an induced $P_5$. Since there are only one $(P_5,4K_1)$-free connected graph which has $12$ vertices and the domination number $3$, we only need to check $24454$ number of graphs. Next we have split this case as the graph having an induced $C_7$, $C_7$-free having an induced $C_6$, and $(C_7,C_6)$-free. We have used the algorithm mentioned in \cite{Turcotte21} to obtain the cop number of a given graph. We have implemented all the codes in SageMath\footnote{For detail codes and results see \url{https://sites.google.com/site/dpradhan1/cops-and-robber}}. Now we have the following results for graphs having $12$ vertices with $\delta(G)\ge 3$ and $\Delta(G)\le 5$:
	\begin{enumerate}[label=(\roman*)]
		\item Number of $4K_1$-free graphs with induced $C_7$ and the domination number $3$ is $10067$.
		\item Number of $(4K_1,C_7)$-free graphs with induced $C_6$ and the domination number $3$ is $12021$.
		\item Number of $(4K_1,C_7,C_6)$-free graphs with induced $P_5$ and the domination number $3$ is $2366$.
		%\item Number of $(4K_1,C_7,C_6,P_5)$-free graphs and the domination number $3$ is $1$.
	\end{enumerate}
	Finally, we have obtained that there does not exist any $4K_1$ graph having $12$ vertices with the cop number $3$. 
	 
	\section{The cop number of $(pK_1+qK_2)$-free graphs}\label{pk1qk2res}
	
	In this section, we have studied $c($Forb$(pK_1+K_2))$ and extended this result for the class of $(pK_1+qK_2)$-free graphs. To prove the above result, we utilize Turcotte's result~\cite{Turcotte22}, which is stated below.
	\begin{theorem}[\cite{Turcotte22}] \label{2k1k2-free}
		If $G\in $ \textup{Forb}$(2K_1+ K_2)$, then $c(G)\le 3$.
	\end{theorem}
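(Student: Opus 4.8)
The plan is to prove the stronger statement that every connected $(2K_1+K_2)$-free graph $G$ satisfies $\gamma(G)\le 3$; the bound $c(G)\le 3$ then follows immediately from Observation~\ref{obs}, so no cop strategy needs to be exhibited at all — the entire content is a domination bound. The key structural observation, which is essentially the ``local'' meaning of $(2K_1+K_2)$-freeness, is: for every edge $uv$ of $G$, the set $W:=V(G)\setminus(N[u]\cup N[v])$ induces a clique. Indeed, if $w_1,w_2\in W$ were distinct and non-adjacent, then by definition of $W$ both $w_1$ and $w_2$ are non-adjacent to each of $u$ and $v$, and $uv\in E(G)$, so $\{w_1,w_2,u,v\}$ would induce a copy of $2K_1+K_2$ (the four vertices are distinct since $w_1,w_2\notin N[u]\cup N[v]$, and the only edge among them is $uv$), contradicting the hypothesis.

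Given this, I would finish as follows. If $G$ has no edge then, being connected, $G=K_1$ and $c(G)=1$. Otherwise fix an edge $uv$ and let $W$ be as above. If $W=\emptyset$, then $\{u,v\}$ is a dominating set and $\gamma(G)\le 2$. If $W\neq\emptyset$, choose any $w\in W$; since $W$ is a clique, every vertex of $W$ lies in $N[w]$, hence $N[u]\cup N[v]\cup N[w]=V(G)$ and $\{u,v,w\}$ is a dominating set, so $\gamma(G)\le 3$. In all cases $\gamma(G)\le 3$, and Observation~\ref{obs} yields $c(G)\le 3$.

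I do not expect a genuine obstacle here: once one notices that the non-neighbourhood of an edge is a clique, the proof is a two-line domination argument, and the only points requiring care are the degenerate cases ($G$ edgeless, or $W$ empty) and checking that the four chosen vertices are genuinely distinct and induce exactly $2K_1+K_2$. It is worth noting that the same idea — dominate using one edge together with a bounded clique transversal of the remaining vertices — is presumably the engine behind the paper's later bound $c(\mathrm{Forb}(pK_1+K_2))\le p+1$, with the edge being replaced by a suitable $(p-1)$-element set whose common non-neighbourhood is a clique by $(pK_1+K_2)$-freeness.
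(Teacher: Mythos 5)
Your argument is correct: for any edge $uv$ of a $(2K_1+K_2)$-free graph, any two vertices outside $N[u]\cup N[v]$ must be adjacent (otherwise they form an induced $2K_1+K_2$ together with $u$ and $v$), so that set is a clique, and $\{u,v,w\}$ dominates $G$ for any $w$ in it; with the degenerate cases handled as you do, this gives $\gamma(G)\le 3$ and hence $c(G)\le 3$ by Observation~\ref{obs}. Note, however, that the paper does not prove this statement at all --- it is quoted from Turcotte~\cite{Turcotte22} and then used as the base case of the induction in Theorem~\ref{mk1k2} --- so there is no in-paper proof to compare against; your domination argument actually makes the use of this theorem self-contained. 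Your route is very close in spirit to the paper's own proof of Theorem~\ref{mk1k2}: there the authors take an independent set $A$ of size $p-1$ and observe that the vertices with no neighbour in $A$ induce a $(K_1+K_2)$-free, i.e.\ complete multipartite, graph, which is dominated by two vertices from distinct parts, giving $\gamma(G)\le p+1$. Your closing guess about the engine behind that bound is slightly off: the paper's residual graph is a complete multipartite graph needing two dominators (which is exactly where the ``$+1$'' comes from), not a clique; your variant (an induced $(p-2)K_1+K_2$ whose common non-neighbourhood is a clique, plus induction when no such induced subgraph exists) would also work, but it is not literally what the paper does. Finally, keep in mind that Turcotte's full result is the equality $c(\textup{Forb}(2K_1+K_2))=3$; your proof gives only the upper bound, but that is all the quoted statement asserts.
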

	\begin{theorem}[\cite{Turcotte22}] \label{lk2-free}
		If $G\in $ \textup{Forb}$(q K_2)$, then $c(G)\le 2q-2$.
	\end{theorem}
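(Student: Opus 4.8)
The plan is to induct on $q$, spending two extra cops at each level of the recursion to ``peel off'' one $K_2$ from the forbidden configuration. Since all graphs considered are connected, take $G$ connected, and assume it has an edge (otherwise $G=K_1$ and $c(G)=1\le 2q-2$ for $q\ge 2$). For the base case $q=2$, a $2K_2$-free graph is $P_5$-free, so $c(G)\le 2$ by Theorem~\ref{p5freecopnum2}.

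For the inductive step, assume $q\ge 3$ and that the bound holds for $q-1$. Fix an edge $u_1v_1$ of $G$, set $S=N_G[u_1]\cup N_G[v_1]$, and let $G'=G\setminus S$. First I would record the structural fact that every connected component $A$ of $G'$ is $(q-1)K_2$-free: an induced matching $\{e_2,\dots,e_q\}$ of size $q-1$ lying inside $A$ would, together with the edge $u_1v_1$, form an induced matching of size $q$ in $G$, since every endpoint of each $e_i$ lies outside $N_G[u_1]\cup N_G[v_1]$ and is therefore nonadjacent to both $u_1$ and $v_1$. Because $q-1\ge 2$, the inductive hypothesis gives $c(A)\le 2(q-1)-2=2q-4$ for every such component $A$.

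The winning strategy for $2q-2$ cops runs as follows. Station two cops $D_1$ and $D_2$ on $u_1$ and $v_1$ and never move them; these two cops confine the robber. Indeed, any vertex of $S$ that the robber ever occupies lies in $N[p(D_1)]\cup N[p(D_2)]$, so the robber is captured on the cops' next turn; hence the robber must stay inside $V(G')$, and since $S$ separates the components of $G'$ in $G$, the robber can never leave the component $A$ of $G'$ it first enters (in particular, if the robber's initial vertex lies in $S$, which is forced when $G'=\emptyset$, it is caught immediately). Meanwhile the remaining $2q-4$ cops walk, along the connected graph $G$, into the component $A$; this takes only finitely many rounds, during all of which the robber remains trapped in $A$. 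Once these cops are all inside $A$, they execute the strategy furnished by the inductive hypothesis for the standalone game on $A$, moving only within $A$ against a robber that is itself confined to $A$; this captures the robber. The cop count is $2+(2q-4)=2q-2$, which closes the induction.

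I expect the main point requiring care to be the confinement claim: one must verify that two motionless cops sitting on the endpoints of an edge genuinely prevent the robber from entering $N_G[u_1]\cup N_G[v_1]$ (this is precisely where the capture convention --- the robber is lost the instant it steps into a cop's closed neighborhood --- is used), and that the robber cannot profit from the rounds during which the other $2q-4$ cops are being transported into $A$. The rest is bookkeeping.
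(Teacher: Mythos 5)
Your argument is correct. Note that the paper does not prove this statement at all: it is quoted from Turcotte~\cite{Turcotte22} as a known result and used as a black box in the proof of Theorem~\ref{mk1lk2}, so there is no in-paper proof to compare against. Your induction is the standard argument for bounds of this type: two cops parked on the endpoints of an edge $u_1v_1$ guard $S=N[u_1]\cup N[v_1]$ (by the capture convention, the robber can never occupy a vertex of $S$ without being caught on the cops' next turn), the robber is therefore trapped in one component $A$ of $G\setminus S$, each such component is $(q-1)K_2$-free because an induced matching of size $q-1$ in $A$ together with $u_1v_1$ would be an induced $qK_2$, and the remaining $2q-4$ cops walk into $A$ and run the winning strategy guaranteed by induction. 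The only point you gloss over is the simulation step: the inductive strategy on $A$ assumes the cops choose their initial positions before the robber, whereas here the robber is already sitting in $A$; this is harmless, since the cops can first move to the prescribed initial configuration inside $A$ and then treat the robber's current vertex as its initial placement, the strategy being winning against every such placement. Your base case ($2K_2$-free $\Rightarrow$ $P_5$-free, then Theorem~\ref{p5freecopnum2}) is also fine and is the same observation the paper itself makes when noting $c(\textup{Forb}(2K_2))=2$; Turcotte's original treatment instead establishes the $2K_2$-free case directly, which is the only substantive difference.
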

	\begin{theorem}\label{mk1k2}
		If $G\in$ \textup{Forb}$(pK_1+ K_2)$ for $p\ge 2$, then $c(G)\le p+1$. 
	\end{theorem}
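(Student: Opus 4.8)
The plan is to induct on $p$. The base case $p=2$ is precisely Theorem~\ref{2k1k2-free}, which gives $c(G)\le 3=p+1$. For the inductive step, assume the statement holds for $p-1\ge 2$ and let $G$ be a $pK_1+K_2$-free graph. The idea is to sacrifice one cop as a permanent \emph{guard}: fix an arbitrary vertex $u\in V(G)$ and keep a cop $D_0$ at $u$ for the whole game. Once $D_0$ is parked, the robber can never occupy a vertex of $N[u]$ (it would be captured on the next move), so its position stays within a single connected component $Q$ of $G\setminus N[u]$ for the entire game: it cannot enter $N[u]$, and it cannot switch components since any walk joining two components of $G\setminus N[u]$ meets $N(u)$. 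If $N[u]=V(G)$ then $D_0$ alone captures the robber, so from now on assume $G\setminus N[u]\neq\emptyset$.

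The crucial structural point is that \emph{every component $Q$ of $G\setminus N[u]$ is $(p-1)K_1+K_2$-free.} Indeed $u\notin V(Q)$ and $u$ has no neighbour in $V(Q)$; so if $G[W]$ were an induced $(p-1)K_1+K_2$ with $W\subseteq V(Q)$, then $G[\{u\}\cup W]$ would be the disjoint union of the isolated vertex $u$ with $G[W]$, i.e.\ an induced $pK_1+K_2$ of $G$, contradicting the hypothesis on $G$. Since $Q$ is connected, the inductive hypothesis yields $c(Q)\le (p-1)+1=p$.

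It remains to assemble the $(p+1)$-cop strategy on $G$. Place $D_0$ at $u$ and the remaining cops $D_1,\dots,D_p$ at arbitrary vertices. Once the robber has revealed the component $Q$ containing it, move $D_1,\dots,D_p$ along paths of the connected graph $G$ to the starting configuration of a winning strategy for $p$ cops on the connected graph $Q$ (such a strategy exists since $c(Q)\le p$); here I use the standard fact that $c(Q)$ cops can win on $Q$ from \emph{any} initial positions, taking the robber's location at the moment this configuration is reached as its opening move. During the repositioning phase and the subsequent play of the $Q$-strategy the robber stays trapped inside $Q$ by the guard $D_0$, and is therefore eventually captured by $D_1,\dots,D_p$.

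The substantive content is the one-line extension argument showing each $Q$ is $(p-1)K_1+K_2$-free; the rest is bookkeeping, and I would be most careful about two formal points: the confinement claim — that with $D_0$ at $u$ the robber can neither visit $N[u]$ nor change components of $G\setminus N[u]$ — and the legitimacy of executing a winning strategy on the induced subgraph $Q$ from arbitrary cop positions. I do not expect a real obstacle here, since the induction does the work and the $pK_1+K_2$-freeness of $G$ enters only through that single extension step.
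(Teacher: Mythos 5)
Your proof is correct, but the inductive step is genuinely different from the paper's. The paper also inducts on $p$ with base case Theorem~\ref{2k1k2-free}, but its step is a domination argument rather than a guarding argument: if $G$ is $(p-1)K_1$-free it applies the induction hypothesis directly, and otherwise it takes an independent set $A$ of size $p-1$, observes that the set $A'$ of vertices with no neighbour in $A$ induces a $(K_1+K_2)$-free, hence complete multipartite, graph, picks two vertices $u,v$ of $A'$ from distinct parts, and concludes that $A\cup\{u,v\}$ is a dominating set, so $c(G)\le\gamma(G)\le p+1$ by Observation~\ref{obs}. Your route instead parks a guard cop at an arbitrary vertex $u$, confines the robber to a component $Q$ of $G\setminus N[u]$, notes that $Q$ is $((p-1)K_1+K_2)$-free because $u$ is anticomplete to $Q$, and recursively deploys $p$ cops inside $Q$; the confinement argument and the ``winning from arbitrary initial positions'' fact you invoke are both standard and are used correctly (in particular, under the paper's capture convention a robber that ever ends its move in $N[u]$ is taken on the guard's next move). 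What each approach buys: the paper's argument is shorter and in the non-degenerate case actually produces an explicit dominating set of size at most $p+1$, which is a (slightly) stronger structural conclusion than a cop strategy; your argument is purely game-theoretic, needs no structural fact about $(K_1+K_2)$-free graphs, and the guard-and-recurse template is exactly the mechanism the paper itself uses later for Theorem~\ref{mk1lk2} (guarding an induced $pK_1$ and playing on the $qK_2$-free remainder), so it also suggests the generalization more directly. One small presentational caveat: since the paper's cop number is defined for connected graphs, you should say explicitly (as you implicitly do) that the induction hypothesis is applied to the component $Q$, which is connected and an induced subgraph, so the recursive strategy's moves and captures in $Q$ are legal moves and captures in $G$.
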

	\begin{proof}
		To prove the theorem, we use the induction on $p$. The base case of the induction hypothesis that is $p=2$, follows from Theorem~\ref{2k1k2-free}. Now we assume that $p> 2$ and the result is true for $p-1$.
		
		Let $G$ be a $pK_1+ K_2$-free graph. If $G$ is a $(p-1)K_1$-free graph, then $G$ is a $(p-1)K_1+ K_2$-free graph; thus by the induction hypothesis, $c(G)\le p< p+1$. So we assume that $G$ contains an induced $(p-1)K_1$. Let $A\subseteq V(G)$ be an independent set such that $|A|=p-1$ and such a set exists since $G$ contains an induced $(p-1)K_1$. Also, let $A':= \{v\in V(G)\setminus A| N(v)\cap A=\emptyset\}$. Since $G$ is a connected $pK_1+ K_2$-free graph, $G[A']$ is a $K_1+K_2$-free graph. Note that every $K_1+K_2$-free graphs are complete multipartite graphs. Let $u,v$ be two vertices from two distinct partite sets of $G[A']$. Then $A\cup \{u,v\}$ is a dominating set of $G$. Hence $\gamma(G)\le |A|+2 \le p+1$. Now by using Observation~\ref{obs}, we have  $c(G)\le \gamma(G)\le p+1$. This completes the proof of Theorem~\ref{mk1k2}.	
	\end{proof}
	\begin{theorem}\label{mk1lk2}
		If $G\in$ \textup{Forb}$(pK_1+ qK_2)$ for $p\ge 2$ and $q\ge 2$, then $c(G)\le p+2q-2$. 
	\end{theorem}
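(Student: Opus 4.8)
The plan is to mimic the proof of Theorem~\ref{mk1k2}, combining the $pK_1$-reduction trick with the known bound for $qK_2$-free graphs (Theorem~\ref{lk2-free}), and to run an induction on $p$. For the base case $p=2$, one wants to show $c(G)\le 2q$ for every $(2K_1+qK_2)$-free graph $G$; I would handle this separately (see below). For the inductive step, assume $p>2$ and that the statement holds for $p-1$. Let $G$ be a connected $(pK_1+qK_2)$-free graph. If $G$ is $(p-1)K_1$-free then it is certainly $((p-1)K_1+qK_2)$-free, so the induction hypothesis gives $c(G)\le (p-1)+2q-2 < p+2q-2$ and we are done. Otherwise fix an independent set $A$ with $|A|=p-1$, and set $A':=\{v\in V(G)\setminus A\mid N(v)\cap A=\emptyset\}$, the vertices anticomplete to $A$. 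Since $G$ is $(pK_1+qK_2)$-free and $A$ is an independent set of size $p-1$ anticomplete to $A'$, the induced subgraph $G[A']$ cannot contain an induced $K_1+qK_2$; in particular it cannot contain an induced $qK_2$ together with an isolated vertex, and after a small argument it is in fact $qK_2$-free (any induced $qK_2$ in $G[A']$ would have to be dominating within $G[A']$, forcing structure that still produces a $pK_1+qK_2$). Then by Theorem~\ref{lk2-free}, $c(G[A'])\le 2q-2$.

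Next I would lift a winning strategy on $G[A']$ to one on $G$ using $|A|=p-1$ extra cops stationed on $A$, as follows. Place $2q-2$ cops to play the $qK_2$-free strategy restricted to $G[A']$, and place one cop on each of the $p-1$ vertices of $A$; these cops never move. The robber, to avoid immediate capture, must stay outside $N[A]$, i.e. in $A'$, for the whole game (any vertex in $N[A]$ is dominated by a stationary cop). But then the robber is effectively playing the cops-and-robber game on the induced subgraph $G[A']$ against $2q-2$ cops who use the optimal strategy there — and here one must be slightly careful, since cop and robber moves in $G$ along edges with endpoints outside $A'$ are not available to the robber anyway, and the shadow strategy on $G[A']$ is legitimate because any robber move staying in $A'$ is a move along an edge of $G[A']$. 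Hence the robber is caught, giving $c(G)\le (p-1)+(2q-2)=p+2q-2$.

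For the base case $p=2$: a $(2K_1+qK_2)$-free connected graph $G$ that is not $K_1$-free is trivial, so assume $G$ contains a single vertex $a$; let $A'=V(G)\setminus N[a]$ be the set anticomplete to $a$, so $G[A']$ is $(K_1+qK_2)$-free, hence (again after the short argument) $qK_2$-free, so $c(G[A'])\le 2q-2$ by Theorem~\ref{lk2-free}. One stationary cop on $a$ plus $2q-2$ cops playing the $G[A']$-strategy confines and captures the robber exactly as above, giving $c(G)\le 1+(2q-2)=2q=p+2q-2$ for $p=2$. (Alternatively, note that $2K_1+qK_2$-free implies $(q+1)K_2$-free, which by Theorem~\ref{lk2-free} already gives $c(G)\le 2q$, making the base case immediate; I would probably use this shortcut.)

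The main obstacle I expect is the step showing $G[A']$ is $qK_2$-free rather than merely $(K_1+qK_2)$-free, and — more delicately — justifying that the robber confined to $A'$ really is playing the honest game on $G[A']$, so that the optimal $qK_2$-free strategy there applies verbatim. The subtlety is that a cop executing the $G[A']$-strategy may, in principle, want to route through vertices of $N[A]\cup\{a\}$; one must argue this is never necessary because $G[A']$-strategies only ever move cops within $A'$ by definition, and the robber's legal moves, while she avoids capture, are exactly the edges of $G[A']$. Once this "trapping on an induced subgraph" lemma is cleanly stated (it is essentially the observation that if the robber must stay inside an induced subgraph $H$ forever, then $\min$-capture needs only $c(H)$ cops there plus whatever cops are used to enforce the confinement), the rest is routine. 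It may be cleanest to isolate this as a small auxiliary lemma before the theorem, but the inline argument above suffices.
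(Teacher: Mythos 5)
Your confinement argument (stationary cops on the independent set, shadow strategy on the anticomplete part) is fine and is exactly what the paper does, but there is a genuine gap at the step you yourself flag as the main obstacle: the claim that $G[A']$ is $qK_2$-free. From an independent set $A$ of size $p-1$ you only get that $G[A']$ is $(K_1+qK_2)$-free, and this cannot be upgraded. Your sketched ``small argument'' says an induced $qK_2$ in $G[A']$ must dominate $G[A']$ (true) and that this forces a $pK_1+qK_2$ (false): to build $pK_1+qK_2$ you would need one more vertex of $A'$ anticomplete to the $qK_2$, and precisely because the $qK_2$ dominates $A'$ no such vertex exists. Concretely, take $p=3,q=2$: let $A=\{a_1,a_2\}$ be nonadjacent, let $A'$ induce a $2K_2$ anticomplete to $A$, and add a vertex adjacent to everything. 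This graph is $(3K_1+2K_2)$-free, yet $G[A']$ contains $2K_2$, so Theorem~\ref{lk2-free} does not apply to it and your count $(p-1)+(2q-2)$ breaks down; the obvious fallback (the $qK_2$ together with $A$ is a dominating set) only yields $p+2q-1$. Your base-case shortcut is also incorrect: a $(2K_1+qK_2)$-free graph need not be $(q+1)K_2$-free (add a dominating vertex to $(q+1)K_2$), since neither of $2K_1+qK_2$ and $(q+1)K_2$ is an induced subgraph of the other.

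The paper sidesteps all of this with no induction: it splits on whether $G$ contains an induced $pK_1$. If it does, station $p$ cops on that full independent set $L$; then any induced $qK_2$ in the anticomplete part $L'$ would combine with $L$ itself to give $pK_1+qK_2$, so $G[L']$ genuinely is $qK_2$-free and Theorem~\ref{lk2-free} supplies the remaining $2q-2$ cops. If $G$ is $pK_1$-free, a maximal independent set is a dominating set of size at most $p-1$, so $c(G)\le\gamma(G)\le p-1\le p+2q-2$ by Observation~\ref{obs}. If you replace your size-$(p-1)$ independent set by a size-$p$ one in the first case and use the domination bound in the second, your write-up becomes correct and essentially coincides with the paper's proof.
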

	\begin{proof}
		Suppose that $G$ contains a $pK_1$ induced by the vertex set $L=\{v_1,v_2,\ldots,v_p\}$. Also, let $L':= \{v\in V(G)\setminus L| N(v)\cap L=\emptyset\}$. Then $G[L']$ is $qK_2$-free. Throughout the game we will put $p$ cops, say $C_1,C_2,\ldots,C_p\in L$, such that $v_j$ is the position of cop $C_j$, for all $j\in \{1,2,\ldots,p\}$. So the robber will be move in $G[L']$. By Theorem~\ref{lk2-free}, it is enough to require $2q-2$ cops in $G[L']$ to capture the robber in a finite round of the game. Thus $c(G)\leq p+2q-2$. So we assume that $G$ is $pK_1$-free. Then $\gamma(G)\leq p-1$. Hence by Observation~\ref{obs}, $c(G)\leq p-1$. Since $q\geq 1$, we have $c(G)\leq p+2q-2$. 
		This completes the proof of Theorem~\ref{mk1lk2}.	
	\end{proof}
	
	\section{Future direction}\label{future}
	We have proved that for Forb$(4K_1)$, $c($Forb$(4K_1))=3$. So Problem~\ref{pk1prob} is still open for $p\ge 5$. We also showed a $4K_1$-free graph having the cop number $3$ has at least $13$ vertices and proved the existence of such graph with $16$ vertices.  This leads us to pose the following question.
	\begin{quest}
		Does there exist a $4K_1$-free graph with $n$-vertices having the cop number $3$ where $n\in\{13,14,15\}?$
	\end{quest}
	We also use the upper threshold degree and the lower threshold degree to prove our results. We believe that the upper threshold degree and the lower threshold degree will help us to find the cop number for various other class of graphs. Indeed, we propose the following question.
	\begin{quest}
		For a given class of graphs ${\cal G}$,  find the $ut({\cal G})$ and $lt({\cal G})$.
	\end{quest}

	\section*{Declarations}
	
	\noindent{\bf Conflict of interest} The authors do not have any financial or non financial interests that are directly or indirectly related to the work submitted for publication.
	
%	\noindent{\bf Data availability}
%	No data was used for the research described in this paper.
	
	\noindent{\bf Acknowledgments} The first author acknowledges the Indian Institute of Technology (ISM), Dhanbad, India, for hosting a portion of this research. Also, the research of the first author is supported in part Japan Science and Technology Agency (JST) as part of Adopting Sustainable Partnerships for Innovative Research Ecosystem (ASPIRE), Grant Number JPMJAP2302. The research of the third author is suppported by  Anusandhan National Research Foundation (ANRF) under Core Research Grant Number CRG/2023/003749.

	\medskip

\end{document}